\documentclass[11pt, twocolumn, floatfix, superscriptaddress, aps, prb, raggedbottom]{revtex4-2}
\usepackage[a4paper, margin=2cm]{geometry}
\usepackage{amscd,amsmath,amssymb,amsthm,mathtools, braket}
\usepackage{enumitem,graphicx,hyperref,xcolor}
\usepackage{braket,cleveref,comment,csquotes,dsfont,ifthen,setspace,wrapfig}
\usepackage[normalem]{ulem}

\newtheorem{theorem}{Theorem}[section]
\newtheorem{lemma}[theorem]{Lemma}
\newtheorem{corollary}[theorem]{Corollary}
\newtheorem{definition}{Definition}[section]
\newtheorem{proposition}{Proposition}[section]

\newcommand{\comm}{\mathrm{Comm}}

\newcommand{\SWAP}{\mathrm{SWAP}}
\newcommand{\FSWAP}{\mathrm{FSWAP}}
\newcommand{\Id}{\mathds{1}}
\newcommand{\cstar}{$\mathrm{C}^*$-algebra }

\DeclareMathOperator{\tr}{\mathbf{Tr}}
\DeclareMathOperator{\Ad}{Ad}

\setlist[itemize]{leftmargin=3.5mm,itemsep=0.0mm}

\makeatletter
\newcommand{\whencolumns}[2]{\twocolumn@sw{#1}{#2}}
\makeatother
\begin{document}
\singlespacing
\begin{abstract}

Finding analytically tractable model systems for quantum ergodicity breaking in interacting systems remains a theoretical challenge. This paper presents an algebraic theory of robust ergodicity breaking in time-periodic unitary circuits under local causal constraints. Our model is based on constructing tripartite unitaries (we dub \enquote*{walls}) that permanently arrest local operator spreading. We show that the structure of the resulting causally independent subsystems can be understood rigorously through the invariance of embedded operator algebras (i.e.~super-operator symmetries). This formalism gives a natural identification of local conserved quantities and permits the generalisation to time-dependent dynamics. Using representation theory, the general form of unitaries exhibiting causal decoupling is derived from the automorphism group of the embedded algebra with links to quantum error-correcting codes. %Our construction gives conditions for no-signalling in tripartite unitaries which may be of independent interest. 
From the point of view of operator spreading, our theory is a minimal model for non-ergodic quantum-circuit dynamics, and we explore its effects on probes of many-body quantum chaos. We prove an entanglement area law due to causal constraints and discuss its stability against local measurements. In a random unitary ensemble with causally independent subsystems, we compare spectral correlations with those of the universal random matrix ensemble using the spectral form factor. Our results offer a rigorous understanding of locally constrained quantum dynamics from a quantum information perspective. %\lluis{[...understanding of local quantum dynamics?]} from a quantum information perspective.
\end{abstract}

%\title{Minimal model for confined operator dynamics in Floquet quantum circuits}
%\title{Minimal model for causal independence in time-periodic operator dynamics}
%\title{Causal independence in quantum operator dynamics}
\title{Causally constrained quantum operator dynamics}

\author{Marcell D. Kovács}
\affiliation{Department of Physics and Astronomy, University College London, United Kingdom}
\author{Christopher J. Turner}
\affiliation{Department of Physics and Astronomy, University College London, United Kingdom}
\author{Lluís Masanes}
\affiliation{Department of Computer Science, University College London, United Kingdom}
\affiliation{London Centre for Nanotechnology, University College London, United Kingdom}
\maketitle

\section{Introduction}

%%% 1. Introduce the general area of quantum chaos. The different perspectives and applications.

The study of quantum chaos with a quantum information lens has become a central theme in quantum many-body physics. In these systems, the dynamics of local operators explores all the available space, a phenomenon known as operator spreading and scrambling~\cite{LiebRobinson1972, Chen_2023, Roberts2017}. Chaotic dynamics also produces rapid growth of entanglement, correlations in the energy spectrum similar to those of random matrices~\cite{Wigner1967,Brezin1997} and eigenstate thermalisation~\cite{DAlessio2016review}.
Many of these phenomena are most easily understood in the context of random local quantum circuits, which often form an analytically tractable alternative to local Hamiltonian dynamics~\cite{Nahum2017, Nahum2018, Bertini2018, CurtvK2018}. 
To obtain a complete picture of quantum chaos, it is important to understand when some of the above phenomena are suppressed. In particular,
violations of ergodicity are of great interest due to the late-time recoverability of locally encoded quantum information. Systems where ergodicity is suppressed include the so-called integrable systems with complete sets of integrals of motion, and Anderson and many-body localisation due to disorder~\cite{Rigol2007, Anderson1958, Basko2006, Abanin2019}.
More recent examples include quantum scars~\cite{Serbyn2021review, Turner2018, Moudgalya2022review} and fragmentation~\cite{Moudgalya2022, Pai2019, Sala2020, Khemani2020, Moudgalya2022review}, which are systems where eigenstate thermalisation fails -- while other features of quantum chaos are retained.

%%% 2. Focus in on our research topic, maybe research questions and what our contribution is.

\begin{figure}
    \centering
\includegraphics[width=0.5\textwidth]{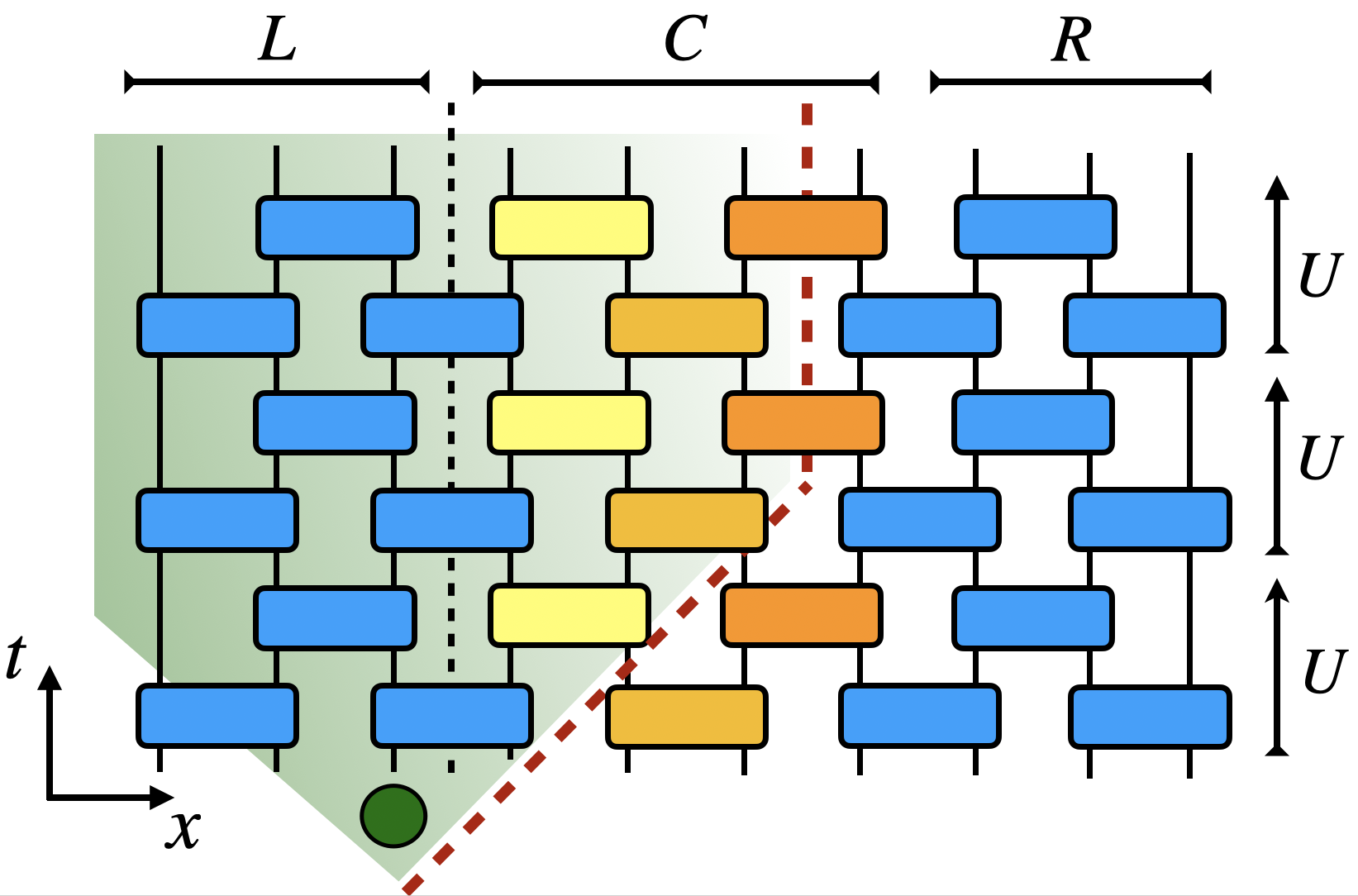}
    \caption{ \small We study the time-periodic evolution of local traceless operators in brickwork unitaries. A wall unitary tri-partitions the circuit and permanently obstructs the spreading of local operators. This leads to a bounded light cone (shown in red) and causal decoupling across spatial regions. This paper studies the general algebraic properties of unitaries exhibiting bounded light cones and their effect on many-body ergodicity-breaking.}
    \label{fig:bounded_light_cone}
\end{figure}

%%% 2a. Chaos first

In random quantum circuits composed of independently sampled gates, the rate at which the circuit approaches the globally random ensemble -- without locality structure -- can be quantified through the use of $k$-designs~\cite{Roberts2017, Brandao2016, Harrow2009, Haferkamp2022}.
Alternatively, in time-periodic random circuits, while the convergence to random matrices is inhibited, the dynamics can retain a high-degree of chaos \cite{Roberts2017,Pilatowsky-Cameo2024_CHSE_CUE, Mark_2024_prx} under mild spectral no-resonance conditions.
For global chaos to emerge from locally interacting structure, universal gatesets must be used, which raises the question whether the ergodic hypothesis can be violated with generic gatesets and to what extent operator scrambling can be locally retained.
%what are atypical quantum circuits built from universal gateset that violate the ergodic hypothesis and to what extent can operator scrambling be retained locally.
% In this paper, we set out to understand how operator spreading can be halted in time-periodic quantum circuits (that we dub \enquote*{bounded light cones}, as on \Cref{fig:bounded_light_cone}) as a model to understand globally non-ergodic while locally chaotic dynamics.
% This connects several perspectives across many-body physics and quantum information theory including localisation, fragmentation and causality. 
%\lluis{For global chaos to emerge in local circuits, universal gate sets must be be used, which raises the question of why some (atypical) quantum circuits built from universal gate sets violate the ergodic hypothesis and to what extent can operator scrambling be retained locally.}
In this paper, we set out to understand how operator spreading can be halted in time-periodic quantum circuits -- something we call \enquote*{bounded light cones} as depicted in \Cref{fig:bounded_light_cone}.
This is an interesting case of global non-ergodicity with locally chaotic dynamics, and connects several perspectives across many-body physics and quantum information theory including localisation, fragmentation and causality.
%}

%%% 2b. Localisation
The phenomenon of localisation, where initially encoded quantum information is retained at late times, was historically established in strongly disordered spin-chains \cite{Anderson1958, Basko2006, Abanin2019Colloquium:Entanglement}.
However, the difficulty of understanding non-integrable Hamiltonian systems lead to the search for circuit analogues to many-body localisation, for which finely tuned examples were found~\cite{Chandran2015, Farshi2022_1D, Farshi2022_2D, Bertini2022_strongly_localised_circuits, Kovacs2026}, and stability to perturbations examined~\cite{Kovacs2026}.
In light of these finely tuned examples, what is the most general class of quantum circuit which exhibits localisation?
Given that many-body localisation is supposed to be a robust phase of matter, to what extent can circuit localisation be robust to local perturbations~\cite{Kovacs2026}? 

%%% 2c. Fragmentation
In the presence of an unconventional symmetry, a system might split into exponentially many invariant sectors and be said to display fragmentation~\cite {Moudgalya2022review}.
This lets the information of which sector a system is prepared in to be preserved, and is notably robust against general perturbation~\cite{Stephen2024, Kovacs2026}.
Frequently, the underlying symmetry is a so-called \emph{higher-order} symmetry, such as dipole moment conservation -- which can be realised physically through a strong separation of energy scales, as in the thin-torus limit of the fractional quantum Hall effect~\cite {Zerba2025}, or through the Wannier-Stark effect~\cite{Schulz2019, vanNieuwenburg2019}.
The underlying symmetries can be uncovered with operator-theoretic techniques, which were first developed in time-independent Hamiltonian systems~\cite{Moudgalya2022}, then generalised to Clifford circuits~\cite{Kovacs2026} and open or dissipative systems~\cite{YahuiLi2023_OpenHSF, paszko2025operatorspacefragmentationintegrabilitypaulilindblad, Paszko2024_OpenSPT}.
We will show how local fragmentation can be analysed in periodic quantum circuits and find the unconventional superoperator symmetries which cause localisation.
We also give some indications on how this can be generalised to fully time-dependent or aperiodic systems.

%%% 2d. Causality
Operator algebras in quantum systems also formalises how influence between subsystems is constrained by causality in unitary evolution~\cite{Allen2017}.
Within the algebraic framework for quantum field theory \cite{Fewster2019, Haag1996, BuchholzFredenhagen2023, BrunettiEtAl2015}, causal independence is understood through the commutation structure of the operator algebra of observables.
On the quantum foundations side, characterising the compositional structure of unitaries that exhibit \textit{causal independence} between subsystems has been studied to construct faithful diagrammatic representations beyond the quantum circuit model using operator algebraic methods \cite{Lorenz2021, Ormrod2023}.
In these works, the connectivity of subsystems arise from the presumed causal constraints that can be saliently understood using the splitting of operator algebras.
%\chris{This sentence seems decoupled from the rest of the paragraph:}
%Casually decoupled subsystems also have implications for quantum field theory because they serve as models for space-like separated events outside of a relativistic light-cone \cite{DeFacio1975, BuchholzWichmann1986, Summers1990}.

The novelty of our approach is to use operator space techniques from quantum foundations to understand the effect of causal separation on probes of many-body dynamics. 
In particular, we prove that causal decoupling in time-periodic circuits is equivalent to the existence of super-operator symmetries. 
This enables us to characterise the unitaries that exhibit bounded light cones for a given symmetry and prove that the localisation leads to an area-law phase robust to sparse local perturbations. \textit{En route}, we generalise several technical results across the quantum information literature on no-signalling unitaries, which may be of independent interest.
In particular, our results show that no-signalling can happen in tri-partite unitaries with non-product structure, which extends the results of \cite{Beckman2001_causal_quatum_op}. 
Additionally, we extend the results of \cite{Lorenz2021, Ormrod2023} on the structure of unitary maps exhibiting causal independence to time-periodic evolutions.
While the bounded light cone problem was first considered by the authors in the context of Clifford circuits \cite{Kovacs2026}, this paper shows that non-universal gate sets are not necessary for operator localisation to occur and the analytical tractability can be maintained for a generic model of ergodicity breaking.

%%% 4. Outline
The paper is structured as follows. In \Cref{sec:bounded_light_cones}, we develop an algebraic theory of causal independence for tripartite unitaries, constructing invariant algebras and characterising the splitting of the many-body operator space. This enables the study of the structure of local conservation laws. In \Cref{sec:rep_theory}, we employ the representation theory of finite matrix algebras to characterize the brickwork unitary circuits exhibiting bounded light-cones, separating cases with Abelian and non-Abelian invariant algebras to show that the nature (and existence) of conserved quantities depends strongly on the commutativity of the algebra. Notably, the condition for causal decoupling to occur without any conserved quantities is that the invariant algebra is a single factor hosting a hidden quantum subsystem. Finally, \Cref{sec:dynamical_features} is dedicated to connecting the bounded-light cone phenomenology to probes of many-body ergodicity. We prove an entanglement area law in the circuit, consider its stability to local measurements and find the signature of ergodicity-breaking in spectral correlations through the spectral form factor. We conclude and comment on further generalisations of our circuit to infinite-dimensional systems as well as the hardness of verifying causally decoupled dynamics on quantum computers. 

%\whencolumns{}{\newpage}

\section{Bounded light cones in tripartite systems\label{sec:bounded_light_cones}}

\subsection{General setup}
\begin{figure}
    \centering
    \includegraphics[width=\linewidth]{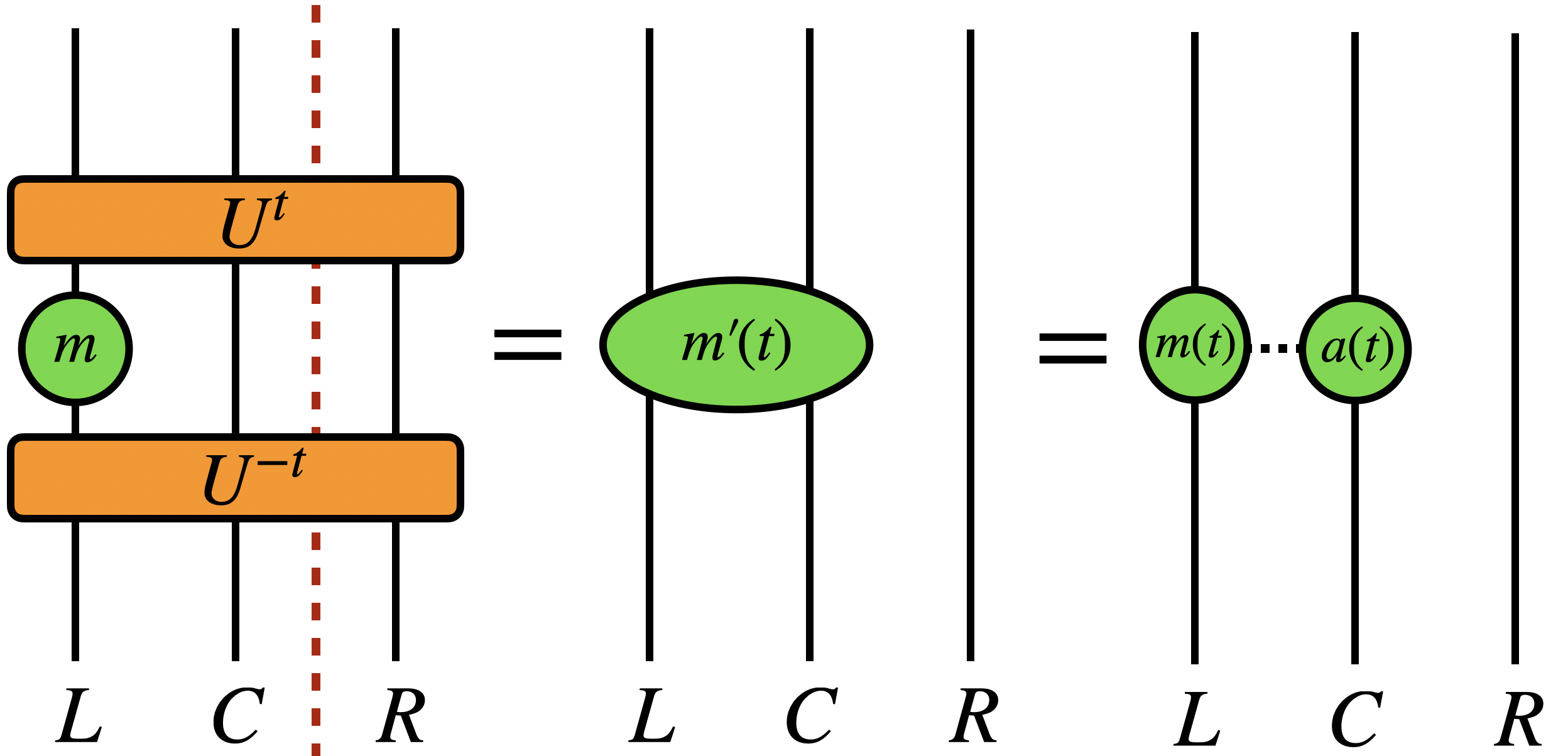}
    \caption{ \small Localisation of left-evolving operators in tensor network notation. Green operators correspond to traceless operators where dotted lines signify arbitrary superpositions. Elements of the full matrix algebra $m \in \mathcal{M}_L \otimes \Id_{CR}$ must localise under the wall evolution. The operator evolution is constrained to occur in a sub-algebra under \Cref{thm:left_right_closure}: $m'(t) \in \mathcal{M}_L \otimes \mathcal{A}_C \otimes \Id_R$ .}
    \label{fig:wall_condition}
\end{figure}
We consider the bounded light-cone problem for arrested operator spreading generically defined over a tripartite quantum system. Our goal in this section is to provide the algebraic characterisation of a local causal constraint which we will apply later in the brickwork circuit setting.

We consider a left system $L$, central system $C$ and right system $R$ of a tripartite Hilbert space $\mathcal{H}_{LCR}$. By assumption, these are finite-dimensional quantum systems with an associated algebra of complex matrices (\textit{full matrix algebra}) denoted by $\mathcal{M}_{LCR}$ that act on the Hilbert space. In terms of the subsystem algebras, we have the product form $\mathcal{M}_{LCR} =\mathcal{M}_L \otimes \mathcal{M}_C \otimes \mathcal{M}_R$. Sometimes we will use the shorthand $\mathcal{M}_L$ to mean $\mathcal{M}_L \otimes \Id_{CR}$ and similarly for the other algebras of local operators as appropriate. In \Cref{app:algebras}, we provide a more detailed elementary introductions for technical details.

As on Figure \ref{fig:bounded_light_cone}, we are concerned with the adjoint evolution of a local traceless operator $O_0$ non-trivially supported on a local subsystem at initial time: 
\begin{equation} O(t) = \mathrm{Ad}^t_U(O_0) = U^t O_0 U^{-t}.\end{equation}

\noindent We will refer to $U$ as the Floquet unitary. 
 The focus of this paper is \emph{strictly} localised operator evolution such that an initially local operator remains non-trivially supported on an enlarged, finite, subsystem at all times, despite the connectivity of the couplings between subsystems.
 
We formalise this with the concept of a \textit{wall} unitary on a tripartite quantum system. For convenience and generality, we first derive the associated localisation condition on unstructured unitaries. We define a \enquote*{wall} as a Floquet unitary which generates arrested local operator spreading in the corresponding adjoint map. The adjoint unitary evolution is an algebraic isomorphism since $\Ad_U$ is a linear operator preserving product structure, adjoint structure and dimension. As a result, we can recast the localisation condition in terms of the evolution of operator algebras too which will be useful to derive the structure of wall unitaries later.

%\lluis{Perhaps this is not the actual setup, but the "unstructured/general setup". Perhaps here, and not in the definition of walls, I would introduce the systems $L$ (left), $C$ (centre) and $R$ (right), alongside the Floquet unitary $U$. Since all this section and subsequent ones only considers this setup. Perhaps it is also useful to define the operator algebras $\mathcal M_L, \mathcal M_C, \mathcal M_R$. Also, it is not clear to me whether the systems $L,C,R$ are finite-dimensional or not.}

\begin{definition}[Walls]
  %\label{def:walls}
  Consider a unitary $U$ jointly acting on the systems $L$ (left), $C$ (centre) and $R$ (right). We say that $U$ is a \textit{left-wall}, if the induced adjoint evolution of any operator $m$ in $L$ remains non-trivially supported on $LC$ at all times. That is,  for each $t\geq 0$ there is $m'(t)$ in $LC$ such that 
  \begin{equation}
    \mathrm{Ad}_U^t \left( m_L \otimes \mathds{1}_{CR}  \right) = m'(t)_{LC} \otimes \mathds{1}_{R}\ .
  \end{equation}
  Equivalently, and in terms of operator algebras,
  \begin{equation}
    \mathrm{Ad}_U^t \left( \mathcal M_L \otimes \mathds{1}_{CR}  \right) \subseteq  \mathcal M_{LC} \otimes \mathds{1}_{R}\ ,
  \end{equation}
  for all $t\geq 0$. Analogously, $U$ is a \textit{right-wall} if   
    \begin{equation}
        \mathrm{Ad}_U^t \left(\mathds{1}_{LC}   \otimes \mathcal M_R \right) \subseteq \mathds{1}_{L} \otimes \mathcal M_{CR}\ ,
      \end{equation}
  for all $t\geq 0$.
  \label{def:walls}
\end{definition}

The same localisation condition was investigated in \cite{Kovacs2026} for Clifford circuits which we now generalise using the operator algebra framework.
The above definition is a generalisation of causal independence in the works \cite{Lorenz2021, Allen2018} where the causal decoupling if demanded for a single time-step evolution of the unitary. These works considered the compositional structure of unitary maps that satisfy causal constraints. By demanding that $L,R$ remain causally decoupled for arbitrary times, we are restricting the eigenspaces of the unitary to fulfil a localisation constraint. 

The causal independence of subsystems can be equivalently formulated in terms of commutation of their operator algebras. For this, we will extensively use the notion of a commutant algebra, $\comm(\mathcal{A_S})$, defined as the largest algebra of operators on $\mathcal{M}_S$ that commute with all elements of sub-algebra $\mathcal{A}_S \subseteq \mathcal{M}_S$ for a quantum system $S$. We refer to \Cref{app:commutants} for a detailed introduction.

\begin{lemma}[Commuting time-evolved algebras]
    The left wall condition for a unitary $U$ is equivalent to:
    \begin{equation}
        [\Ad_U^t(\mathcal{M}_L), \mathcal{M}_R] = 0 \text{ for all } t \geq 0.
        \label{eq:left_commutator}
    \end{equation}
    Similarly, the right-wall condition for a unitary $U$ can be written as:
    \begin{equation}
        [\mathcal{M}_L,\Ad_U^t(\mathcal{M}_R)] = 0 \text{ for all } t \geq 0,
        \label{eq:right_commutator}
    \end{equation}
    where the commutator relation between algebras, $[\mathcal{A}, \mathcal{B}] = 0$, denotes pairwise commutation between elements of $\mathcal{A}$ and $\mathcal{B}$.
    \label{lemma:walls_as_commutators}
\end{lemma}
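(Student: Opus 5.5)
The plan is to reduce the lemma, for each fixed $t\geq 0$, to the double-commutant (commutant of a tensor product) identity
\begin{equation*}
    \comm_{LCR}\left(\Id_{LC}\otimes\mathcal{M}_R\right) = \mathcal{M}_{LC}\otimes \Id_R ,
\end{equation*}
which we take from \Cref{app:commutants}. Since the wall condition and \eqref{eq:left_commutator} are both quantified over all $t\geq 0$, it suffices to prove the equivalence pointwise in $t$. We write $\mathcal{B}_t := \Ad_U^t(\mathcal{M}_L\otimes\Id_{CR})$. This is a unital $*$-subalgebra of $\mathcal{M}_{LCR}$, because $\Ad_U^t$ is a $*$-automorphism, as noted before \Cref{def:walls}.

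For the forward direction, assume $\mathcal{B}_t\subseteq \mathcal{M}_{LC}\otimes\Id_R$. Every element $m'_{LC}\otimes \Id_R$ commutes with every $\Id_{LC}\otimes r_R$, because the two factors act on disjoint tensor factors:
\begin{equation*}
    (m'\otimes\Id)(\Id\otimes r)=m'\otimes r=(\Id\otimes r)(m'\otimes\Id).
\end{equation*}
Hence $[\mathcal{B}_t,\mathcal{M}_R]=0$.

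For the converse, assume $[\mathcal{B}_t,\mathcal{M}_R]=0$. Then $\mathcal{B}_t\subseteq \comm(\Id_{LC}\otimes\mathcal{M}_R)$, and the commutant identity above gives the left-wall inclusion. If the identity is not available from the appendix, I would prove it directly. Write an arbitrary $X\in\mathcal{M}_{LCR}$ as $X=\sum_{ij} X_{ij}\otimes |i\rangle\langle j|_R$ with $X_{ij}\in\mathcal{M}_{LC}$. Commuting $X$ with $\Id\otimes|k\rangle\langle l|$ yields $X_{ik}\delta_{lj}=\delta_{ik}X_{lj}$ for all indices. This forces $X_{ij}=\delta_{ij}X_{00}$, that is, $X=X_{00}\otimes\Id_R$.

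The right-wall statement \eqref{eq:right_commutator} follows by the symmetric argument with $L$ and $R$ exchanged, using $\comm(\mathcal{M}_L\otimes\Id_{CR})=\Id_L\otimes\mathcal{M}_{CR}$. The only substantive step is the commutant identity, and it is elementary in finite dimensions. The one point requiring care is that \eqref{eq:left_commutator} must be read at each fixed $t$, with no interchange between times. This holds because the wall condition is also stated separately for each $t$.
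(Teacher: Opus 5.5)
Your proposal is correct and follows the paper's argument: both reduce the lemma, at each fixed $t$, to the commutant identity $\comm(\Id_{LC}\otimes\mathcal{M}_R)=\mathcal{M}_{LC}\otimes\Id_R$ (and its mirror for the right wall), so that the wall inclusion becomes $\Ad_U^t(\mathcal{M}_L)\subseteq\comm(\mathcal{M}_R)$. Your matrix-unit verification of that identity and your explicit handling of both directions only fill in details that the paper leaves to its appendix or treats briefly.
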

\begin{proof}
    We make use of the double-commutant theorem of von-Neumann algebras, as detailed in \Cref{app:commutants}. The largest algebra of operators on $LCR$ commuting with $\mathcal{M}_L$ is $\comm(\mathcal{M}_L) = \mathcal{M}_{CR}$ and similarly $\comm(\mathcal{M}_R) = \mathcal{M}_{LC}$. Under \Cref{def:walls}, the left wall condition implies that the algebra $\Ad_U^t(\mathcal{M}_L)$ is a sub-algebra of $\mathcal{M}_{LC}$ that implies that $\Ad_U^t(\mathcal{M}_L) \subseteq \comm(\mathcal{M}_R)$. This implies the pairwise commutation in \Cref{eq:left_commutator}. An analogous proof establishes  \Cref{eq:right_commutator}.
\end{proof}

Using the previous lemma, we now establish the equivalence of left-wall and right-wall conditions. The results relies on the observation that the localised operator evolution for arbitrary times implies a restriction on the wall unitary map's eigenspaces.
\begin{figure}
    \centering
    \includegraphics[width=0.85\linewidth]{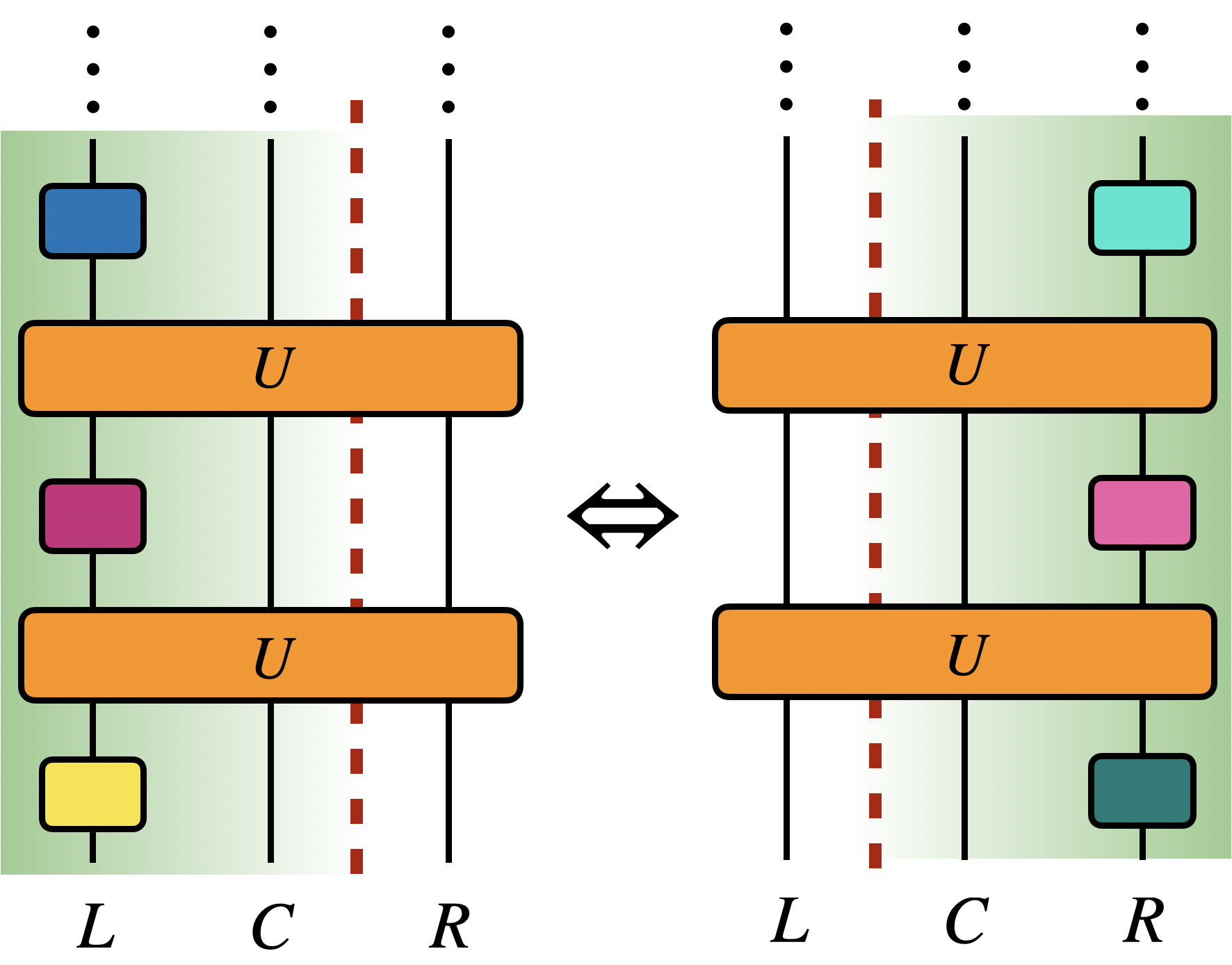}
    \caption{\small Stable embedding of a tripartite wall unitary into an arbitrary circuit environment under \Cref{thm:left_right_equiv}. The existence of a bounded light cone from the left-environment implies that of from the right inducing a splitting of operator space.}
    \label{fig:left_equiv_right}
\end{figure}
\begin{theorem}
    $U$ is a left-wall if and only if it is a right-wall.\label{thm:left_right_equiv}
\end{theorem}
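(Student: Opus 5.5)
The plan is to combine \Cref{lemma:walls_as_commutators} with the finite-dimensionality of the systems. The latter lets us pass from positive to negative time powers of $U$.

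By \Cref{lemma:walls_as_commutators}, $U$ is a left-wall if and only if $[U^t m_L U^{-t}, m_R]=0$ for all $t\ge 0$, all $m_L\in\mathcal M_L$ and all $m_R\in\mathcal M_R$. Conjugating this commutator by $U^{-t}$, which preserves commutators, turns it into
\begin{equation}
[m_L, U^{-t} m_R U^{t}]=0 ,
\end{equation}
that is, $[\mathcal M_L,\Ad_U^{-t}(\mathcal M_R)]=0$ for all $t\ge0$. So the left-wall condition for $U$ is exactly the right-wall condition for $U^{-1}=U^\dagger$. What remains is to trade the negative powers $U^{-t}$ for the positive powers $U^{t}$ required by \eqref{eq:right_commutator}.

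For this step we use that $\mathcal H_{LCR}$ is finite-dimensional. I would give two arguments.

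\emph{Polynomial argument.} By Cayley--Hamilton, $U^{-1}$ is a polynomial in $U$, since $U$ is invertible and $p_U(0)\neq 0$. Hence every power $U^{-t}$ lies in $\mathrm{span}\{U^{s}:0\le s<d\}$, where $d=\dim\mathcal H_{LCR}$. The difficulty is that $\Ad_U^{-t}$ is not linear in $U^{-t}$ alone: it involves both $U^{-t}$ and $U^{t}$.

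\emph{Compactness argument.} This is cleaner. The closure of $\{U^{t}:t\ge0\}$ is a compact abelian group (a torus) inside $U(d)$. Since $U$ is unitary, for every $\epsilon>0$ there is an $n\ge1$ with $\|U^{n}-\Id\|<\epsilon$. This is a recurrence statement, obtained by simultaneous Dirichlet approximation applied to the eigenphases of $U$. Consequently $U^{-t}$ is a limit of positive powers $U^{kn-t}$ with $kn\ge t$. The map $V\mapsto[\Ad_V(m_R),m_L]$ is continuous, and the set where it vanishes is closed. So the vanishing for all positive powers passes to all negative powers, and conversely.

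Running the argument in both directions gives: right-wall for $U^{-1}$ $\Leftrightarrow$ right-wall for $U$. Combined with the first step, this proves left-wall $\Leftrightarrow$ right-wall.

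The main obstacle is the step from negative to positive times, which is the only place where finite-dimensionality and unitarity enter. I would present the recurrence lemma explicitly. One route is to diagonalise $U=\sum_j e^{i\theta_j}P_j$ and apply Dirichlet's theorem to $(\theta_1,\dots,\theta_d)/2\pi$. An alternative is the pigeonhole principle on the compact group $U(d)$: two powers $U^a,U^b$ with $a<b$ must be $\epsilon$-close, and then $U^{b-a}$ is $\epsilon$-close to $\Id$.

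A purely algebraic alternative is also possible. The set $\{V:[\Ad_V(\mathcal M_L),\mathcal M_R]=0\}$ is Zariski-closed, and the Zariski closure of $\{U^t:t\ge0\}$ is a group, so it contains $U^{-1}$. The topological recurrence argument is more elementary, and that is the one I would write out.
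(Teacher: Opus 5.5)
Your proof is correct, but it reaches the negative times by a different mechanism from the paper's. The first step is shared: conjugating the commutator moves the evolution to the other argument, so one wall condition for $U$ becomes the opposite one for $U^{-1}$.

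The paper then works at the level of the superoperator. $\Ad_U^{-1}$ is an invertible linear map on the finite-dimensional space $\mathcal{M}_{LCR}$, so its minimal polynomial $f$ is coprime to $x$. The extended Euclidean algorithm then writes $\Ad_U$ as a polynomial in $\Ad_U^{-1}$. Since $[\mathcal{M}_L, X(\mathcal{M}_R)]=0$ is linear in the superoperator $X$, the condition passes from $\{\Ad_U^{-t}\}_{t\ge 0}$ to $\{\Ad_U^{t}\}_{t\ge 0}$ by finite linear combinations. This is exactly the repair of the polynomial route you abandoned: apply Cayley--Hamilton to $\Ad_U$ acting on operators, not to $U$. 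The fact that $\Ad_U^{-t}$ is quadratic in $U$ then becomes irrelevant.

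The paper's second proof does the same with the eigenprojectors $\Pi_\lambda$ of $\Ad_U$, written as Ces\`aro averages of nonnegative powers. Your recurrence argument is closest to this one, since both use that the spectrum is unimodular. You replace averaged linear combinations by limits of single powers, and you need only that the wall condition is a \emph{closed} condition on $U^t$, not a linear one. That makes your proof elementary and applicable to any closed property of the orbit.

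The cost is that your route relies on unitarity and gives no finiteness. The polynomial route works verbatim for any invertible $U$. It also shows the condition need only be checked for $0\le t<\deg f$, because every $\Ad_U^{t}$ with $t\in\mathbb{Z}$ lies in the span of these powers. Your Zariski-closure remark is the algebraic counterpart of your compactness argument and likewise removes the dependence on unitarity.

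When you write out the recurrence step, state explicitly that $\|U^{kn}-\Id\|\le k\|U^{n}-\Id\|$ by unitarity. Then for $t\ge 1$, taking $k=t$ and $n$ with $\|U^n-\Id\|<\epsilon/t$ puts the nonnegative power $U^{tn-t}$ within $\epsilon$ of $U^{-t}$.
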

\noindent We provide two equivalent proofs for completeness, one based on polynomials formed from the adjoint map and another based on the spectral decomposition.
\begin{proof}
    First, we note that the vector space of superoperators, such as $\Ad_U^{-1}$, is finite-dimensional.
    % CJT: If we can't assume minimal polynomials:
    % Hence, for any infinite sequence of such, there must be a linear dependence.
    % For the sequence of non-negative powers of $\Ad_U$ this provides an annihilating polynomial $f(x)$ such that $f(\Ad_U)=0$.
    % Choose this polynomial to be of minimal degree and monic such that it is the minimal polynomial of $\Ad_U$.
    Consequently, $\Ad_U^{-1}$ has a minimal polynomial $f(x)$ that annihilates it, i.e. $f(x) = 0$.
    $x$ does not divide $f(x)$, because otherwise there would exist a non-zero (by minimality of $f$) superoperator $A$ such that $\Ad_U^{-1} \circ A = 0$, however $\Ad_U^{-1}$ is invertible.
    Equivalently, this is because $\Ad_U^{-1}$ does not have a zero eigenvalue.
    % CJT: Or we could simply remark x only divides f(x) when 0 is an eigenvalue.
    We then use the extended Euclidean algorithm to find polynomials $s(x)$ and $t(x)$ such that $s(x) x + t(x) f(x) = \gcd(x, f(x)) = 1$.
    Evaluating this equation at $\Ad_U^{-1}$ reveals that $\Ad_U = s(\Ad_U^{-1})$, i.e. the evolution map is a polynomial in its inverse.

    If we take the right-wall conditions from \Cref{eq:right_commutator}, we may transfer the evolution to the left argument of the commutator in terms of non-positive powers,
    \begin{align}
      [\Ad_U^{-t}(\mathcal{M}_L),\mathcal{M}_R] = 0, && \text{for all $t \ge 0$.}
      \label{eq:negative_t_right_wall}
    \end{align}
    Using the linear equation for $\Ad_U$ in terms of powers of $\Ad_U^{-1}$, we take linear combinations of \Cref{eq:negative_t_right_wall} to derive each of the left-wall conditions in \Cref{eq:left_commutator}.
    The converse statement follows analogously.
\end{proof}
\begin{proof}

% Consider the algebra of all left (right) local operators, $\mathcal{M}_L$ ($\mathcal{M}_R$).
% Following \Cref{def:walls}, 
%The wall condition is most easily understood as defined over the operator algebra on $L$, denoted $\mathcal{M}_L$, and the operator algebra on $R$, denoted $\mathcal{M}_R$ (see \Cref{app:algebras} for details). 
%Since fininte matrix algebras are von-Neumann algebras satisfying the double commutant theorem (see \Cref{app:commutants}), we have $\mathcal{M}_{LC} = \comm(\mathcal{M}_R)$ and $\mathcal{M}_{RC} = \comm({\mathcal{M}_L})$.
%of unitary matrices. Let us construct a projector map $\Pi_{U=\lambda}$ to the subspace where $U$ acts as a uni-modular scalar $\lambda \in \mathds{C}$. This is a slight generalisation of the construction in \cite{Lindblad1999}, where the projector is not defined for a single eigenvalue.
% $\Pi_{U=\lambda} : B(LCR) \rightarrow B(LCR)$ 
We denote by $\mathrm{spec}(\Ad_U)$ the set of eigenvalues of the map $\Ad_U$ and by $\Pi_\lambda$ the corresponding eigen-projectors, so that 
\begin{align}\label{eq:spec decomp}
  \Ad_{U^t} = \!\!\sum_{\lambda\in \mathrm{spec}(\Ad_U)}\!\!\!\! 
  \lambda^t\, \Pi_\lambda\ .
\end{align}
It is proven in \cite{Yosida1965} that these eigen-projectors can be written as
\begin{equation}
    \Pi_{\lambda} = \lim_{N\rightarrow \infty} \frac{1}{N} \sum_{t=0}^{N-1}\lambda^{-t}\mathrm{Ad}_U^t\ .
    \label{eq:projector}
\end{equation}
We will now substitute the form of the adjoint map into the right-wall condition \Cref{eq:right_commutator} in \Cref{lemma:walls_as_commutators}:
%Equipped with the above construction, we proceed by taking the right-wall condition and Fourier-transforming the equation from time to frequency domain (i.e. to $\lambda$):
We now take linear combinations over the left-hand side of the right-wall conditions,
\begin{equation*}
  \lim_{N\rightarrow \infty}\frac{1}{N}\sum_{t=0}^{N-1} \lambda^{-t} [\mathcal{M}_L, \mathrm{Ad}_U^t(\mathcal{M}_R)] = [\mathcal{M}_L, \Pi_{\lambda}(\mathcal{M}_R)],
\end{equation*}
for every $\lambda \in \mathrm{spec}(\Ad_U)$.
This amounts to a Fourier transformation from the time domain to the frequency domain (i.e. to $\lambda$).
Taking the same combinations of the right-hand side establishes that,
\begin{equation}
  [\mathcal{M}_L, \Pi_{\lambda}(\mathcal{M}_R)] =0,\label{eq:pi_commutator}
\end{equation}
for every $\lambda\in\mathrm{spec}(\Ad_U)$.
%since $\Pi_{\lambda}$ maps to disjoint spaces for different values of $\lambda$. The converse statement also holds since if all eigen-projectors of $U$ preserve the localisation condition, so will $U$ itself. 
Analogously, by using \eqref{eq:spec decomp} we can prove that \eqref{eq:pi_commutator} implies \eqref{eq:right_commutator}. So both identities constitute equivalent formulations of the right-wall condition. `

Next, we recall that 
\begin{align}\label{eq:spec decomp U-1}
  \Ad_{U^{-1}} = \!\!\sum_{\lambda\in \mathrm{spec}(\Ad_U)}\!\!\!\! 
  \lambda^{-1}\, \Pi_\lambda\ .
\end{align}
%Notice that the $\lambda$-invariant subspace projected onto is invariant under time-reversal: $\Pi_{U=\lambda} =\Pi_{U^{-1}=\lambda^*}$ since $U$ and $U^{-1}$ share eigenspaces. Thus, one can rewrite the right-wall condition:
Thus, a linear combination of
\eqref{eq:pi_commutator} with coefficients $\lambda^{-t}$ yields
\begin{equation}
  [\mathcal{M}_L, \mathrm{Ad}_{U^{-1}}^t(\mathcal{M}_R)] = 0\ .
\end{equation}
%\begin{equation}
%[\mathcal{M}_L, \Pi_{U^{-1}=\lambda}(\mathcal{M}_R)] = 0 \text{ for all } \lambda.
%\label{eq:left_wall_implication}
%\end{equation}
Finally, using the identity
%By removing the time-evolution from the left-wall condition, we find that the $U^{-1}$ localises operators on $R$:
\begin{equation*}
    \Ad_U^t [\mathcal{M}_L, \Ad_{U^{-1}}^t (\mathcal{M}_R)]
    =[\Ad_{U}^t(\mathcal{M}_L),\mathcal{M}_R] ,
\end{equation*}
we arrive at
%From this, it is evident that 
\Cref{eq:left_commutator} This concludes the proof that \Cref{eq:right_commutator} implies \Cref{eq:left_commutator}. The converse statement follows analogously.
%is equivalent to the left-wall condition.
\end{proof}

\noindent Under the action of $\Ad_U$ for a wall unitary $U$, the operator space splits into commuting sectors (fragments) therefore decoupling the $L$ and $R$ subsystems simultaneously, as shown on \Cref{fig:left_equiv_right}. The wall condition is a strong form of operator localisation in the dynamical evolution which implies stringent algebraic constraints on the unitary's eigenvectors. In particular, a random generic (e.g. Haar-random) unitary would only satisfy these constraints with probability zero. 

However, such a constraint leads to stability against local perturbations: the wall unitary can be embedded into an arbitrary local circuit environment without affecting the decoupling. Walls permit generic local evolution within subsystems $L, R$ which can arise in coupling these to an environment. By interleaving the Floquet unitary $U$ with arbitrary, possibly non-unitary and time-dependent, local transformations in $L$ and in $R$, the wall condition is preserved. Mathematically, consider a sequence of (local) quantum channels $\Phi^{(\tau)}$ acting on $L$ for $\tau = 1, 2,\ldots$ Then, the localisation conditions holds for all $t \ge 1$:
\begin{align}
  \nonumber
  & \left(\Phi^{(t)}_L \circ \Ad_U \right) 
  \circ \cdots \circ
  \left(\Phi^{(1)}_L \circ \Ad_U \right) 
  (\mathcal{M}_L \otimes \Id_{CR}) 
  \\ &\ \subseteq \mathcal{M}_{LC} \otimes \Id_R\ .
\end{align}
From a many-body physics perspective, wall unitaries lead to non-ergodic dynamics which is inherently robust under arbitrarily strong local perturbations in $L$ and $R$. This prevents the phenomenon of avalanche-instability reported in the many-body localisation literature \cite{Morningstar2022, Abanin2019, Ha2023} and will form the basis for comparing dynamics due to walls with Hilbert-space fragmentation \cite{Moudgalya2022, Kovacs2026, Logaric2025} in later sections.

\subsection{Localised operator sub-algebras \label{sec:sub_algebra_embedding}}
In this section, we consider how the evolution of $L$-local and $R$-local algebras overlap in the $C$-system. This will lead us to showing that the the wall unitary acts invariantly on an sub-algebra which will constrain the structure of the unitary.  We will focus on the $L$-local operator evolution from the left-wall condition but under \Cref{thm:left_right_equiv}, all the results will carry over to right-evolving operators.

 \begin{definition}
     Let $\overline{\mathcal{M}_L}$ denote the algebraic closure of operators obtained by evolving local operators in $L$ to arbitrary times:
     \begin{align}
            \overline{\mathcal{M}_L} &= 
            \mathrm{Alg}{\left \{  \Ad_U^t(\mathcal{M}_L\otimes \Id_{CR})  \right \}_{t=0}^{\infty}}.
     \end{align}
     Analogously, for time-evolved operators from $R$-local operators, we define: 
     \begin{equation}
        \overline{\mathcal{M}_R} = 
            \mathrm{Alg}{\left \{\Ad_U^t(\Id_{LC} \otimes \mathcal{M}_R)   \right \}_{t=0}^{\infty}}
    \end{equation}
 \end{definition}
\noindent Note that the $\overline{\mathcal{M}_{L, R}}$ remain finite algebras despite the $t \rightarrow\infty$ limit. In the following, we will refer to these as \textit{embedded sub-algebras}.

\begin{corollary}
    $\Ad_U$ acts invariantly on $\overline{\mathcal{M}_L}$ and $ \overline{\mathcal{M}_R}$.
    \label{cor:invariant_algebras}
\end{corollary}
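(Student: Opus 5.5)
The plan is to prove the claim directly from the definition of $\overline{\mathcal{M}_L}$ as the algebra generated by the set $S_L=\bigcup_{t\geq 0}\Ad_U^t(\mathcal{M}_L\otimes\Id_{CR})$. I will read \enquote*{acts invariantly} in two stages: first $\Ad_U(\overline{\mathcal{M}_L})\subseteq\overline{\mathcal{M}_L}$, and then the equality $\Ad_U(\overline{\mathcal{M}_L})=\overline{\mathcal{M}_L}$.

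\textbf{Step 1: inclusion.} Since $\Ad_U$ is an algebra automorphism of $\mathcal{M}_{LCR}$, it is linear, multiplicative and $*$-preserving. Hence it maps the algebra generated by a set onto the algebra generated by the image of that set:
\begin{equation*}
  \Ad_U(\mathrm{Alg}\,S_L)=\mathrm{Alg}\,\Ad_U(S_L).
\end{equation*}
Every element of $\mathrm{Alg}\,S_L$ is a linear combination of finite products of generators, so this identity follows directly from multiplicativity. The image is
\begin{equation*}
  \Ad_U(S_L)=\bigcup_{t\geq 1}\Ad_U^t(\mathcal{M}_L\otimes\Id_{CR})\subseteq S_L,
\end{equation*}
and therefore $\Ad_U(\overline{\mathcal{M}_L})\subseteq\overline{\mathcal{M}_L}$.

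\textbf{Step 2: equality.} This is the only point needing a separate argument, because the generating set only runs over $t\geq0$ and shifting it by one step drops the $t=0$ term. The cleanest route is dimension counting. $\Ad_U$ is injective and $\overline{\mathcal{M}_L}$ is a finite-dimensional subspace of $\mathcal{M}_{LCR}$, so an injective linear map of $\overline{\mathcal{M}_L}$ into itself is surjective. Hence $\Ad_U(\overline{\mathcal{M}_L})=\overline{\mathcal{M}_L}$, and consequently $\Ad_U^{-1}$ also preserves $\overline{\mathcal{M}_L}$.

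An alternative route for Step 2 uses the first proof of \Cref{thm:left_right_equiv}: $\Ad_U^{-1}$ is a polynomial in $\Ad_U$, so $\Ad_U^{-1}$ preserves every $\Ad_U$-invariant subspace. Either argument shows the restriction is a $*$-automorphism of the embedded sub-algebra.

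\textbf{Step 3: the right algebra.} The same argument, verbatim with $\Id_{LC}\otimes\mathcal{M}_R$ in place of $\mathcal{M}_L\otimes\Id_{CR}$, gives the statement for $\overline{\mathcal{M}_R}$. No wall property is needed here. The wall condition, together with \Cref{thm:left_right_equiv}, only enters to show that these invariant algebras sit inside $\mathcal{M}_{LC}\otimes\Id_R$ and $\Id_L\otimes\mathcal{M}_{CR}$ respectively, which the subsequent results use.
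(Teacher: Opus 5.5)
Your proof is correct and follows the paper's reasoning. The paper states this corollary without proof, as immediate from the definition: $\Ad_U$ shifts the generating family $\{\Ad_U^t(\mathcal{M}_L)\}_{t\ge0}$ into itself, which is your Step 1. Your Step 2, upgrading the inclusion to equality via injectivity in finite dimensions, is a valid addition that the paper uses only implicitly later (e.g.\ in \Cref{lemma:wall_sequence} and \Cref{thm:central_conserved_charge}, where it instead relies on the polynomial argument from the first proof of \Cref{thm:left_right_equiv}).
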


\begin{lemma}
    If $U$ is a wall unitary then $\overline{\mathcal{M}_L}$ and $\overline{\mathcal{M}_R}$ commute.
    %, $ \left [ \overline{\mathcal{M}_L}, \overline{\mathcal{M}_R} \right] = 0$.
\end{lemma}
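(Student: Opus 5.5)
The plan is to reduce the statement to pairwise commutation of generators, and then use the algebraic fact that commutation with a generating set extends to the generated algebra. Since $\overline{\mathcal{M}_L}$ is generated by $\{\Ad_U^s(\mathcal{M}_L)\}_{s\ge 0}$ and $\overline{\mathcal{M}_R}$ by $\{\Ad_U^t(\mathcal{M}_R)\}_{t\ge 0}$, it suffices to show
\begin{equation*}
  [\Ad_U^s(\mathcal{M}_L), \Ad_U^t(\mathcal{M}_R)] = 0 \quad \text{for all } s,t\ge 0 .
\end{equation*}
The extension from generators to the full algebras is routine. For a fixed $b$, the commutant of $\{b\}$ is closed under sums, products and adjoints. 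So if every generator of $\overline{\mathcal{M}_L}$ commutes with $b$, every element of $\overline{\mathcal{M}_L}$ does. Applying the same argument with the roles swapped handles $\overline{\mathcal{M}_R}$.

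For the generator step I will use that $\Ad_U$ is an algebra automorphism, so conjugating a commutator preserves its vanishing. Suppose $s\ge t$. Applying $\Ad_U^{-t}$ turns the commutator into $[\Ad_U^{s-t}(\mathcal{M}_L), \mathcal{M}_R]$, and this vanishes by the left-wall form \Cref{eq:left_commutator} of \Cref{lemma:walls_as_commutators}. Suppose instead $t> s$. Applying $\Ad_U^{-s}$ gives $[\mathcal{M}_L, \Ad_U^{t-s}(\mathcal{M}_R)]$, and this vanishes by the right-wall form \Cref{eq:right_commutator}.

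The only real input is that $U$ is simultaneously a left- and a right-wall. This is exactly what \Cref{thm:left_right_equiv} supplies, so the case $t>s$ is covered even if $U$ was only assumed to be a left-wall. That step is where the content lies; everything else is bookkeeping.

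One subtlety is that $\Ad_U^{-t}$ involves a negative power. This is harmless, because the identity $\Ad_U^{-t}[a,b] = [\Ad_U^{-t}a, \Ad_U^{-t}b]$ holds for any automorphism, and we only apply it to a commutator that we then show is zero.
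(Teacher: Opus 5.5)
Your proof is correct and follows the paper's argument: you reduce to pairwise commutation of the generators $\Ad_U^{s}(\mathcal{M}_L)$ and $\Ad_U^{t}(\mathcal{M}_R)$ by conjugating the commutator with a power of $\Ad_U$. Your case split $s\ge t$ versus $t>s$ usefully makes explicit that both wall conditions are needed, hence \Cref{thm:left_right_equiv}, which the paper's phrase ``from either the left-wall or right-wall condition'' leaves implicit. One harmless slip: the commutant of a single element $b$ is not closed under adjoints unless $b$ is normal, but you never need this, since the generating sets are self-adjoint and $\mathrm{Alg}$ only takes sums and products.
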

\begin{proof}
    For any $t_1,t_2 \ge 0$, from either the left-wall or right-wall condition, we use the invariance of the commutator under the time evolution to obtain
    $\left [\Ad_U^{t_1} (\mathcal{M}_L), \Ad_U^{t_2} (\mathcal{M}_R) \right] = 0$.
    Hence, the generators of $\overline{\mathcal{M}_L}$ and $\overline{\mathcal{M}_R}$ pairwise commute and therefore the algebras as a whole commute.
\end{proof}

We interpret these algebras as the minimal algebras that include the time-evolved $L$-local and $R$-local operators as we took the deformations of local operator algebras as the generators. We conjecture that every element of $\overline{\mathcal{M}_L}$ can be created starting from a local operator in $L$ through evolution by $\Ad_U$ and a time-dependent $L$-local channel in finite time (and similarly for $\overline{\mathcal{M}_R}$). Note that $\overline{\mathcal{M}_L}$ includes $C$-local operators that correspond to the local conserved quantities of the $\Ad_U$ map.

We now derive the structure of these algebras.

\begin{theorem}
  If $U$ is a wall unitary then there are two subalgebras $\mathcal{A}_C, \mathcal{B}_C \subseteq \mathcal{M}_C$ such that 
  \begin{align}\label{eq:barML}
        \overline{\mathcal{M}_L}  = \mathcal{M}_L \otimes \mathcal{A}_C \otimes \Id_R, \\
        \overline{\mathcal{M}_R}  = \Id_L \otimes\mathcal{B}_C \otimes \mathcal{M}_R.
  \end{align}
  The sub-algebras pairwise commute, that is,
%  where $\mathcal{A}, \mathcal{B} \subseteq \mathcal{M}_C$ are mutually commuting sub-algebras: 
    \begin{align}
        &\mathcal{A}_C \subseteq \mathrm{Comm}_C(\mathcal{B}), \\
        &\mathcal{B}_C \subseteq \mathrm{Comm}_C(\mathcal{A}), 
    \end{align}
    \noindent where we have taken the commutant in $\mathcal{M}_C$.
    \label{thm:left_right_closure}
\end{theorem}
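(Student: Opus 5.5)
The plan is to use the fact that $\overline{\mathcal{M}_L}$ contains $\mathcal{M}_L\otimes\Id_{CR}$ (the $t=0$ generator) and is contained in $\mathcal{M}_{LC}\otimes\Id_R$, by the left-wall condition together with closure of $\mathcal{M}_{LC}$ under products and sums. A subalgebra sandwiched like this should split as a tensor product. The key observation is that any unital subalgebra $\mathcal{D}$ with $\mathcal{M}_L\otimes\Id_C \subseteq \mathcal{D}\subseteq \mathcal{M}_L\otimes\mathcal{M}_C$ equals $\mathcal{M}_L\otimes\mathcal{A}_C$ for some subalgebra $\mathcal{A}_C\subseteq\mathcal{M}_C$.

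To prove this, I would fix matrix units $e_{ij}\in\mathcal{M}_L$. Any $X\in\mathcal{M}_{LC}$ expands uniquely as $X=\sum_{ij} e_{ij}\otimes x_{ij}$ with $x_{ij}\in\mathcal{M}_C$. If $X\in\mathcal{D}$, then multiplying by elements of $\mathcal{M}_L\otimes\Id_C\subseteq\mathcal{D}$ gives
\begin{equation*}
\sum_k (e_{ki}\otimes\Id_C)\,X\,(e_{jk}\otimes\Id_C)=\Id_L\otimes x_{ij}\in\mathcal{D}.
\end{equation*}
Define $\mathcal{A}_C=\{a\in\mathcal{M}_C : \Id_L\otimes a\in\mathcal{D}\}$. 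This set is a subalgebra: it is closed under linear combinations and products, and it contains $\Id_C$. The computation above shows $\mathcal{D}\subseteq \mathcal{M}_L\otimes\mathcal{A}_C$. Conversely, $e_{ij}\otimes a=(e_{ij}\otimes\Id)(\Id\otimes a)\in\mathcal{D}$, so $\mathcal{M}_L\otimes\mathcal{A}_C\subseteq\mathcal{D}$. Applying this with $\mathcal{D}=\overline{\mathcal{M}_L}$ gives \eqref{eq:barML}.

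For the right algebra, \Cref{thm:left_right_equiv} says $U$ is also a right-wall, so $\mathcal{M}_R\subseteq\overline{\mathcal{M}_R}\subseteq\Id_L\otimes\mathcal{M}_{CR}$. The same argument, with the roles of $L$ and $R$ exchanged, produces $\mathcal{B}_C$. If one wants $*$-subalgebras, closure under adjoints follows because $\Ad_U$ preserves adjoints, so the generating sets are self-adjoint; it would be cleanest to define $\mathrm{Alg}$ as the generated $*$-algebra.

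Commutation comes from the preceding lemma, $[\overline{\mathcal{M}_L},\overline{\mathcal{M}_R}]=0$. Taking $\Id_L\otimes a\otimes\Id_R$ and $\Id_L\otimes b\otimes\Id_R$ with $a\in\mathcal{A}_C$ and $b\in\mathcal{B}_C$ gives $[a,b]=0$ in $\mathcal{M}_C$. Hence $\mathcal{A}_C\subseteq\comm_C(\mathcal{B}_C)$ and $\mathcal{B}_C\subseteq\comm_C(\mathcal{A}_C)$. The only step with real content is the tensor splitting lemma, specifically extracting the coefficients $x_{ij}$ via matrix units; the rest is bookkeeping.
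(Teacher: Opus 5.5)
Your proof is correct. Its overall architecture matches the paper's: the sandwich $\mathcal{M}_L\otimes\Id_{CR}\subseteq\overline{\mathcal{M}_L}\subseteq\mathcal{M}_{LC}\otimes\Id_R$, a tensor-splitting lemma, the mirror argument for $R$, and commutation from the preceding lemma. Where you differ is in how you prove the splitting lemma.

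The paper (Lemma \ref{lemma:subalgebra_product}) argues through commutants. From $\comm(\mathcal{R})\subseteq\comm(\mathcal{M}_L\otimes\Id_C)=\Id_L\otimes\mathcal{M}_C$ it writes $\comm(\mathcal{R})=\Id_L\otimes\mathcal{C}_C$. The double commutant theorem together with the commutant-of-a-tensor-product theorem then gives $\mathcal{R}=\mathcal{M}_L\otimes\comm_C(\mathcal{C}_C)$.

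Your matrix-unit extraction $\sum_k(e_{ki}\otimes\Id)X(e_{jk}\otimes\Id)=\Id_L\otimes x_{ij}$ is fully elementary and self-contained. It avoids the double commutant theorem entirely, so it works for any algebra sandwiched this way, not only $*$-closed ones. It also gives the explicit description $\mathcal{A}_C=\{a:\Id_L\otimes a\otimes\Id_R\in\overline{\mathcal{M}_L}\}$.

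The paper's route is shorter once the standard von Neumann machinery is granted, and it stays in the commutant language used later (e.g.\ in \Cref{thm:central_conserved_charge}). However, it implicitly requires $\overline{\mathcal{M}_L}$ to be a unital $*$-algebra for the double commutant theorem to apply. That holds here exactly for the reason you note: the generating sets $\Ad_U^t(\mathcal{M}_L)$ are self-adjoint.
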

\begin{proof}
    From the wall condition, we have $\overline{\mathcal{M}_L} \subseteq \mathcal{M}_{LC}$ and clearly $\mathcal{M}_L \subseteq \overline{\mathcal{M}_L}$.
    We can then apply \Cref{lemma:subalgebra_product} from the Appendix to obtain the decomposition \eqref{eq:barML}.
    The result for $\overline{\mathcal{M}_R}$ is obtained in the same fashion.
    Since $[\overline{\mathcal{M}_L}, \overline{\mathcal{M}_R}] = 0$, it follows that $\mathcal{A}, \mathcal{B}$ pairwise commute.
\end{proof}
The previous theorem establishes that the localisation due to wall unitaries stipulates the existence of commuting matrix sub-algebras which contain the evolution of local operators. These algebras are unique to the wall, however, $\mathcal{A}_C, \mathcal{B}_C$ algebras are not necessarily maximal only if $\mathcal{A}_C = \comm_C(\mathcal{B}_C)$. These will be special cases we comment on later. We note that given the choice of $\overline{\mathcal{M}_{L,R}}$ in the form given in \Cref{thm:left_right_closure}, a unitary satisfying the invariance condition in \Cref{cor:invariant_algebras} will be a wall hence the invariance of an embedded algebra is an equivalent characterisation of the wall. This, however, is not a sufficiently tight constraint to give an explicit form of the unitary (a unique parametrisation). Instead, the wall condition will imply a block-diagonal structure as we show in \Cref{sec:rep_theory}.

We now turn to the edge cases where the embedded sub-algebra is not proper and show that this is trivial localisation due to the periodic repetition of a product gate. 

\begin{theorem}
    The wall unitary is of bi-partite product form: $U = V_L \otimes W_{CR}$ or $U = V_{LC} \otimes W_R$ if and only if the sub-algebra $\mathcal{A}_C$ is improper under \Cref{thm:left_right_closure}, that is, $\mathcal{A}_C = \langle \Id \rangle $ or $\mathcal{A}_C = \mathcal{M}_C$.
    \label{thm:0-walls}
\end{theorem}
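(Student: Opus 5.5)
We prove the two directions separately. Throughout we use \Cref{thm:left_right_closure}, which gives $\overline{\mathcal{M}_L} = \mathcal{M}_L \otimes \mathcal{A}_C \otimes \Id_R$, and \Cref{cor:invariant_algebras}, which says $\Ad_U$ maps this algebra into itself.

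\emph{($\Leftarrow$), case $\mathcal{A}_C = \langle \Id\rangle$.} Here $\overline{\mathcal{M}_L} = \mathcal{M}_L$, so $\Ad_U(\mathcal{M}_L)\subseteq \mathcal{M}_L$. Since $\Ad_U$ is an injective algebra homomorphism and dimensions are finite, this is an equality, i.e.\ $\Ad_U$ restricts to an automorphism of $\mathcal{M}_L \cong \mathcal{M}_{d_L}$. By Skolem--Noether this automorphism is inner: there is a unitary $V_L$ with $\Ad_U(m_L) = V_L m_L V_L^\dagger$ for all $m_L$. Then $W := (V_L^\dagger\otimes \Id_{CR})U$ commutes with every element of $\mathcal{M}_L\otimes\Id_{CR}$. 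Hence $W \in \comm(\mathcal{M}_L) = \mathcal{M}_{CR}$ by the double-commutant fact used in \Cref{lemma:walls_as_commutators}, which gives $U = V_L \otimes W_{CR}$.

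\emph{($\Leftarrow$), case $\mathcal{A}_C = \mathcal{M}_C$.} Now $\overline{\mathcal{M}_L} = \mathcal{M}_{LC}$. By \Cref{thm:left_right_closure}, $\mathcal{B}_C \subseteq \comm_C(\mathcal{M}_C) = \langle\Id\rangle$, so $\overline{\mathcal{M}_R} = \mathcal{M}_R$. By \Cref{thm:left_right_equiv} the left/right roles are symmetric, so the previous case applied to $R$ yields $U = V_{LC}\otimes W_R$. Alternatively, $\Ad_U$ preserves $\mathcal{M}_{LC}$ and the same Skolem--Noether plus commutant argument gives this directly.

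\emph{($\Rightarrow$).} Suppose first that $U = V_L\otimes W_{CR}$. Then $\Ad_U^t(\mathcal{M}_L) = \mathcal{M}_L$ for all $t$, so $\overline{\mathcal{M}_L} = \mathcal{M}_L$ and $\mathcal{A}_C = \langle\Id\rangle$. Suppose instead that $U = V_{LC}\otimes W_R$. Then $\Ad_U^t(\mathcal{M}_R) = \mathcal{M}_R$, so $\mathcal{B}_C = \langle \Id\rangle$. We do not automatically get $\mathcal{A}_C = \mathcal{M}_C$, since for example $V_{LC} = \Id$ gives $\mathcal{A}_C=\langle\Id\rangle$. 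This is still an improper algebra, so the statement holds, but it shows that the criterion should really be read as ``$\mathcal{A}_C$ or $\mathcal{B}_C$ is trivial''.

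For the statement as written, I would handle this by proving the symmetric equivalence: $U$ is of product form iff $\mathcal{A}_C$ or $\mathcal{B}_C$ equals $\langle \Id\rangle$. This follows from the $\mathcal{A}_C=\langle\Id\rangle$ argument above and its mirror image for $R$. The case $\mathcal{A}_C=\mathcal{M}_C$ is then covered because it forces $\mathcal{B}_C$ to be trivial.

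\emph{Main obstacle.} The converse direction is clean. The step needing the most care is the automorphism step: one must check that $\Ad_U(\mathcal{M}_L)=\mathcal{M}_L$, which uses injectivity and finite dimension, before invoking Skolem--Noether. One must also make sure the resulting $V_L$ is unitary; this holds because the map is a $*$-automorphism, so its implementing matrix can be rescaled to a unitary.
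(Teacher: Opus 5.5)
Your \emph{if} direction follows the paper's route, with the details the paper leaves out. If $\mathcal{A}_C=\mathcal{M}_C$, then $\mathcal{B}_C\subseteq\comm_C(\mathcal{M}_C)=\langle\Id\rangle$. If $\mathcal{A}_C=\langle\Id\rangle$, then $\overline{\mathcal{M}_L}=\mathcal{M}_L$. Either way $\Ad_U$ preserves a full local matrix algebra, and $U$ factorises. The paper only asserts this last step (``acts locally on $R$, therefore factorisable''). Your Skolem--Noether argument, followed by $(V_L^\dagger\otimes\Id)U\in\comm(\mathcal{M}_L)=\mathcal{M}_{CR}$, is exactly the missing justification. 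You also handle $\mathcal{A}_C=\langle\Id\rangle$ directly, rather than going through \Cref{thm:left_right_equiv} as the paper does. This half is correct and complete.

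In the \emph{only if} direction you are more careful than the paper, whose proof only remarks that product unitaries are walls and never says anything about $\mathcal{A}_C$. However, your sentence ``This is still an improper algebra, so the statement holds'' is not justified. The example $V_{LC}=\Id$ is too benign, and the literal \emph{only if} is false. Take $\dim L=\dim C=2$ and
\begin{equation*}
U=\left(\Id_L\otimes\ket{0}\bra{0}_C+X_L\otimes\ket{1}\bra{1}_C\right)\otimes W_R .
\end{equation*}
Here $U^2=\Id_{LC}\otimes W_R^2$, $\Ad_U(X_L)=X_L$ and $\Ad_U(Z_L)=Z_L\otimes Z_C$. Hence $\overline{\mathcal{M}_L}=\mathcal{M}_L\otimes\mathrm{span}\{\Id_C,Z_C\}\otimes\Id_R$. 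So $\mathcal{A}_C$ is the two-dimensional diagonal algebra, which is neither $\langle\Id\rangle$ nor $\mathcal{M}_C$, even though $U=V_{LC}\otimes W_R$. Only $\mathcal{B}_C=\langle\Id\rangle$ holds, as you observed.

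Your symmetric reformulation is therefore the correct form of the theorem, not just a more natural reading: $U$ is of product form iff $\mathcal{A}_C=\langle\Id\rangle$ or $\mathcal{B}_C=\langle\Id\rangle$. ``$\mathcal{A}_C$ improper'' is sufficient for product form but not necessary. Your argument proves this corrected statement in full, and it proves the \emph{if} half of the statement as written. You should say explicitly that the \emph{only if} half cannot be proved as stated, citing a counterexample like the one above, rather than implying it holds.
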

\begin{proof}
    By straightforward application of $\Ad_U$ on left/right local operators, any bi-partite product unitary satisfies the wall condition without any additional algebraic restriction. We focus on the other side of the implication.
    If $\mathcal{A}_C =  \mathcal{M}_C $, then $\comm(\mathcal{A}_C) = \langle \Id \rangle$. From $\langle \Id \rangle \subseteq \mathcal
    B$ it follows that $\mathcal{B} = \langle \Id \rangle$. Then, since $\Ad_U^t(\mathcal{M}_R) \subseteq \mathcal{M}_R$, the wall acts locally on $R$ therefore $U$ is factorisable as a local unitary.
    For the case, $\mathcal{A}_C = \langle \Id \rangle$, a wall extends from $L$ to $RC$ and under \ref{thm:left_right_equiv}, this implies that there is a bounded light cone from $CR$ to $L$ also. Then, a similar reasoning establishes that the unitary is product across $L$ and $CR$. 
\end{proof}
\noindent \Cref{thm:0-walls} is equivalent to Lemma IV.1 of \cite{Kovacs2026}. There, a different proof technique was utilised based on the direct evaluation of the commutator norm as an operator entropy measure. We presented here an alternative proof based on the algebraic characterisation. The theorem implies that any non-trivial decoupling of the system involves the embedding of a proper sub-algebra $\mathcal{A}_C\subset \mathcal{M}_C$. This should be understood as a constraint on the maximum attainable Schmidt rank of operators, as they evolve under the unitary map, as we shall see in \Cref{sec:entanglement}. Unless otherwise stated in the following, we will focus on walls which are not reducible to a product unitary.
\section{Local conserved quantities}

\begin{figure}
    \centering
    \includegraphics[width=\linewidth]{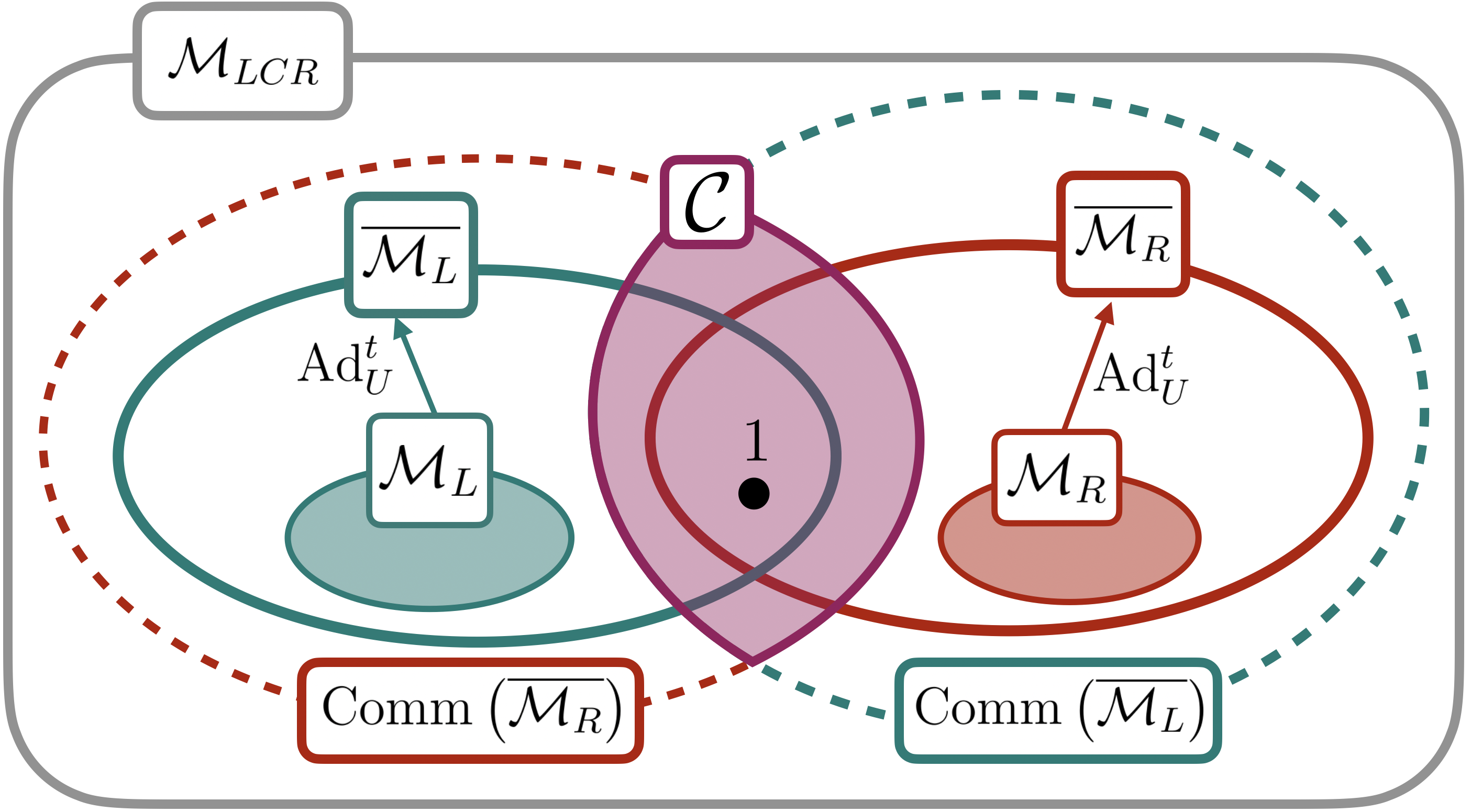}
    \caption{ \small Algebraic structure of conserved operators. A bounded light cone splits operator space with the algebra $\mathcal{C}$ of $C$-local conserved operators at the intersection of commutants from left-localised and right-localised operators. Local conserved charges can  be independent from left-right coupling (ie. uncoupled subsystems) which are not elements of the localised algebras. } 
    \label{fig:venn_diagram}
\end{figure}

 We have seen that the wall condition is equivalent to the invariance of commuting sub-algebras overlapping on $C$. In this section, we ask under what condition is this accompanied by a conservation law: the existence of $C$-local operators which evolve locally in the $C$ subsystem.  We consider the abstract algebraic characterisation and move on to representation theory in the later sections. 

\begin{definition}[Local conserved charge]
    Let $x\in \mathcal{M}_C$. We say that $x$ is a local conserved charge if $\mathrm{Ad}^t_U(x) \in \mathcal{M}_C$ for all $t\geq0$.
    Clearly, the local conserved charges form an algebra $\mathcal{C} \subseteq \mathcal{M}_C$.
\end{definition}

\noindent The above definition includes operators which commute with the unitary but local conserved charges are more generic, i.e. they include central operators which evolve up to an arbitrary phase, ie. $c$ such that $[c, U] = \exp(i\theta_c)c$ for some real scalar $\theta_c$. 
There is, however, a superoperator which will be left invariant, as opposed to up to a phase.

\begin{lemma}
    A projection superoperator to $\mathcal{C}$ the algebra of conserved charges is a commuting superoperator.
\end{lemma}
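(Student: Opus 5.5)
Throughout, "commuting superoperator" is read as: the projection superoperator $P_{\mathcal{C}}$ onto $\mathcal{C}$ commutes with the adjoint evolution, $P_{\mathcal{C}}\circ\Ad_U = \Ad_U\circ P_{\mathcal{C}}$. The projection is taken orthogonal with respect to the Hilbert--Schmidt inner product $\langle X,Y\rangle = \tr(X^\dagger Y)$ on $\mathcal{M}_{LCR}$; here $\mathcal{C}$ is understood as $\Id_L\otimes\mathcal{C}\otimes\Id_R$. The plan is to show that $\mathcal{C}$ is an invariant subspace of $\Ad_U$ and that its orthogonal complement is invariant as well. Then $\Ad_U$ is block-diagonal with respect to the decomposition $\mathcal{M}_{LCR}=\mathcal{C}\oplus\mathcal{C}^{\perp}$, and so it commutes with $P_{\mathcal{C}}$.

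\emph{Step 1: $\Ad_U(\mathcal{C})=\mathcal{C}$.} Take $x\in\mathcal{C}$. By definition $\Ad_U^{t}(x)\in\mathcal{M}_C$ for all $t\ge 0$. Hence $y=\Ad_U(x)\in\mathcal{M}_C$, and also $\Ad_U^{t}(y)=\Ad_U^{t+1}(x)\in\mathcal{M}_C$ for every $t$, so $y\in\mathcal{C}$. This gives $\Ad_U(\mathcal{C})\subseteq\mathcal{C}$. Since $\Ad_U$ is injective and $\mathcal{C}$ is finite-dimensional, the inclusion is an equality, and therefore $\Ad_U^{-1}(\mathcal{C})=\mathcal{C}$ too. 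Alternatively, one could use the minimal-polynomial argument from the first proof of \Cref{thm:left_right_equiv}, which writes $\Ad_U^{-1}$ as a polynomial in $\Ad_U$.

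\emph{Step 2: $\Ad_U(\mathcal{C}^\perp)\subseteq\mathcal{C}^\perp$.} The key fact is that $\Ad_U$ is unitary for the Hilbert--Schmidt inner product:
\begin{equation*}
\tr\big((UXU^\dagger)^\dagger UYU^\dagger\big)=\tr(X^\dagger Y).
\end{equation*}
Its adjoint is therefore $\Ad_U^{\dagger}=\Ad_{U^{-1}}$. Now let $z\in\mathcal{C}^\perp$ and $x\in\mathcal{C}$. Then $\langle x,\Ad_U z\rangle=\langle \Ad_{U^{-1}}x, z\rangle$, and this vanishes because $\Ad_{U^{-1}}x\in\mathcal{C}$ by Step 1. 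So $\Ad_U(\mathcal{C}^\perp)\subseteq\mathcal{C}^\perp$.

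\emph{Step 3: conclusion.} Write an arbitrary operator as $X=c+z$ with $c\in\mathcal{C}$ and $z\in\mathcal{C}^\perp$. By Steps 1 and 2, $\Ad_U(c)\in\mathcal{C}$ and $\Ad_U(z)\in\mathcal{C}^\perp$. Hence
\begin{equation*}
P_{\mathcal{C}}\Ad_U(X)=\Ad_U(c)=\Ad_U P_{\mathcal{C}}(X).
\end{equation*}
As a remark, $P_{\mathcal{C}}$ is a unital, trace-preserving conditional expectation, because $\mathcal{C}$ is a $*$-subalgebra containing $\Id$. This matches the paper's phrasing of $P_{\mathcal{C}}$ as a superoperator symmetry invariant under conjugation by $\Ad_U$.

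The main obstacle is Step 1, specifically the backward-time invariance $\Ad_{U^{-1}}(\mathcal{C})\subseteq\mathcal{C}$. The definition of $\mathcal{C}$ only refers to $t\ge0$, and finite-dimensionality is what supplies the missing direction. Step 2 needs exactly this, so it is the point to get right.
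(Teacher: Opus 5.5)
Your proof is correct and follows the paper's argument: show $\mathcal{C}$ is invariant under $\Ad_U$ and closed under $\Ad_U^{-1}$, deduce that $\mathcal{C}^\perp$ is invariant, and conclude from the decomposition $\mathcal{C}\oplus\mathcal{C}^\perp$. For closure under $\Ad_U^{-1}$ the paper cites the polynomial argument of \Cref{thm:left_right_equiv}, whereas your injectivity-plus-finite-dimension argument is an equally valid shortcut; you also state explicitly the Hilbert--Schmidt unitarity $\Ad_U^\dagger=\Ad_{U^{-1}}$ that the paper uses implicitly to pass from $\Ad_U^{-1}$-invariance of $\mathcal{C}$ to $\Ad_U$-invariance of $\mathcal{C}^\perp$.
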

\begin{proof}
    Let $P_{\mathcal{C}}$ be the orthogonal (according to the Hilbert-Schmidt inner product) projection superoperator onto $\mathcal{C}$.
    Let $c \in \mathcal{C}$ and $d \in \mathcal{C}_\perp$, the orthogonal complement.
    Since $P_\mathcal{C} c = c$, we have $\Ad_U P_\mathcal{C} c = \Ad_U c = \Ad_U P_\mathcal{C} c$.
    From the same logic as \Cref{thm:left_right_equiv}, $\mathcal{C}$ is closed under $\Ad_U^{-1}$ and hence $\mathcal{C}_\perp$ is closed under $\Ad_U$.
    Since $P_\mathcal{C} d = 0$, we have $\Ad_U P_\mathcal{C} d = 0 = P_\mathcal{C} \Ad_U d$.
    Since $\mathcal{C} \oplus \mathcal{C}_\perp$ is the whole space, $\Ad_U$ and $P_\mathcal{C}$ commute.
\end{proof}
\noindent In the following, we can treat conserved charge and $U$-invariant local sub-algebras on the same footing. The algebra $\mathcal{C}$ can be thought of as a symmetry of the adjoint super-operator which has been discussed in the literature as a model for non-universal dynamics \cite{Lastres2026}. Super-operator symmetries imply an invariant operator subspace without needing any particular operator to be conserved.

\begin{theorem}
    The algebra of local conserved charges on $C$ is the intersection of commutants: \begin{equation}\mathcal
    C = \comm\left ({\overline{\mathcal{M}_L}} \right) \cap \comm\left ({\overline{\mathcal{M}_R}} \right).\end{equation}
    \label{thm:central_conserved_charge}
\end{theorem}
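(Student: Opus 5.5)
The plan is to prove the two inclusions separately. All commutants are taken in the full algebra $\mathcal{M}_{LCR}$, and an element $x\in\mathcal{M}_C$ is identified with $\Id_L\otimes x\otimes\Id_R$. Two facts will be used repeatedly.
\begin{itemize}
\item $\Ad_U$ is an algebra automorphism, so $\Ad_U(\comm(\mathcal{S}))=\comm(\Ad_U(\mathcal{S}))$ for any set $\mathcal{S}$.
\item By \Cref{cor:invariant_algebras}, $\Ad_U(\overline{\mathcal{M}_L})\subseteq\overline{\mathcal{M}_L}$. Since $\Ad_U$ is injective and the space is finite-dimensional, this is an equality, $\Ad_U(\overline{\mathcal{M}_L})=\overline{\mathcal{M}_L}$. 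The same holds for $\overline{\mathcal{M}_R}$.
\end{itemize}

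\emph{Inclusion $\supseteq$.} Let $x\in\comm(\overline{\mathcal{M}_L})\cap\comm(\overline{\mathcal{M}_R})$.
\begin{itemize}
\item Since $\mathcal{M}_L\subseteq\overline{\mathcal{M}_L}$ and $\mathcal{M}_R\subseteq\overline{\mathcal{M}_R}$, $x$ commutes with $\mathcal{M}_L$ and with $\mathcal{M}_R$.
\item By the double-commutant facts used in \Cref{lemma:walls_as_commutators}, $\comm(\mathcal{M}_L)\cap\comm(\mathcal{M}_R)=\mathcal{M}_{CR}\cap\mathcal{M}_{LC}=\mathcal{M}_C$. Hence $x\in\mathcal{M}_C$.
\item By the two facts above, $\Ad_U$ maps $\comm(\overline{\mathcal{M}_L})$ onto itself, and likewise $\comm(\overline{\mathcal{M}_R})$. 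So the intersection is $\Ad_U$-invariant.
\item Therefore $\Ad_U^t(x)$ lies in the intersection, and hence in $\mathcal{M}_C$, for every $t\ge0$. This says exactly that $x\in\mathcal{C}$.
\end{itemize}

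\emph{Inclusion $\subseteq$.} Let $x\in\mathcal{C}$. Since $\overline{\mathcal{M}_L}$ is generated by the sets $\Ad_U^s(\mathcal{M}_L)$ with $s\ge0$, it suffices to show $[x,\Ad_U^s(\mathcal{M}_L)]=0$ for all $s\ge0$, and similarly for $R$. Applying $\Ad_U^{-s}$, this is equivalent to
\begin{equation*}
[\Ad_U^{-s}(x),\mathcal{M}_L]=0 .
\end{equation*}
This condition involves \emph{negative} times, whereas the definition of $\mathcal{C}$ only controls $t\ge0$. I expect this to be the main obstacle.

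To handle it, I will reuse the polynomial argument from the first proof of \Cref{thm:left_right_equiv}, with the roles of $\Ad_U$ and $\Ad_U^{-1}$ exchanged.
\begin{itemize}
\item Let $f$ be the minimal polynomial of $\Ad_U$. Since $\Ad_U$ has no zero eigenvalue, $x\nmid f(x)$, so $\gcd(x,f)=1$.
\item Bezout's identity then gives a polynomial $s$ with $\Ad_U^{-1}=s(\Ad_U)$.
\item $\mathcal{C}$ is a linear subspace invariant under $\Ad_U$, because $\Ad_U^{t}(\Ad_U x)=\Ad_U^{t+1}x\in\mathcal{M}_C$. It is therefore invariant under every polynomial in $\Ad_U$, and in particular under $\Ad_U^{-1}$.
\end{itemize}
Hence $\Ad_U^{-s}(x)\in\mathcal{C}\subseteq\mathcal{M}_C$ for all $s\ge0$, and so it commutes with $\mathcal{M}_L$ and with $\mathcal{M}_R$. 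It follows that $x$ commutes with every generator $\Ad_U^s(\mathcal{M}_L)$ and $\Ad_U^s(\mathcal{M}_R)$. A commutant is an algebra, so $x$ commutes with the algebras they generate, namely $\overline{\mathcal{M}_L}$ and $\overline{\mathcal{M}_R}$. This gives $x\in\comm(\overline{\mathcal{M}_L})\cap\comm(\overline{\mathcal{M}_R})$.

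Both inclusions together give the claimed equality. As a consistency check, \Cref{thm:left_right_closure} shows the intersection equals $\Id_L\otimes\big(\comm_C(\mathcal{A}_C)\cap\comm_C(\mathcal{B}_C)\big)\otimes\Id_R$, which is manifestly $C$-local.
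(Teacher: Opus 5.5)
Your proposal is correct and follows the paper's proof essentially step for step. Both arguments get locality from $\comm(\mathcal{M}_L)\cap\comm(\mathcal{M}_R)=\mathcal{M}_C$, use $\Ad_U$-invariance of the commutants for $\supseteq$, and use the minimal-polynomial/Bezout trick from \Cref{thm:left_right_equiv} to reach negative times for $\subseteq$. The differences are minor: you apply that trick to show $\mathcal{C}$ is closed under $\Ad_U^{-1}$ (a closure the paper itself notes in the preceding lemma) rather than rewriting $\Ad_U^t(\mathcal{M}_L)$ via powers of $\Ad_U^{-1}$, and you explicitly justify $\Ad_U(\overline{\mathcal{M}_L})=\overline{\mathcal{M}_L}$ by finite-dimensional injectivity, which the paper's conservation step uses without comment.
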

\begin{proof}
    First, we note that $\comm(\mathcal{M}_L) = \mathcal{M}_{CR}$ and $\comm(\mathcal{M}_R) = \mathcal{M}_{LC}$, whence $\comm(\mathcal{M}_L) \cap \comm(\mathcal{M}_R) = \mathcal{M}_C$.
    From the construction of $\overline{\mathcal{M}_L}$, we have $\mathcal{M}_L \subseteq \overline{\mathcal{M}_L}$ and similar for subsystem R.
    The inclusion ordering is reversed by taking commutants, thus $\comm(\overline{\mathcal{M}_L}) \subseteq \comm(\mathcal{M}_L)$.
    Therefore, $\comm(\overline{\mathcal{M}_L}) \cap \comm(\overline{\mathcal{M}_R}) \subseteq \mathcal{M}_C$, i.e. it is local to subsystem C.

    Now we turn to showing that it is conserved.
    Let $x \in \comm(\overline{\mathcal{M}_L}) \cap \comm(\overline{\mathcal{M}_R})$ and $y = \Ad_U(x)$.
    Since $[x, \overline{\mathcal{M}_L}] = 0$, we have $[y, \Ad_U \left(\overline{\mathcal{M}_L}\right)] = 0$.
    But $\overline{\mathcal{M}_L}$ is invariant under $\Ad_U$.
    Hence, $y \in \comm(\overline{\mathcal{M}_L})$ and by a similar argument must also be in $\comm(\overline{\mathcal{M}_R})$.
    Therefore, $\comm(\overline{\mathcal{M}_L}) \cap \comm(\overline{\mathcal{M}_R})$ consists of local conserved charges on C and is thus contained in $\mathcal{C}$.

    Conversely, since $\mathcal{C}$ is invariant and local to C we have $[\Ad_U^t (\mathcal{C}), \mathcal{M}_L] = 0$ for all $t \ge 0$.
    We rewrite this as $[\mathcal{C}, \Ad_U^{-t} (\mathcal{M}_L)] = 0$ for all $t \ge 0$ and follow the same logic as in \Cref{thm:left_right_equiv} to find $[\mathcal{C}, \overline{\mathcal{M}_L}] = 0$. The same can be done for $\mathcal{M}_R$.
    Hence $\mathcal{C}$ is contained in $\comm(\overline{\mathcal{M}_L}) \cap \comm(\overline{\mathcal{M}_R})$.
    Therefore, the two algebras are in fact equal.
\end{proof}
\begin{corollary}
    $\langle \Id \rangle \subseteq \mathcal{C}$. Non-trivial intersection imply non-trivial conserved quantities under \Cref{thm:central_conserved_charge}. 
\end{corollary}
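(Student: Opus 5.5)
The corollary has two parts. The first is the inclusion $\langle \Id \rangle \subseteq \mathcal{C}$. The second says that $\mathcal{C}$ contains more than scalars exactly when the commutants $\comm_C(\mathcal{A}_C)$ and $\comm_C(\mathcal{B}_C)$ intersect non-trivially. I plan to read both off \Cref{thm:central_conserved_charge}, after rewriting its intersection in terms of the central algebras from \Cref{thm:left_right_closure}.

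For the first part, I would argue directly. The identity lies in $\mathcal{M}_C$ and satisfies $\Ad_U^t(\Id)=U^t U^{-t}=\Id\in\mathcal{M}_C$ for every $t\ge 0$, so it is a local conserved charge. As a cross-check, $\Id$ commutes with every operator, so it lies in both $\comm(\overline{\mathcal{M}_L})$ and $\comm(\overline{\mathcal{M}_R})$. \Cref{thm:central_conserved_charge} then places it in $\mathcal{C}$ again. Since $\mathcal{C}$ is an algebra, the whole span $\langle \Id\rangle$ is contained in it.

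For the second part, I would use \Cref{thm:left_right_closure} to write $\overline{\mathcal{M}_L}=\mathcal{M}_L\otimes\mathcal{A}_C\otimes\Id_R$ and $\overline{\mathcal{M}_R}=\Id_L\otimes\mathcal{B}_C\otimes\mathcal{M}_R$. Their commutants in $\mathcal{M}_{LCR}$ factorise as
\begin{equation*}
\comm(\overline{\mathcal{M}_L})=\Id_L\otimes\comm_C(\mathcal{A}_C)\otimes\mathcal{M}_R,\qquad \comm(\overline{\mathcal{M}_R})=\mathcal{M}_L\otimes\comm_C(\mathcal{B}_C)\otimes\Id_R .
\end{equation*}
This follows from the tensor-product commutant theorem for finite-dimensional von Neumann algebras, $\comm(\mathcal{X}\otimes\mathcal{Y})=\comm(\mathcal{X})\otimes\comm(\mathcal{Y})$, which I would cite from \Cref{app:commutants}, together with $\comm(\mathcal{M}_L)=\langle\Id_L\rangle$. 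Intersecting the two commutants then gives
\begin{equation*}
\mathcal{C}=\Id_L\otimes\big(\comm_C(\mathcal{A}_C)\cap\comm_C(\mathcal{B}_C)\big)\otimes\Id_R .
\end{equation*}
So $\mathcal{C}\neq\langle\Id\rangle$ exactly when $\comm_C(\mathcal{A}_C)\cap\comm_C(\mathcal{B}_C)\neq\langle\Id\rangle$. In that case any non-scalar element of the intersection is a non-trivial $C$-local conserved quantity.

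The main obstacle is the intersection step. An intersection of tensor-product subspaces does not factorise in general, so I would not take it as given. I would prove it by taking $x$ in both commutants. Because $x$ commutes with $\mathcal{M}_L$ and with $\mathcal{M}_R$, it lies in $\mathcal{M}_C$. Because it also commutes with $\mathcal{A}_C$ and $\mathcal{B}_C$, its $C$-factor lies in both central commutants. The reverse inclusion is immediate.
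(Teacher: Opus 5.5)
Your proposal is correct and follows the paper's reasoning, which the paper leaves unstated: both claims are read off from \Cref{thm:central_conserved_charge}, since $\Id$ lies in every commutant and $\mathcal{C}$ is exactly $\comm(\overline{\mathcal{M}_L})\cap\comm(\overline{\mathcal{M}_R})$. Your extra reduction to $\comm_C(\mathcal{A}_C)\cap\comm_C(\mathcal{B}_C)$ is also valid (the paper uses this identity without proof in the next corollary), and the step you flagged as the main obstacle is automatic, because intersections of product subspaces do factorise, $(U_1\otimes W_1)\cap(U_2\otimes W_2)=(U_1\cap U_2)\otimes(W_1\cap W_2)$, though your direct argument works equally well.
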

\begin{corollary}[Commutative algebras are conserved]
    If $\mathcal{A}$ or $\mathcal{B}$ is Abelian then every element is a conserved charge.
\end{corollary}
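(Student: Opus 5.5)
The plan is to verify membership in $\mathcal{C}$ directly using the characterisation of \Cref{thm:central_conserved_charge}, $\mathcal{C} = \comm(\overline{\mathcal{M}_L}) \cap \comm(\overline{\mathcal{M}_R})$, together with the explicit form of the embedded algebras from \Cref{thm:left_right_closure}. By the left--right symmetry of the setup, it suffices to treat the case where $\mathcal{A}_C$ is Abelian. The Abelian $\mathcal{B}_C$ case then follows by exchanging the roles of $L$ and $R$.

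Fix $a \in \mathcal{A}_C$ and consider the $C$-local operator $\Id_L \otimes a \otimes \Id_R$. The first step is to show that it commutes with $\overline{\mathcal{M}_L} = \mathcal{M}_L \otimes \mathcal{A}_C \otimes \Id_R$. This algebra is spanned by simple tensors $m \otimes a' \otimes \Id_R$, so it is enough to check these. We have
\begin{equation*}
[\Id_L \otimes a \otimes \Id_R,\, m \otimes a' \otimes \Id_R] = m \otimes [a, a'] \otimes \Id_R,
\end{equation*}
and this vanishes because $\mathcal{A}_C$ is Abelian. Extending by linearity gives $\Id_L \otimes a \otimes \Id_R \in \comm(\overline{\mathcal{M}_L})$.

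The second step is to show that the same operator commutes with $\overline{\mathcal{M}_R} = \Id_L \otimes \mathcal{B}_C \otimes \mathcal{M}_R$. On simple tensors the commutator is $\Id_L \otimes [a,b] \otimes m_R$. This vanishes by the mutual commutation $\mathcal{A}_C \subseteq \comm_C(\mathcal{B}_C)$ established in \Cref{thm:left_right_closure}, and this step needs no commutativity assumption. Combining the two steps places $\Id_L \otimes a \otimes \Id_R$ in the intersection of commutants. By \Cref{thm:central_conserved_charge}, it is therefore a local conserved charge, so $\mathcal{A}_C \subseteq \mathcal{C}$.

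I do not expect a genuine obstacle. The only point needing care is the interpretation of the statement: ``every element'' should mean every element of $\mathcal{A}_C$ (respectively $\mathcal{B}_C$), viewed as the $C$-local operator $\Id_L \otimes a \otimes \Id_R$. It should not mean every element of the full embedded algebra $\overline{\mathcal{M}_L}$, which contains non-conserved $L$-local operators. Under this reading, the argument only uses the easy inclusion of simple tensors into the commutant, and never the full commutant identity for tensor products.
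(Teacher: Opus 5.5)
Your proof is correct and follows the paper's argument, which likewise obtains $\mathcal{A}_C \subseteq \mathcal{C}$ from $\mathcal{A}_C \subseteq \comm_C(\mathcal{A}_C)$ (by Abelianness), $\mathcal{A}_C \subseteq \comm_C(\mathcal{B}_C)$, and the commutant characterisation of $\mathcal{C}$. The only difference is that the paper writes $\mathcal{C} = \comm_C(\mathcal{A}_C)\cap\comm_C(\mathcal{B}_C)$ directly, implicitly relying on the commutant-of-a-tensor-product identity, whereas you verify the needed inclusion by hand on simple tensors, which uses only the easy direction.
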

\begin{proof}
    Abelian algebras are included in their commutants by definition, that is,  $\mathcal{A} \subseteq\comm_C(\mathcal{A})$. Since $\mathcal{A} \subseteq \comm_C(\mathcal{B})$, we have that $\mathcal{A} \subseteq \mathcal{C}= \comm_C(\mathcal{A}) \cap \comm_C(\mathcal{B})$ as desired. Same reasoning applies to commutative $\mathcal{B}$.
\end{proof}

\noindent The previous theorem establishes the general condition for existing conserved quantities as illustrated on \Cref{fig:venn_diagram}. From this splitting of operator space, there can be conserved quantities which don't participate in the dynamics of the $L-R$ coupling and therefore are conserved (these are independent subsystems in the circuit). For operators which do participate in the coupling, the conserved quantities are determined from the intersection of Abelian sub-algebras of $\mathcal{A}, \mathcal{B}$ with the commutants.

 For non-abelian embedded algebras, the wall can be constructed so that not every element of the algebra is conserved. Under \Cref{thm:left_right_closure}, if one chooses the right-localised algebra $\mathcal{B}_C$ to be equal to $\comm_C(\mathcal{A}_C)$, then the conserved charges are the centre of the embedded algebra $\mathcal{C} = \comm_C(\mathcal{A}_C)\cap \comm_C(\mathcal{B}_C) =  Z(\mathcal{A}_C)$ (see \Cref{app:commutants}). However, a finite non-abelian algebra is always the direct sum of full matrix blocks (see \Cref{app:rep_theory}) which have trivial centre spanning only the identity \cite{Bény2020}. In these cases, the conserved operators are the Abelian algebra generated by the projectors to the irreducible matrix blocks which is strictly lower dimensional than the full matrix block. 

Local operators translated in time (but not in space) up-to a phase are sometimes referred to as still-\enquote*{solitons} in the literature which lead to non-ergodic evolution in dual-unitary circuit dynamics \cite{Bertini2020_op_ent1, Bertini2020_op_ent2, HoldenDye2025fundamentalcharges}. Some wall instances will feature these solitons (up to the gauge freedom discussed in the previous section) with an Abelian embedded algebra. However, the bounded light cone constraint is more general in that there can be instances where there is central-local operator space fragment which is decoupled from the left/right subsystems and therefore do not have a local charge. In these instances, the operator space is split fragments that commute at the central subsystem but are not separable both at $L-C$ and $R-C$ boundary. In these instances, there is a \enquote*{hidden} subsystem which will link these instances to error-correcing codes. We will see an example of this in the following section. 

\section{Time-dependent walls from gauge freedom}

Not only time-periodic evolution can exhibit causal independence in the sense of \Cref{def:walls}. This will lead to an important restriction of our results due to gauge freedom, namely that the wall is only defined up to a global unitary isomorphism class.

\begin{lemma}[Wall sequence]
    Let $(U_{\tau})_{\tau = 1}^{\infty}$ be a sequence of unitaries where the algebra $\overline{\mathcal{M}_L}$ is left invariant at all times:
    \begin{equation}
        \Ad_{U_{\tau}} \left ( \overline{\mathcal{M}_L} \right) = \overline{\mathcal{M}_L} \text{ for all } \tau \ge 1.
    \end{equation}
    Then, the wall conditions hold for all $\tau \ge 1$ under the evolution induced by the sequence. We then refer to $(U_{\tau})_{\tau = 1}^{\infty}$ as a wall sequence.
    \label{lemma:wall_sequence}
\end{lemma}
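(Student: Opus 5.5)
We read $\overline{\mathcal{M}_L}$ here as a fixed algebra of the form supplied by \Cref{thm:left_right_closure}, namely $\overline{\mathcal{M}_L} = \mathcal{M}_L \otimes \mathcal{A}_C \otimes \Id_R$ for some subalgebra $\mathcal{A}_C \subseteq \mathcal{M}_C$. The evolution induced by the sequence up to time $t$ is $\Ad_{U_t}\circ\cdots\circ\Ad_{U_1}$. For this sequence, the left-wall condition of \Cref{def:walls} reads
\begin{equation*}
  \Ad_{U_t}\circ\cdots\circ\Ad_{U_1}(\mathcal{M}_L\otimes\Id_{CR})\subseteq \mathcal{M}_{LC}\otimes\Id_R .
\end{equation*}
The plan is a direct induction on $t$. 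We show that the time-evolved left-local algebra never leaves $\overline{\mathcal{M}_L}$, and then use the inclusion $\overline{\mathcal{M}_L}\subseteq\mathcal{M}_{LC}\otimes\Id_R$. The right-wall condition needs separate treatment, because \Cref{thm:left_right_equiv} was proven only for powers of a single unitary.

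First, the base case. Since $\mathcal{A}_C$ is a unital subalgebra, $\mathcal{M}_L\otimes\Id_{CR}\subseteq\overline{\mathcal{M}_L}$. For the inductive step, suppose the time-$(t-1)$ image lies in $\overline{\mathcal{M}_L}$. Applying $\Ad_{U_t}$ and using the hypothesis $\Ad_{U_t}(\overline{\mathcal{M}_L})=\overline{\mathcal{M}_L}$ keeps the time-$t$ image inside $\overline{\mathcal{M}_L}$. This in turn lies in $\mathcal{M}_{LC}\otimes\Id_R$, so the left-wall condition holds at every $\tau\ge1$.

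For the right side we take commutants. Since $\Ad_{U}$ is a $*$-automorphism, $\Ad_U(\comm(\mathcal{X}))=\comm(\Ad_U(\mathcal{X}))$. Hence each $\Ad_{U_\tau}$ also leaves $\comm(\overline{\mathcal{M}_L})$ invariant. For a product algebra, this commutant is $\Id_L\otimes\comm_C(\mathcal{A}_C)\otimes\mathcal{M}_R$ (computed in the appendix on commutants). It contains $\Id_{LC}\otimes\mathcal{M}_R$ and is contained in $\Id_L\otimes\mathcal{M}_{CR}$. Repeating the induction with $\comm(\overline{\mathcal{M}_L})$ in place of $\overline{\mathcal{M}_L}$ gives the right-wall condition along the sequence. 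This mirrors the statement of \Cref{thm:left_right_closure} with the maximal choice $\mathcal{B}_C=\comm_C(\mathcal{A}_C)$.

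The main obstacle is interpretive rather than technical. The lemma refers to $\overline{\mathcal{M}_L}$, which was originally defined from a single Floquet $U$. The proof needs only its tensor form $\mathcal{M}_L\otimes\mathcal{A}_C\otimes\Id_R$ with $\mathcal{A}_C$ unital, and I would state this explicitly. The one genuinely computational point is the commutant identity $\comm(\mathcal{M}_L\otimes\mathcal{A}_C\otimes\Id_R)=\Id_L\otimes\comm_C(\mathcal{A}_C)\otimes\mathcal{M}_R$. I would derive it from the commutation theorem for tensor products of finite-dimensional von Neumann algebras, or equivalently via the double-commutant argument already used in \Cref{lemma:walls_as_commutators}.
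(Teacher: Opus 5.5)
Your proof is correct and takes the same route as the paper. Invariance under each $\Ad_{U_\tau}$ keeps the evolved left-local algebra inside $\overline{\mathcal{M}_L}\subseteq\mathcal{M}_{LC}\otimes\Id_R$, and the right-wall part follows because unitary conjugation preserves commutation relations; your version of that step, via invariance of $\comm(\overline{\mathcal{M}_L})=\Id_L\otimes\comm_C(\mathcal{A}_C)\otimes\mathcal{M}_R$, is the precise form of the paper's one-line remark, which loosely claims invariance of $\overline{\mathcal{M}_R}$ itself (automatic only when $\mathcal{B}_C=\comm_C(\mathcal{A}_C)$), whereas the commutant is what actually yields the right-wall condition.
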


\begin{proof}
    Follows from \Cref{cor:invariant_algebras}. Note that that the invariance of $\overline{\mathcal{M}_R}$ follows automatically since commutation relations are preserved under any unitary isomorphism.
\end{proof}
The wall condition does not prevent a restricted notion of ergodicity within the localised subspaces. We may have wall sequences which have generic dynamics such as uniform mixing of operators. In this sense, it is an example of Hilbert-subspace ergodicity recently discussed in literature \cite{Logaric2025}. Under wall dynamics the subspaces are spatially localised subsystems.

\begin{figure}
    \centering
    \includegraphics[width=\linewidth]{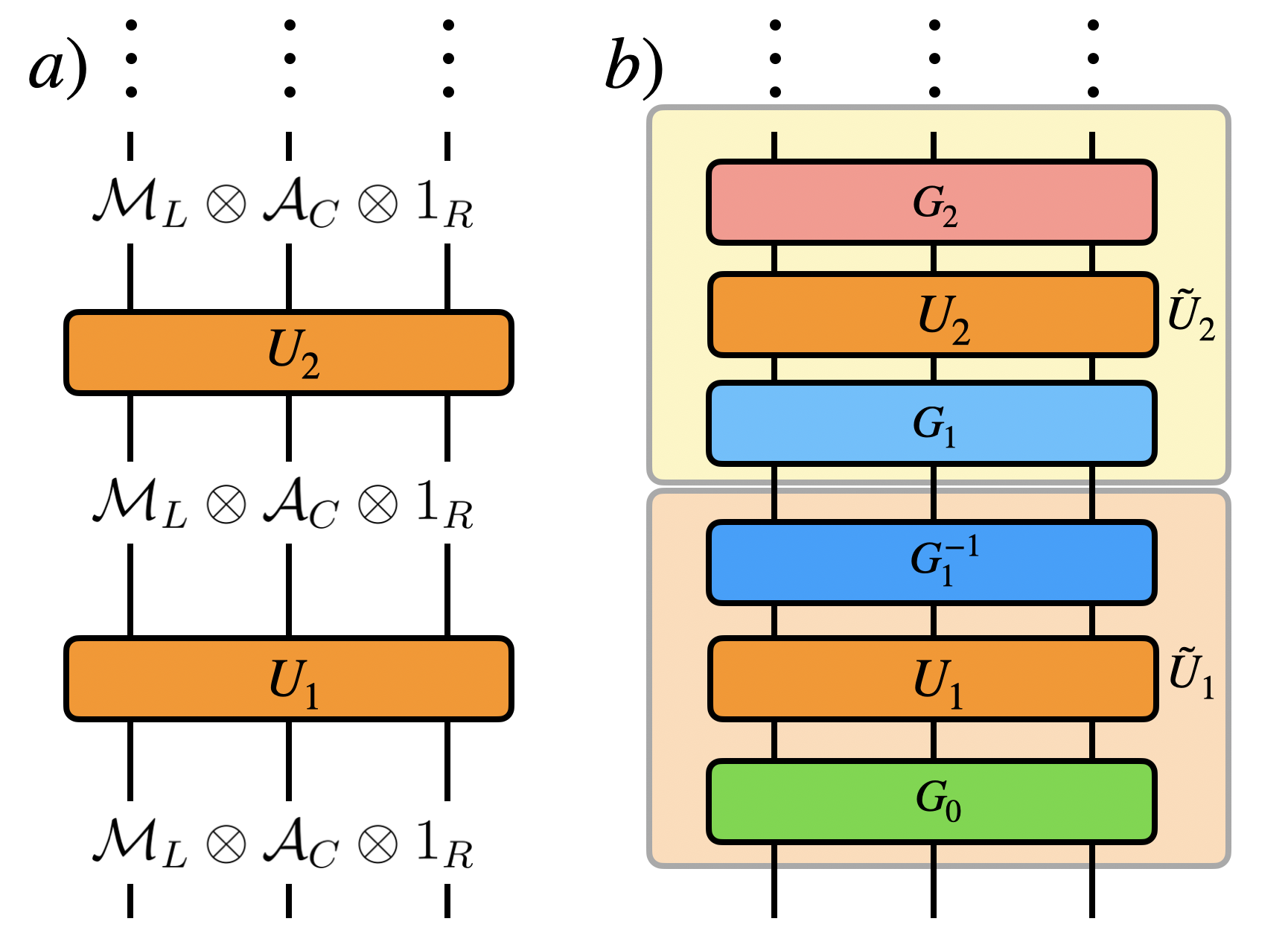}
    \caption{ \small Extending the wall condition to time-dependent unitary sequences. On \textit{a)}, the sequence consistes of elements of the normaliser (see \Cref{sec:rep_theory}) group of $\overline{\mathcal{M}_L}$. On \textit{b)}, we dress this sequence through gauge transformations, inducing a sequence of isomorphic sub-algebra transformations, changing the global operator basis at every timestep.}
    \label{fig:time_dep_extension}
\end{figure}

\begin{definition}[Gauged wall sequence]
    Let $(U_{\tau})_{\tau = 1}^{\infty}$ be a wall sequence under \Cref{lemma:wall_sequence}. We define a gauged wall sequence as $(\Tilde{U}_{\tau})_{\tau = 1}^{\infty}$ under suitably chosen gauge transformations $G_0, G_1, ... ,G_{\tau}$:
    \begin{equation}
        \Tilde{U}_{\tau} = \prod_{t=1}^{\tau}{G}^{-1}_tU_t{G_{t-1}},
    \end{equation}
    where $\Ad_{G_0}$ acts invariantly on $\overline{\mathcal{M}_L}$ and $G_{t}$ are arbitrary unitaries for $t>1$. 
    \label{def:gauged_wall_sequence}
\end{definition}

From the previous definition, a gauged wall sequence violates the wall conditions at timesteps $\tau = 1, 2, ...$ since the gauge transformations are not required to preserve localisation. However, due to the underlying wall-sequence, the evolution merely induces a sequence of isomorphic transformations: 

\begin{equation}
    \mathcal{L}_0 = \overline{\mathcal{M}_L} \overset{\Ad_{\Tilde{U}_1}}{\longrightarrow} \mathcal{L}_1 \overset{\Ad_{\Tilde{U}_2}}{\longrightarrow} \mathcal{L}_2 \overset{\Ad_{\Tilde{U}_3}}{\longrightarrow} ...,
\end{equation}
\noindent where $\mathcal{L}_i \cong \mathcal{L}_j$ for all $i, j \geq 0$, therefore the phenomenology of localisation still holds in a non-local operator basis. Another intepretation is that the localised operator evolution is \enquote*{obfuscated}, so that the observed operator evolution appears random to mask the invariance of a localised subspace. 

The gauged construction highlights that the wall condition is only well-defined up to the class of isomorphisms of the embedded sub-algebras due to the arbitrary choice of global operator basis in the circuit. The sequence above need not terminate as there are infinitely many isomorphic sub-algebras. If the sequence recurs, then the gauged sequence reduces to a a time-periodic wall sequence under coarse-graining time. In the following theorem, we formalise this intuition.
\begin{theorem}
    Consider the sequence of isomorphic algebras $(\mathcal{L}_{\tau})_{\tau=1}^{\infty}$ under the evolution of a gauged wall sequence $(\Tilde{U}_{\tau})_{\tau = 1}^{\infty}$. If the sequence recurs, the evolution is generated by a time-periodic wall sequence acting invariantly on a static sub-algebra. That is, there exists gauge transformations $G, H$ so that:
    \begin{equation}
        \Tilde{V}_{\tau} = \left( G \left (\prod_{t=1}^{T-1}    \Tilde{U}_t\right)H \right)^{\tau}.
    \end{equation}
    $(\Tilde{V}_{\tau})_{\tau=1}^{\infty}$ is then a time-periodic wall sequence which leaves the static algebra $\mathcal{L}$ invariant at every step. $T$ is the finite recurrence time (or equivalently the size) of the sequence of algebras.
\end{theorem}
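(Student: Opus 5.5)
The plan is to turn recurrence of the algebra sequence into a single unitary that maps a fixed algebra back onto itself. That unitary then satisfies the hypothesis of \Cref{lemma:wall_sequence} with a constant sequence, and the gauges $G,H$ only serve to conjugate the recurrent algebra back to the static one.

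First I fix notation. Write $W_\tau = \tilde U_\tau \cdots \tilde U_1$ for the cumulative evolution, so that $\mathcal{L}_\tau = \Ad_{W_\tau}(\mathcal{L}_0)$ with $\mathcal{L}_0 = \overline{\mathcal{M}_L}$. Recurrence means there exist $s < s+T$ with $\mathcal{L}_{s+T} = \mathcal{L}_s$; after relabelling I take $s$ to be the first recurrent index, so that $T$ is the period of the orbit. For the statement as written, where the product starts at $t=1$, I take $s=1$ and the recurrent block $P = \prod_{t=1}^{T-1}\tilde U_t$ read as $\tilde U_{T}\cdots\tilde U_{2}$ (or shift indices accordingly). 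The key property is
\begin{equation*}
\Ad_P(\mathcal{L}_s) = \mathcal{L}_{s+T} = \mathcal{L}_s .
\end{equation*}
This holds because $\Ad$ of a product is the composition of the $\Ad$ maps, and each step carries $\mathcal{L}_{\tau-1}$ onto $\mathcal{L}_\tau$ isomorphically.

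Second, I construct the gauges. Since every $\mathcal{L}_\tau$ is unitarily conjugate to $\overline{\mathcal{M}_L}$ (it is the image under $\Ad_{W_\tau}$), I choose $H = W_{s}$, so that $\Ad_H(\overline{\mathcal{M}_L}) = \mathcal{L}_s$, and $G = H^{-1}$. Then
\begin{equation*}
\Ad_{GPH}(\overline{\mathcal{M}_L}) = \Ad_{H^{-1}}\Ad_P(\mathcal{L}_s) = \Ad_{H^{-1}}(\mathcal{L}_s) = \overline{\mathcal{M}_L}.
\end{equation*}
The same then holds for every power $(GPH)^\tau$. It follows that $\tilde V_\tau$ leaves the static algebra $\mathcal{L} := \overline{\mathcal{M}_L}$ invariant at every step. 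Applying \Cref{lemma:wall_sequence} to the constant sequence $GPH$, or equivalently using \Cref{cor:invariant_algebras} together with \Cref{thm:left_right_closure} (the latter because $\mathcal{L}$ has the form $\mathcal{M}_L\otimes\mathcal{A}_C\otimes\Id_R$), gives the wall property. Alternatively one may keep $\mathcal{L} = \mathcal{L}_s$ with $G = H = \Id$; the gauged version just returns to the local operator basis. Periodicity is immediate, since $\tilde V_\tau$ is the $\tau$-th power of a fixed unitary. Conjugating back, the original evolution at times $s + kT$ equals $H\,\tilde V_k\,H^{-1}$, i.e. it is generated by the periodic wall up to a fixed gauge.

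Finally, I show that $T$ equals the size of the sequence. If $s$ and $T$ are chosen minimal, the sequence $\mathcal{L}_s,\dots,\mathcal{L}_{s+T-1}$ consists of distinct algebras. The main subtlety is that recurrence of the algebra sequence does not by itself imply periodicity of the unitary sequence $\tilde U_\tau$: after the first period the individual gates may differ. I will handle this by defining the coarse-grained sequence through the block products $P_k = \tilde U_{s+kT}\cdots\tilde U_{s+(k-1)T+1}$. Each $P_k$ fixes $\mathcal{L}_s$ by the same argument, so the gauged blocks $G P_k H$ form a wall sequence in the sense of \Cref{lemma:wall_sequence}. They are literally periodic when the underlying gates repeat, which is the reading of the theorem's formula. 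I expect this indexing and interpretation step, rather than the algebra, to be the main point requiring care.
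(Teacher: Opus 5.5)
Your proposal is correct and follows essentially the paper's route: sandwich the product of gates over one recurrence period between gauge unitaries that carry a fixed algebra onto the recurrent one and back, so that this single unitary, and hence each of its powers, leaves a static algebra invariant. Your choices $\mathcal{L}=\overline{\mathcal{M}_L}$, $G=H^{-1}$ and $H=W_s$ are, if anything, tidier than the paper's: they make the wall property follow directly from \Cref{lemma:wall_sequence}, and they avoid the paper's reversibility reduction to a \enquote{single loop}, which tacitly assumes the same gates act at every return. However, your side remark that a minimal period consists of distinct algebras fails for genuinely time-dependent gates (e.g.\ if $\mathcal{L}_\tau$ cycles as $\mathcal{X},\mathcal{Y},\mathcal{X},\mathcal{Z},\mathcal{X},\dots$), so identifying $T$ with the \enquote{size} of the sequence needs that same autonomy assumption.
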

\begin{proof}
If sequence returns to itself in finite $T$, then there exists $ P\leq T $, so that:
    \begin{equation}
        \Ad_{\Tilde{U}_{T}}(\mathcal{L}_T) = \mathcal{L}_P.
    \end{equation}
The unitary evolution is reversible, so we have: 
\begin{align}
    \Ad_{\Tilde{U}^{-1}_{P}}(\mathcal{L}_P) &= \mathcal{L}_{P-1}  = \mathcal{L}_T.
\end{align}
By iteration of the reverse evolution, the algebras up to $P-1$ included already between indices $P$ to $T$. Therefore, the sequence reduces to a single loop and we take $\mathcal{L}_{T+1} = \mathcal{L}_1$ without loss of generality.

In this case, we choose the gauge transformation according to:
\begin{align}
    \Ad_H (\mathcal{L}) &= \mathcal{L}_1, \\
    \Ad_G (\mathcal{L}_T) &= \mathcal{L}, 
\end{align}

\noindent where we have introduced an arbitrarily chosen isomorphic sub-algebra $\mathcal{L} \cong \mathcal{L}_i$. Then, by substitution, the transformed, time-periodic, sequence $\Tilde{V}_{\tau} = \left( G \left (\prod_{t=1}^{T-1}    \Tilde{U}_t\right)H \right)^{\tau}$ leaves the sub-algebra $\mathcal{L}$ invariant at every step. 
\end{proof}

We expect that the converse of the above statement does not hold, since periodic repetitions can give rise to aperiodic sequences in the algebras which will not lead to a strict recurrence. One should view the above statement as an analogue of Floquet's theorem showing that static invariant sub-algebras are always associated with time-periodic evolution \cite{Floquet1883, Shirley1965}. An interesting outstanding question is how to construct continuous time-evolution (either periodic or non-periodic) to preserve the wall structure. This would involve understanding the relation between Lie algebras in Hamiltonian evolution and the desired $\mathrm{C}^*$-algebraic invariant structure  that we leave for future work \cite{Humphreys1972IntroLieAlgebras, MathieuVillena2003LieDerivations, BresarKissinShulman2008LieIdeals}.

\section{Structure of brickwork walls \label{sec:rep_theory}}
\subsection{Representation theory}
\begin{figure}
    \centering
    \includegraphics[width=\linewidth]{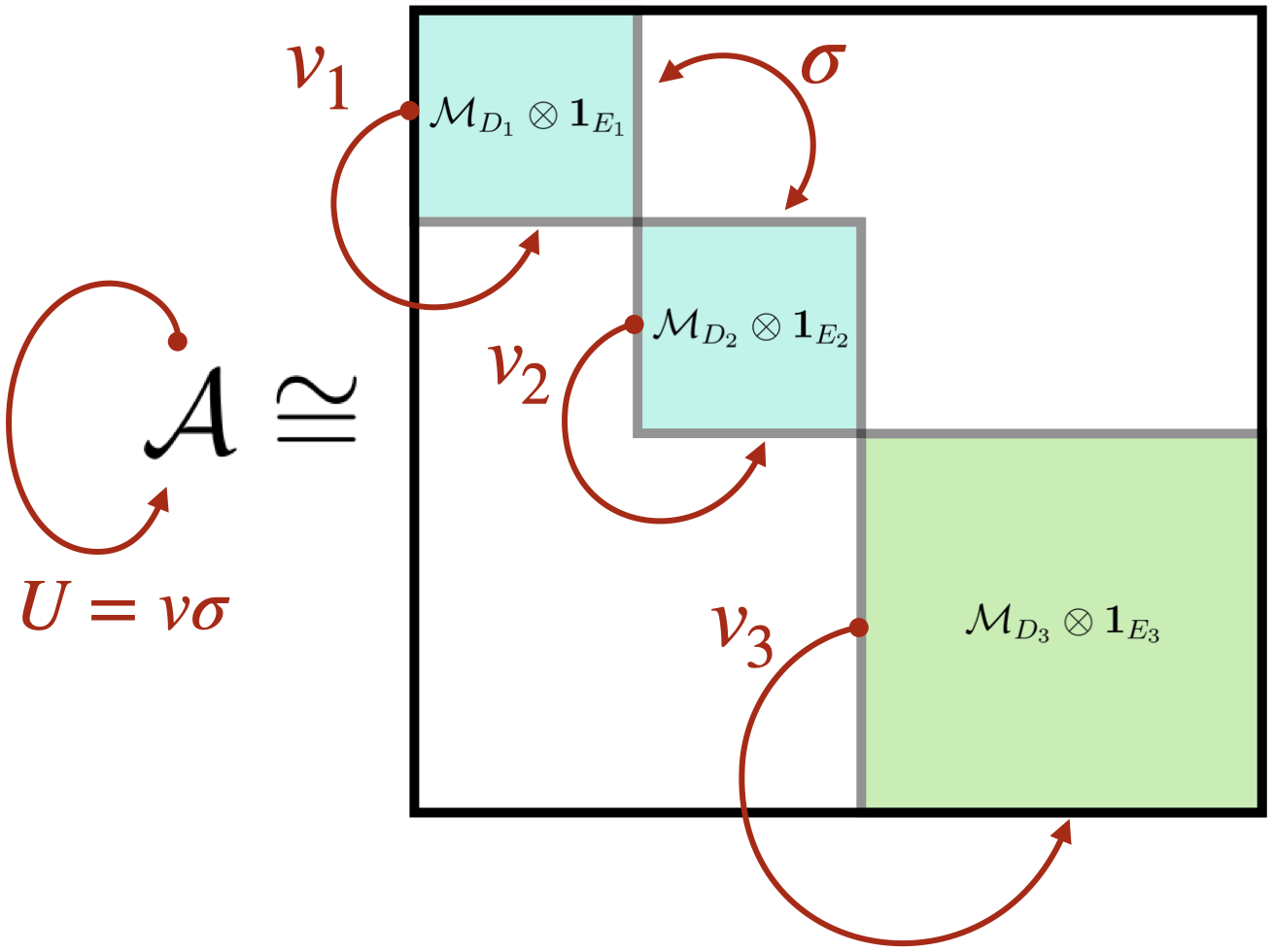}
    \caption{The structure of unitary automorphisms of an operator algebra $\mathcal{A}$. $\mathcal{A}$ decomposes as a direct sum of full matrix blocks, some of which are equivalent in terms of the subspace dimensions $D_i$, $E_i$.
    Elements of the automorphism group $\mathrm{U}(\mathcal{A})$ decompose as $U=v\sigma$ with $v$ acting within each irreducible block as $v_i$ and between the blocks by a permutation $\sigma$ of equivalent (but independent) subspaces.}
    \label{fig:automorphism_struct}
\end{figure}
In this section, we utilise the representation theory of finite $\mathrm{C}^*$-algebras in order to construct the detailed structure of wall unitaries in brickwork circuits and that of local conserved charges.  We will focus on representations of a left-wall for convenience  and refer to formal literature on representations of finite algebras in \Cref{app:rep_theory} while this section focuses on their application. From \Cref{cor:invariant_algebras}, a wall unitary leaves an embedded sub-algebra invariant. This implies that the wall is a representation of a unitary automorphism of the sub-algebra. While this invariance property uniquely identifies that $U$ is a wall, it does not uniquely specify the unitary's parameters because walls associated with a particular algebra form a unitary sub-group. In this section, we will derive the properties of this group which in turn will specify the structure of all wall unitaries with a particular algebra.

The invariance of the embedded algebra induces a splitting of Hilbert space under a global unitary isomorphism (a choice of operator basis): 
\begin{equation}
    \mathcal{H}_{LCR} \cong \bigoplus_{i=1}^{n_i} = \mathcal
    {H}_L\otimes\mathcal{H}_{
    \mathcal{D}_i} \otimes \mathcal{H}_{\mathcal{E}_i} \otimes \mathcal{H}_R.
\end{equation}
where $\mathcal{D}_i$, $\mathcal{E}_i$ are subspaces of $\mathcal{H}_C$ hosting irreducible matrix blocks of $\mathcal{A}_C$. Let us denote $\dim \mathcal{D}_i = D_i$ and $\dim \mathcal{E}_i = E_i$ as the subspace dimensions. $n_i$ denotes the number of irreducible blocks. Note the dimensional constraint $\sum_{i=1}^{n_i} {D}_i{E}_i = \dim \mathcal{H}_C$ .

Then, we can represent the algebra as full matrix algebras within irreducible blocks according to:
\begin{equation}
    \mathcal{A}_C \cong \bigoplus_{i=1}^{n_i} \mathcal{M}_{\mathcal{D}_i} \otimes \Id_{\mathcal{E}_i}.
    \label{thm:block_structure}
\end{equation}

\noindent The algebra's dimension satisfies $\dim \mathcal{A}_C = \sum_{i=1}^{n_i}( {D}_i)^2$. The commutant algebra follows the same block structure:
\begin{equation}
    \comm(\mathcal{A}_C) \cong \bigoplus_{i=1}^{n_i} \Id _{\mathcal{D}_i} \otimes \mathcal{M}_{\mathcal{E}_i}.
\end{equation}

As we show in \Cref{app:normaliser}, operators that leave an algebra invariant form an automorphism group $\mathrm{U}\left ({{\mathcal{A}_C}} \right)$ that we call the \textit{normaliser} following the terminology from quantum error correction \cite{kribs2006operatorquantumerrorcorrection}. These are unitary maps that leave a code-space invariant, ie. they are logical operations of the encoded information. The group elements act as arbitrary unitaries in the irreducible blocks of $\mathcal{A}_C$ up to permutations of equivalent matrix blocks as on \Cref{fig:automorphism_struct}. Two subspaces are equivalent if both their algebraic dimension $D_i$ and degeneracy space dimension $E_i$ are equal. In \Cref{app:normaliser}, we formalise the preceding intuitive statements to prove the following general theorem, which may be of independent interest.

\begin{theorem}[Structure theorem for the unitary automorphisms]
Up to a global isomorphism, the normaliser of a finite matrix sub-algebra $\mathcal{A}\subseteq\mathcal{M}$ decomposes as:
   \begin{align}
    \mathrm{U}(\mathcal{A}) \cong \left( \bigoplus_{i \in I} \left(\mathrm{U}_{\mathcal{D}_i} \otimes \mathrm{U}_{\mathcal{E}_i} \right)\right) \rtimes \Sigma
  \end{align}
  where $\Sigma$ is the automorphism group for the factors of $\mathcal{A}$ (see \Cref{app:normaliser}). The semidirect product is defined by the right-action,
  \begin{align}
    \left(\bigoplus_{i \in I} v_i \otimes v_i\right) \triangleleft \sigma = \sum_{i \in I} v_{\sigma^{-1}(i)} \times v_{\sigma^{-1}(i)}
  \end{align}
  which simply rearranges equivalent representations of $\mathcal{A}$.
  \label{thm:wall_structure}
\end{theorem}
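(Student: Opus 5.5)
We first put $\mathcal{A}$ in its canonical form. By the Wedderburn/Artin structure theory recalled in \Cref{app:rep_theory}, there is a global unitary isomorphism $\mathcal{H}\cong\bigoplus_{i\in I}\mathcal{H}_{\mathcal{D}_i}\otimes\mathcal{H}_{\mathcal{E}_i}$ under which $\mathcal{A}\cong\bigoplus_i \mathcal{M}_{\mathcal{D}_i}\otimes\Id_{\mathcal{E}_i}$ and $\comm(\mathcal{A})\cong\bigoplus_i\Id_{\mathcal{D}_i}\otimes\mathcal{M}_{\mathcal{E}_i}$. The centre is then $Z(\mathcal{A})=\mathrm{span}\{P_i\}$, where the $P_i$ are the minimal central projections onto the blocks. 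We work in this basis throughout.

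Take a unitary $U$ with $\Ad_U(\mathcal{A})=\mathcal{A}$. The map $\Ad_U|_{\mathcal{A}}$ is a $*$-automorphism of $\mathcal{A}$, so it preserves the centre. It therefore permutes the minimal central projections: $UP_iU^\dagger=P_{\sigma(i)}$ for some permutation $\sigma$ of $I$. Taking traces of the block projections gives $D_iE_i=D_{\sigma(i)}E_{\sigma(i)}$. Since $\Ad_U$ maps the factor $P_i\mathcal{A}\cong\mathcal{M}_{D_i}$ isomorphically onto $P_{\sigma(i)}\mathcal{A}$, we also get $D_i=D_{\sigma(i)}$, hence $E_i=E_{\sigma(i)}$. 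So $\sigma$ lies in the group $\Sigma$ of permutations that preserve the pairs $(D_i,E_i)$. This group is the automorphism group of the labelled factors.

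For each $\sigma\in\Sigma$, fix a canonical unitary $S_\sigma$ that maps $\mathcal{H}_{\mathcal{D}_i}\otimes\mathcal{H}_{\mathcal{E}_i}\to\mathcal{H}_{\mathcal{D}_{\sigma(i)}}\otimes\mathcal{H}_{\mathcal{E}_{\sigma(i)}}$ by identifying the fixed bases. Then $\sigma\mapsto S_\sigma$ is a homomorphism and $S_\sigma\in\mathrm{U}(\mathcal{A})$. The element $v=US_\sigma^{-1}$ fixes every $P_i$, so it is block diagonal, $v=\bigoplus_i v_i$. Each $v_i$ normalises $\mathcal{M}_{\mathcal{D}_i}\otimes\Id_{\mathcal{E}_i}$ inside $\mathcal{M}_{\mathcal{D}_i}\otimes\mathcal{M}_{\mathcal{E}_i}$.

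The key step is to show that such a $v_i$ factorises as $a_i\otimes b_i$. First, every automorphism of the full matrix algebra $\mathcal{M}_{D_i}$ is inner (Skolem--Noether). Hence there is a unitary $a_i$ with $v_i(m\otimes\Id)v_i^\dagger=a_ima_i^\dagger\otimes\Id$ for all $m$. It follows that $(a_i^\dagger\otimes\Id)v_i$ commutes with $\mathcal{M}_{\mathcal{D}_i}\otimes\Id$. By the commutant computation it lies in $\Id\otimes\mathcal{M}_{\mathcal{E}_i}$, so it equals $\Id\otimes b_i$ with $b_i$ unitary. This gives the decomposition $U=v S_\sigma$, with $v\in N:=\bigoplus_i(\mathrm{U}_{\mathcal{D}_i}\otimes\mathrm{U}_{\mathcal{E}_i})$. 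The factorisation is unique because $\sigma$ is determined by $U$.

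Conversely, every element of $N$ and every $S_\sigma$ with $\sigma\in\Sigma$ preserves $\mathcal{A}$, so $\mathrm{U}(\mathcal{A})=N\cdot\{S_\sigma\}$. We have $N\cap\{S_\sigma\}=\{\Id\}$. Also $N$ is normal, being the kernel of the homomorphism $U\mapsto\sigma$. Finally, conjugation $S_\sigma^{-1}(\bigoplus_i a_i\otimes b_i)S_\sigma$ relabels the blocks, producing $\bigoplus_i a_{\sigma(i)}\otimes b_{\sigma(i)}$ up to orientation convention. This is the stated right action, so the product is the claimed semidirect product.

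The main obstacle is the block factorisation step. It relies on Skolem--Noether for $\mathcal{M}_{D_i}$ together with the commutant identification, and it must be done consistently with the basis identifications behind $S_\sigma$ so that the semidirect action is exactly a relabelling. A secondary point to check is that the global phase ambiguity between $a_i$ and $b_i$ causes no trouble. It does not, since it is absorbed in the tensor product.
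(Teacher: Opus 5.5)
Your proposal is correct and follows essentially the same route as the paper's proof: $\Ad_U$ permutes the minimal central projections by some $\sigma$ that preserves $(D_i,E_i)$, so $\sigma\in\Sigma$; a canonical block permutation is split off so the remainder is block diagonal; each block is shown to be a product unitary via Skolem--Noether; and the block-diagonal normal subgroup together with the complement $\{S_\sigma\}$ gives the semidirect product. The differences are minor streamlinings. You factorise each block with a single Skolem--Noether application plus $\comm(\mathcal{M}_{\mathcal{D}_i}\otimes\Id_{\mathcal{E}_i})=\Id_{\mathcal{D}_i}\otimes\mathcal{M}_{\mathcal{E}_i}$, which fixes $v_i=a_i\otimes b_i$ as an operator, whereas the paper applies it to both $\mathcal{A}_i$ and $\mathcal{A}'_i$ and so determines $v_i$ only through $\Ad_{v_i}$, i.e.\ up to a phase. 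You also obtain normality as the kernel of $U\mapsto\sigma$ rather than by direct conjugation.
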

\begin{proof}
    See \Cref{app:normaliser}.
\end{proof}

From this, the structure of wall unitaries follow straightforwardly. The $L, R$ subsystems evolve without constraint hence we need only extend the irreducible blocks to the left and the right subsystems, since the automorphism group a full matrix algebra $\mathcal{M}_L$, $\mathcal{M}_R$ is any unitary on $L, R$ respectively under the Skolem-Noether theorem (see \Cref{app:rep_theory}).

\begin{corollary}
    Consier a brickwork wall unitary $U = V_{LC}W_{CR}$ composed of bi-partite gates. Then, there is a $C$-local gauge transformation $G$, so that $U = \Tilde{V} \Tilde{W} = (VG^{-1})(GW)$ with the following decompositions: 
    \begin{align}
        \Tilde{V} &= \bigoplus_{i\in I} \tilde T^{\Pi(i)}_{L \mathcal D_{i}} \otimes r^{\Pi(i)}_{\mathcal E_{i}},\\
        \Tilde{W} &= \bigoplus_{i \in I} t^{\pi(i)}_{\mathcal D_{i}} \otimes \tilde R^{\pi(i)}_{\mathcal E_{i} R},
    \end{align} 
    where $t^i, r^i$ are arbitrary unitaries within the irreducible subspaces satisfying $T^i = \Tilde{T}^it^i$ and $R^i = \Tilde{R}^i r^i$. $\Pi, \pi$ are arbitrary permutations of equivalent block labels.
\end{corollary}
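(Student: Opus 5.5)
Write $U=V_{LC}W_{CR}$ and let $\mathcal{A}_C$ be the embedded algebra from \Cref{thm:left_right_closure}, so that $\overline{\mathcal{M}_L}=\mathcal{M}_L\otimes\mathcal{A}_C\otimes\Id_R$ is invariant under $\Ad_U$ by \Cref{cor:invariant_algebras}. The plan is to split this invariance into conditions on the two gates separately.

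The first step is to show that $W$ maps $\overline{\mathcal{M}_L}$ to an isomorphic algebra $\mathcal{A}'_C$ that is still localised on $LC$. Since $W$ acts trivially on $L$, $\Ad_W(\mathcal{M}_L\otimes\mathcal{A}_C\otimes\Id_R)$ contains $\mathcal{M}_L$. It also equals $\Ad_{V^{-1}}(\overline{\mathcal{M}_L})$, and $V^{-1}$ acts trivially on $R$, so this algebra commutes with $\mathcal{M}_R$. By the double commutant argument of \Cref{lemma:walls_as_commutators}, it therefore lies in $\mathcal{M}_{LC}$. \Cref{lemma:subalgebra_product} then gives
\[
\Ad_W(\mathcal{M}_L\otimes\mathcal{A}_C)=\mathcal{M}_L\otimes\mathcal{A}'_C
\]
for some $\mathcal{A}'_C\cong\mathcal{A}_C$. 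Moreover $\Ad_V$ maps $\mathcal{M}_L\otimes\mathcal{A}'_C$ back onto $\mathcal{M}_L\otimes\mathcal{A}_C$.

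The second step chooses a $C$-local unitary $G$ with $\Ad_G(\mathcal{A}'_C)=\mathcal{A}_C$. Such a $G$ exists because both algebras are unital subalgebras of $\mathcal{M}_C$ with the same block data $(D_i,E_i)$. The block data agree because $\Ad_W$ restricted to $\mathcal{M}_L\otimes\mathcal{A}_C\otimes\Id_R$ is implemented by a unitary on $\mathcal{H}_{LCR}$, which preserves the multiplicities $E_i$ after tracing out the full factors $L$ and $R$. Equality of the block data is the point to verify carefully.

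Once $G$ is chosen, $\tilde W=GW$ and $\tilde V=VG^{-1}$ each lie in the normaliser of $\mathcal{M}_L\otimes\mathcal{A}_C\otimes\Id_R$. Note that $\tilde V$ acts only on $LC$ and $\tilde W$ only on $CR$.

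The third step applies \Cref{thm:wall_structure} to each gate. The block decomposition of $\mathcal{M}_L\otimes\mathcal{A}_C$ is $\bigoplus_i(\mathcal{M}_L\otimes\mathcal{M}_{\mathcal{D}_i})\otimes\Id_{\mathcal{E}_i\otimes R}$. For $\tilde V$, which acts trivially on $R$, the theorem gives
\[
\tilde V=\Big(\bigoplus_i X_{L\mathcal{D}_i}\otimes r_{\mathcal{E}_i}\Big)\Pi .
\]
For $\tilde W$, which acts trivially on $L$, the factor acting on $L\mathcal{D}_i$ must be $\Id_L\otimes t_{\mathcal{D}_i}$, which gives $\bigoplus_i t_{\mathcal{D}_i}\otimes \tilde R_{\mathcal{E}_iR}$ together with a permutation $\pi$. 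Relabelling the permutations and absorbing the products into $\tilde T,\tilde R$ then yields the stated forms with $T^i=\tilde T^it^i$ and $R^i=\tilde R^ir^i$.

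The main obstacle is the locality bookkeeping in this last step. One has to show that a normaliser element acting trivially on $L$ (respectively $R$) has its tensor factors restricted accordingly. The block permutation $\sigma$ must be realised by a unitary acting on $C$ alone, identifying equivalent $\mathcal{D}_i\otimes\mathcal{E}_i$ blocks. To handle this I would take a partial trace over $L$, using $\tilde W\in\Id_L\otimes\mathcal{M}_{CR}$, and use that elements of $\Id_L\otimes\mathcal{M}_{CR}$ commute with $\mathcal{M}_L\otimes\Id$. This forces the $L\mathcal{D}_i$ factor into $\Id_L\otimes\mathcal{M}_{\mathcal{D}_i}$, up to phases that can be moved into the other factor.
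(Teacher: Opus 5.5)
Your proposal is correct and follows the same route as the paper. You insert a gauge transformation between the two brickwork layers so that each gate separately normalises $\mathcal{M}_L\otimes\mathcal{A}_C\otimes\Id_R$, then read off the block form from the structure theorem for unitary automorphisms together with the locality of each gate; your version is more complete than the paper's sketch, since you justify why $G$ can be taken $C$-local (via $\Ad_W(\overline{\mathcal{M}_L})=\mathcal{M}_L\otimes\mathcal{A}'_C\otimes\Id_R$ and equality of the block data $(D_i,E_i)$) and you replace the paper's appeal to inner automorphisms of $\mathcal{M}_L,\mathcal{M}_R$ with an explicit commutation argument forcing the $L$ (resp.\ $R$) tensor factor of each block to be trivial.
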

\begin{proof}
    Since the wall is only well-defined up to global isomorphisms under \Cref{def:gauged_wall_sequence}, one can carry out an arbitrary gauge transformation in between brickwork layers whilst preserving the splitting of the space (which is a change of local operator basis in between layers). The structure of the constituent bi-partite gates then follows from the more general structure under \Cref{thm:wall_structure}. The $\Tilde{T}, \Tilde{R}$ unitaries act on $L\mathcal{D}_i$ and $\mathcal{E}_iR$ respectively as the $L, R$ only have inner automorphisms (any unitary on the space). The irreducible block permutations of the two gates need not commute.
\end{proof}

In the next section, the explicit construction of wall unitaries will highlight that unitaries acting as block permutations on the algebra don't follow the same block-diagonal form as the decomposed algebra which implies that gates in brickwork walls 
need not commute. As a result, a wall unitary is more generic than unitaries with local symmetries. Note that the above two results do not require the minimality of the region $C$ containing the wall. If $C$ is composed of multiple subsystems, we define minimality of a wall so that one can not enlarge the left and right subsystems by factors of $C$ without breaking the localisation condition. We conclude the section by characterising the algebra of operators that commute with the wall-unitary in terms of the Schmidt vectors of the unitary. 

\begin{lemma}
  The set of operators contained in $C$ that commute with $F$ is
  \begin{align}
      \mathcal Q_C = \bigoplus_i \mathrm{Comm}\mathcal T_{\mathcal D_i} \otimes \mathrm{Comm}\mathcal R_{\mathcal E_i}
  \end{align}
  where $\mathcal T_{\mathcal D_i}$ is the $C^*$-algebra generated by the Schmidt vectors of $T^i_{L \mathcal D_i}$ on $\mathcal D_i$, and $\mathcal R_{\mathcal E_i}$ is the $C^*$-algebra generated by the Schmidt vectors of $R^i_{\mathcal E_i R}$ on $\mathcal E_i$.
\end{lemma}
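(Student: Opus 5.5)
The plan is to reduce the commutation condition $[x,F]=0$ for $x\in\mathcal M_C$ to a family of block-wise conditions. Each block-wise condition can then be read off from the operator-Schmidt decompositions of the constituent gates. Throughout, I take $F$ in the gauge-fixed form of the preceding corollary with trivial permutations, i.e.\ $F=\bigoplus_i T^i_{L\mathcal D_i}\otimes R^i_{\mathcal E_i R}$. I expect block permutations to be handled by relabelling.

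\emph{Step 1 (block diagonality).} Any $x\in\mathcal M_C$ with $[x,F]=0$ satisfies $\Ad_F^t(x)=x\in\mathcal M_C$ for all $t$, so $x$ is a local conserved charge. By \Cref{thm:central_conserved_charge}, $x\in\comm(\overline{\mathcal M_L})\cap\comm(\overline{\mathcal M_R})$. In particular $x$ commutes with $\mathcal A_C$, and hence with its centre, which is spanned by the block projectors $P_i$ onto $\mathcal H_{\mathcal D_i}\otimes\mathcal H_{\mathcal E_i}$. Therefore $x=\bigoplus_i x_i$ with $x_i$ acting on $\mathcal D_i\otimes\mathcal E_i$. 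Multiplying $[x,F]=0$ on both sides by $P_i$ then decouples the condition into $[x_i,\,T^i\otimes R^i]=0$ for each $i$ separately. This is the step I expect to be the main obstacle. Without invoking the conserved-charge theorem, the $L$- and $R$-Schmidt operators of different blocks may be linearly dependent, so one cannot directly separate blocks by linear independence. If the appeal to \Cref{thm:central_conserved_charge} turned out insufficient, I would instead try to show directly that the $P_i$ lie in the algebra generated by the $C$-side coefficients of $F$ and $F^\dagger$ with respect to fixed operator bases of $\mathcal M_L$ and $\mathcal M_R$.

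\emph{Step 2 (Schmidt expansion within a block).} Write $T^i=\sum_a \ell^i_a\otimes d^i_a$ and $R^i=\sum_b e^i_b\otimes r^i_b$ as operator-Schmidt decompositions. Here $\{\ell^i_a\}\subset\mathcal M_L$ and $\{r^i_b\}\subset\mathcal M_R$ are linearly independent, and the Schmidt vectors $d^i_a,e^i_b$ act on $\mathcal D_i$ and $\mathcal E_i$ respectively. Since $x_i$ acts on $C$ only, the condition becomes
\begin{equation*}
\sum_{a,b}\ell^i_a\otimes[x_i,\,d^i_a\otimes e^i_b]\otimes r^i_b=0 .
\end{equation*}
Linear independence of $\{\ell^i_a\otimes r^i_b\}$ then gives $[x_i,d^i_a\otimes e^i_b]=0$ for all $a,b$. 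Hence $x_i$ commutes with the $\mathrm{C}^*$-algebra generated by $\{d^i_a\otimes e^i_b\}$. The converse implication is immediate by the same expansion.

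\emph{Step 3 (tensor commutant).} I need to identify the algebra generated by $\{d^i_a\otimes e^i_b\}$ with $\mathcal T_{\mathcal D_i}\otimes\mathcal R_{\mathcal E_i}$. The key is to note that the linear spans of $\{d_a\}$ and of $\{e_b\}$ contain operators allowing one to isolate $d\otimes(\cdot)$ and $(\cdot)\otimes e$ factors by taking linear combinations. Unitarity of $T^i$ and $R^i$ ensures the generated algebras are unital; I expect to use $T^{i\dagger}T^i=\Id$ to see the identity lies in them. I would then apply the commutant theorem for tensor products of finite-dimensional von Neumann algebras, $\comm(\mathcal T\otimes\mathcal R)=\comm\mathcal T\otimes\comm\mathcal R$ (\Cref{app:commutants}). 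This gives $x_i\in\comm\mathcal T_{\mathcal D_i}\otimes\comm\mathcal R_{\mathcal E_i}$. Summing over blocks yields $\mathcal Q_C$, and the reverse inclusion follows by reading Steps 2 and 1 backwards.
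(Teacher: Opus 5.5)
Your Steps 1 and 2 are essentially right. Step 1 also supplies something the paper's proof only asserts: why $[x,F]=0$ splits into the block conditions $[x_i,T^i\otimes R^i]=0$.

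\textbf{The gap is Step 3.} You identify the $\mathrm{C}^*$-algebra $\mathcal G_i$ generated by the products $d^i_a\otimes e^i_b$ with $\mathcal T_{\mathcal D_i}\otimes\mathcal R_{\mathcal E_i}$. This does not follow from the Schmidt decomposition and unitarity. In general only $\mathcal G_i\subseteq\mathcal T_{\mathcal D_i}\otimes\mathcal R_{\mathcal E_i}$ holds, i.e.\ $\comm\mathcal G_i\supseteq\comm\mathcal T_{\mathcal D_i}\otimes\comm\mathcal R_{\mathcal E_i}$. That is the wrong direction for proving $\mathcal Q_C\subseteq\bigoplus_i\comm\mathcal T_{\mathcal D_i}\otimes\comm\mathcal R_{\mathcal E_i}$.

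Isolating $d_a\otimes\Id$ by linear combinations would need $\Id_{\mathcal E_i}\in\mathrm{span}\{e_b\}$. Unitarity only gives $\Id\in\mathrm{span}\{e_b^\dagger e_{b'}\}$, which removes Schmidt factors in pairs, not singly. For example, take $T=\Id_L\otimes Z$ and $R=X\otimes\Id_R$ on qubits. Then $\mathcal G=\mathrm{span}\{\Id,Z\otimes X\}$, and $X\otimes Z$ commutes with $T\otimes R$ but not with $Z\otimes\Id\in\mathcal T\otimes\mathcal R$.

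This is not a valid instance of the lemma: its closure algebra is $\langle\Id\rangle$, not $\mathcal M_{\mathcal D}\otimes\Id_{\mathcal E}$. That is exactly the point. Whatever makes the conclusion true must come from the hypothesis that the $\mathcal D_i,\mathcal E_i$ are the irreducible blocks of the actual invariant algebra $\mathcal A_C$, and Step 3 never uses it. The paper's own proof makes the same unjustified jump in its ``in other words'' sentence.

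\textbf{The fix} is to use Step 1 at full strength rather than only the centre.
\begin{itemize}
\item Since $x$ commutes with $\overline{\mathcal M_L}\supseteq\Id_L\otimes\mathcal A_C\otimes\Id_R$, you get $x\in\comm_C(\mathcal A_C)=\bigoplus_i\Id_{\mathcal D_i}\otimes\mathcal M_{\mathcal E_i}$. So $x_i=\Id_{\mathcal D_i}\otimes y_i$, not merely some block-diagonal operator.
\item Your Step 2 condition then reads $d^i_a\otimes[y_i,e^i_b]=0$, hence $[y_i,e^i_b]=0$ for all $b$.
\item Run the same expansion on $F^\dagger=F^{-1}$, which $x$ also commutes with. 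This gives $[y_i,e^{i\dagger}_b]=0$. You need this to reach a $\mathrm{C}^*$-algebra rather than just the algebra generated by the $d_a\otimes e_b$, and it is missing from your Step 2.
\item Thus $y_i\in\comm\mathcal R_{\mathcal E_i}$ and $x_i\in\comm\mathcal T_{\mathcal D_i}\otimes\comm\mathcal R_{\mathcal E_i}$. Neither Step 3 nor the tensor-commutant theorem is needed.
\item The reverse inclusion is immediate, since such $x_i$ commute with every $d^i_a\otimes e^i_b$.
\end{itemize}

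In this setting one also has $\mathcal T_{\mathcal D_i}=\mathcal M_{\mathcal D_i}$, so the first tensor factor in the formula is trivial. The reason is that the block-$i$ part of $\overline{\mathcal M_L}$, namely $\mathcal M_L\otimes\mathcal M_{\mathcal D_i}\otimes\Id_{\mathcal E_i}$, is generated by the operators $\Ad_{(T^i)^t}(m_L\otimes\Id)$. Their $\mathcal D_i$-components all lie in $\mathcal T_{\mathcal D_i}$.
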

\begin{proof}
  If $Q_C$ commutes with the Floquet unitary $u$ then $Q_C^\dagger$ commutes with $F$ too. Also, $Q_C$ and $Q_C^\dagger$ commute with $T^i_{L \mathcal D_i} \otimes R^i_{\mathcal E_i R}$ for all $i$, and hence, they also commute with all the elements of the support of $T^i_{L \mathcal D_i} \otimes R^i_{\mathcal E_i R}$ on $\mathcal D_i \otimes\mathcal D_i$, and their Hermitian conjugates.

  In other words, $Q_C$ commutes will all elements of $\bigoplus_i \mathcal T_{\mathcal D_i} \otimes \mathcal R_{\mathcal E_i}$, where $\mathcal T_{\mathcal D_i}$ is the $C^*$-algebra generated by the Schmidt vectors of $T^i_{L \mathcal D_i}$ on $\mathcal D_i$, and $\mathcal R_{\mathcal E_i}$ is the $C^*$-algebra generated by the Schmidt vectors of $R^i_{\mathcal E_i R}$ on $\mathcal E_i$. Equivalently, $Q_C$ belongs to the commutant algebra
  \begin{align}
      \mathcal Q_C = \bigoplus_i \mathrm{Comm}\mathcal T_{\mathcal D_i} \otimes \mathrm{Comm}\mathcal R_{\mathcal E_i}.
  \end{align}
\end{proof}

\subsection{Examples}
\begin{figure*}
    \centering
    \includegraphics[width=0.7\textwidth]{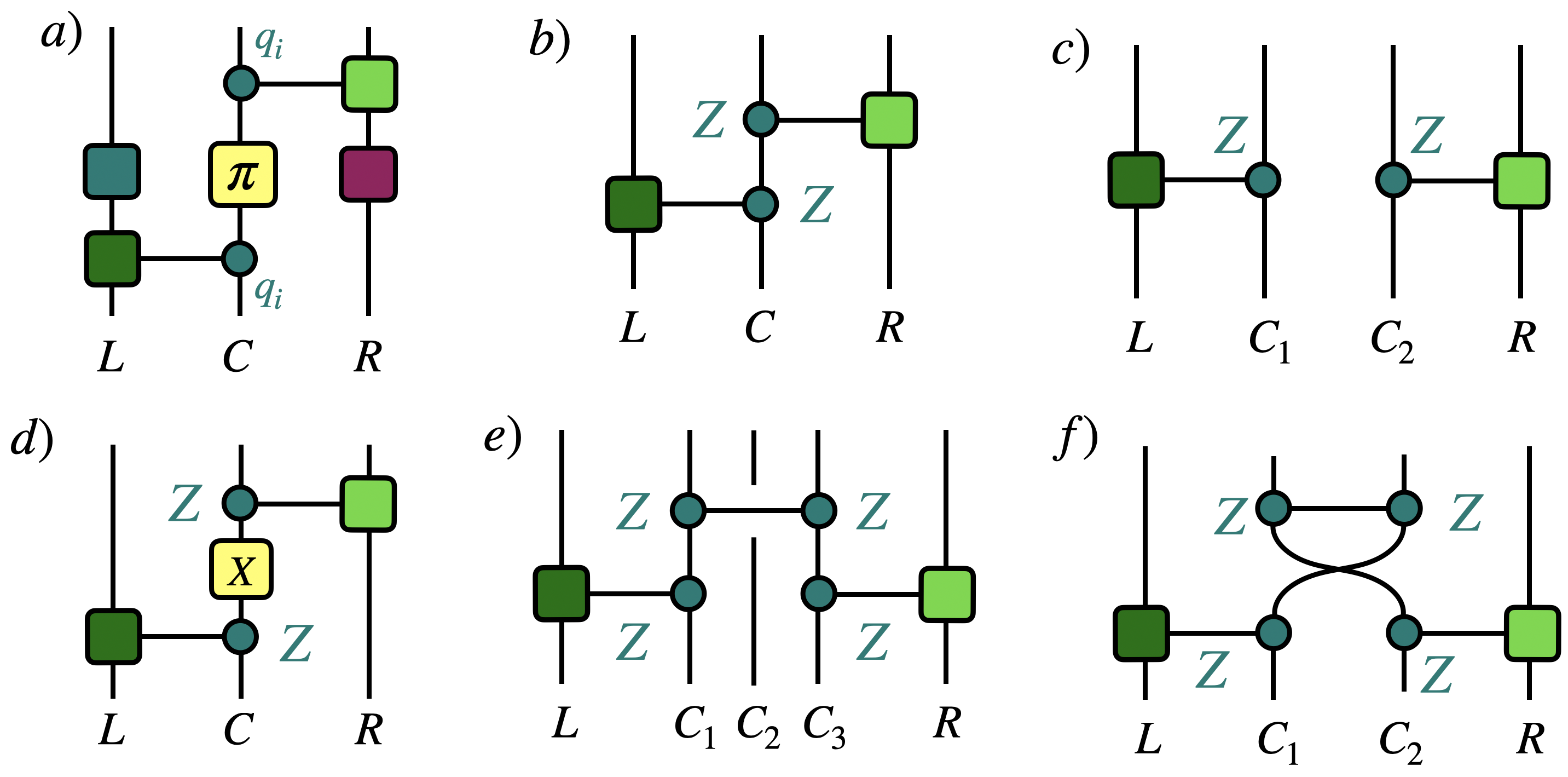}
    \caption{\small Catalogue of Abelian walls. \textit{a)} shows their general structure with two conditional gates on an orthonormal operator basis $\{(q_i)_C = \ket{i}\bra{i}_C \}_i$. These enclose an off-diagonal unitary permutation gate $\pi$ of the basis elements. We use $Z$-conditional gates (see \Cref{app:conditionals}) as a workhorse to illustrate wall properties while we note that this choice of $C$-local operator basis is arbitrary.  \textit{b)} shows a wall with two $Z$-conserving gates. The wall has a local conservation law as $Z_C$ commutes with the wall unitary, and the constituent gates also commute. \textit{c)} shows a reducible wall on a composite central subsystem. While the wall encloses the Abelian algebra $ \langle Z_{C_1} \rangle \otimes \langle Z_{C_2} \rangle$, $L$, $R$ can be extended to form a trivial wall. \textit{d)} features a local soliton on $C$, so that $Z_C \rightarrow -Z_C \rightarrow Z_C$ orbits under the wall dynamics due to the block permutation gate $X$. The constiuent gates don't commute. \textit{e)} has an uncoupled subsystem $C_2$ on which any operator commutes with the wall unitary. This wall is a composition of trivial walls w.r.t. $C_2$ and an Abelian wall on $C_1, C_3$. \textit{e)} features an irreducible wall formed of non-commuting gates. The $\SWAP$ gate acts as a block permutation on the enclosed $Z$-diagonal algebra. The $ZZ$-gate is an inner automorphism of the algebra.}
    \label{fig:abelian_wall}
\end{figure*}

In this section, we consider a few simple representative examples of walls to illustrate the previously outlined mechanism of bounded light-cone formation. 
\subsubsection{Abelian walls \label{sec:abelian_walls}}

Consider the algebra $\mathcal{A}_C = \mathrm{diag(\mathcal{M}_C)}$ as the diagonal operators on $C$ in an arbitrary $C$-local basis, which is a maximal Abelian algebra. In this case, we generate the algebra from pairwise orthogonal projectors $\mathcal{A}_C = \langle \Pi^{(i)}_C\rangle_{i=1}^{\dim C}$ under \Cref{thm:abelian_structure}. The irreducible blocks are all one-dimensional scalars with multiplicity one, ie. $\mathcal{A}_C \cong \mathds{C}^{\dim C}$. The unitary automorphisms of this algebra are diagonal gates with unit-magnitude phases in the eigen-basis any, in addition to any unitary representation of permutations of the diagonal elements which are off-diagonal gates. A generic wall takes the form under \Cref{thm:wall_structure}: 
\begin{equation}
    U = \sum_{a=1}^{\dim C} t^{(a)}_L\otimes \Pi^{\pi(a)}_C \otimes r^{(a)}_R,
\end{equation}
\noindent where $t^{(a)}, r^{(a)}$ are arbitrary unitaries and $\pi \in \mathcal{S}_d$ is an element of the permutation group on $d$ elements. 

We recognise these unitaries are conditional gates (\Cref{app:conditionals}) where the action on the left/right subspaces are determined from the outcome of a projective measurement on the central subsystem. Under \Cref{thm:central_conserved_charge}, all diagonal operators generate a local conserved charge. We may decompose these operators in local circuits as conditional gates from between $LC$ and $CR$ up to basis permutations on $C$, as on \Cref{fig:abelian_wall}. Naturally, any local operator on $L$ or $R$ also leaves this central algebra invariant although we have already considered this equivalence with the definition of the wall. On \Cref{fig:abelian_wall}, we show several examples of walls constructed from $Z$-conditional unitaries (gates which are diagonal in the $C$-local computational basis) to illustrate properties of the wall such as non-commutativity of constituent gates and the notion of reducibility in multipartite systems.

\subsubsection{Non-abelian walls}
\begin{figure}
    \centering
    \includegraphics[width=\linewidth]{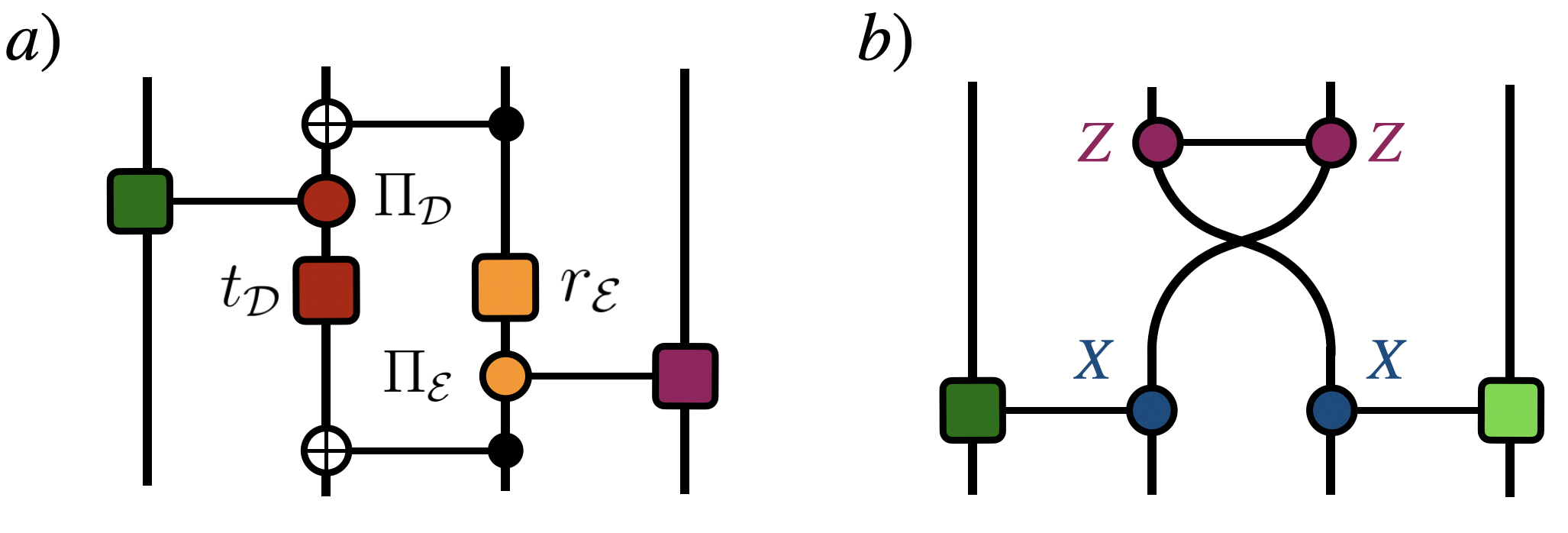}
    \caption{\small \textit{a)} Example of non-abelian wall structure for $\mathcal{A}_C \cong \Id_2 \otimes \mathcal{M}_2$. The embedded algebra has one factor in an entangled basis, hosting an encoded logical susbsystem. \textit{b)} shows a particular realisation by the Clifford $\FSWAP$ gate. 
    Neither gates on \textit{a), b)} host local conservation laws since we chose $\mathcal{B} = \comm(\mathcal{A}_C)$ under \Cref{thm:left_right_closure} so that $\mathcal{A}_C$ has a trivial centre.}
    \label{fig:interfering_wall}
\end{figure}

As a minimal example, we will consider non-commutative algebras on qubit subspaces on $C$. If $\dim C = 2$ (single-qubit), the only non-abelian algebra is the full matrix algebra spanned by two non-commuting Paulis, $\mathcal{M}_C = \langle X_C, Z_C \rangle$. As a result, only trivial non-abelian walls can be constructed under \Cref{thm:0-walls}. On two qubits, consider the non-abelian Pauli algebra:
\begin{equation}
    \mathcal{A}_C = \langle X_{C_1} \otimes \Id_{C_2}, Z_{C_1} \otimes X_{C_2} \rangle.
\end{equation}

\noindent The commutant can be constructed explicitly, by noting that the generators must have $X$ or $\Id$ support on $C_2$ and must commute at even number of positions. Thus we have:

\begin{equation}
    \comm(\mathcal{A}_C) = \langle \Id_{C_2}\otimes X_{C_1}  ,X_{C_2}\otimes Z_{C_1} \rangle.
\end{equation}

By noticing that $\mathcal{A}_C \cap \comm(\mathcal{A}_C) = \langle \Id \rangle$, under \Cref{thm:block_structure}, the algebra is isomorphic to a single factor. The $\mathrm{CNOT}_{21} = \ket{0}\bra{0}_{C_2} \otimes \Id_{C_1} + \ket{1}\bra{1}_{C_2} \otimes X_{C_1}$ induces the isomorphism to block diagonal form:
\begin{align}
    \Ad_{\mathrm{CNOT}_{21}}(\mathcal{A}_C) &= \Id_{\mathcal{E}} \otimes \mathcal{M}_{\mathcal{D}}, \\
    \Ad_{\mathrm{CNOT}_{21}}(\comm(\mathcal{A}_C)) &= \mathcal{M}_{\mathcal{E}} \otimes \Id_{\mathcal{D}}.
\end{align}
\noindent where $\mathcal{E}, \mathcal{D}$ are each two-dimensional irrep subspaces. To construct the unitary automorphisms of this algebra, we take any product unitary $t_{\mathcal{D}} \otimes r_{\mathcal{E}}$ (these parametrise all inner automorphisms multiplied by a commuting unitary in the irrep basis).

From this, the general form of the the non-abelian wall is: 
\begin{equation}
    U = \Ad_{\mathrm{CNOT}_{21}}( T_{L\mathcal{D}} \otimes R_{\mathcal{E}R}).
\end{equation}

The subspaces $\mathcal{D}, \mathcal{E}$ have two-dimensional projectors $\Pi_{\mathcal{D}},  \Pi_{\mathcal{E}}$ thus we can decompose the $L\mathcal{D}$ and $\mathcal{E}R$ couplings as generalised conditional gates:

\begin{align}
    T_{L\mathcal{D}} &= V_L \otimes t_{\mathcal{D}}\Pi_{\mathcal{D}}, \\
    R_{{\mathcal{E}R}} &= r_{\mathcal{E}} \Pi_{\mathcal{E}} \otimes W_R, 
\end{align}

\noindent where $V, W, t, r$ are arbitrary unitaries. We show this structure on \Cref{fig:interfering_wall} along with a specific Clifford-gate example constructed. For a generic (e.g. Haar-random) $V, Q, t, r$, we have $\mathcal{B} = \comm(\mathcal{A}_C)$ so that the $C$-local conserved algebra from \Cref{thm:central_conserved_charge} is simply the centre of $\mathcal{A}_C$: $\mathcal{C} = Z(\mathcal{A}_C) = \mathcal{A}_C \cap \comm(\mathcal{A_C})$. Our particular example has only a single factor which has a trivial centre therefore these walls don't host conserved charges. While one would naively expect the (Abelian) algebra spanned by the projectors, $\langle \Pi_{\mathcal{D}}, \Pi_{\mathcal{E}} \rangle$, to be conserved, the conditional gates ensure that the projectors spread one-sidedly in the wall evolution. In this example, we have utilised an operator algebra well-known from early works of subsystem error-correction \cite{Poulin2005_subsystem_QEC,kribs2006operatorquantumerrorcorrection, Dauphinais2024stabilizerformalism}, where the central subsystem hosts an encoded logical qubit. The condition for causal decoupling is weaker than what is required for error-correction, i.e. walls don't presume recoverability of logical information under errors, but the existence of a code-space and its invariance links the two problems together. 

A particular (Clifford-gate) example of a wall in this category is also shown on \Cref{fig:interfering_wall} where we utilised  the $\FSWAP$ gate as an automorphism of $\mathcal{A}_C$ composed of a controlled-$Z$ gate and a $\mathrm{SWAP}$. $\Ad_{\FSWAP}$ acts (up to a phase) to exchange $X \otimes \Id \leftrightarrow Z \otimes X$ and $\Id \otimes X \leftrightarrow X \otimes Z$. This wall has a salient interpretation through localising Jordan-Wigner fermions. The controlled-$X$ gate \enquote*{injects} a fermion head to the central region which is transported by the $\FSWAP$ gate so that its head commutes with the other controlled-$X$ gate stopping its spreading. Central local operators spread either one-sidedly (e.g. $X_{C_1} \otimes Y_{C_2}$ spreads to $R$ only under the wall unitary while it remains invariant under $\FSWAP$) or to both sides therefore there are no non-trivial $C$-local conservation laws hosted. This behaviour cannot happen by embedding an Abelian algebra. Notably, walls can have more general free-fermion dynamics in the centre by having an arbitrary matchgate circuit (which is the automorphism group of free fermion operators) on $C$ with conditional couplings to $L, R$ based on projectors to irreducible subspaces \cite{Valiant2002SIAMMatchgates, TerhalDiVincenzo2002NoninteractingFermions}. The $\mathrm{FSWAP}$ wall is the simplest example of this behaviour.

\section{Locally constrained many-body dynamics  \label{sec:dynamical_features}}

\subsection{Operator space fragmentation}

In this section, we link the observed splitting of operator space to Hilbert-space fragmentation studied in many-body physics \cite{Moudgalya2022} in the context of random circuit dynamics. 
Due to bounded light-cones, operator space splits into spatially localised invariant sectors $\mathcal{L}, \mathcal{R}$ which induces a splitting of the central operator space to the intersection $\mathcal{I}$:
\begin{align}
    \mathcal{I} &= \overline{\mathcal{M}_L} \cap \overline{\mathcal{M}_R} =\mathcal{A}_C \cap \mathcal{B}_C, 
    \\
    %\mathcal{C} &= \mathcal{B}(C) \setminus \mathcal{I} \\
    \mathcal{L} &= \overline{\mathcal{M}_L}\setminus \mathcal{I} \supseteq \mathcal{M}_L, \\
    \mathcal{R} &= \overline{\mathcal{M}_R} \setminus \mathcal{I} \supseteq \mathcal{M}_R.
\end{align}
\noindent This implies the following decomposition of operator space:
\begin{equation}
    \mathcal{M}_{LCR} = \mathcal{F} \oplus \mathcal{F}^{\perp} = \mathcal{L} \oplus \mathcal{R} \oplus \left (\mathcal{L} \times \mathcal{R}\right) \oplus \mathcal{I} \oplus \mathcal{F}^{\perp}.
\end{equation}
\noindent  Fragmented circuits are defined to have of an exponentially large sector of invariant subspace in the dynamics. The wall induces splitting of the many-body operator space so that individual left and right localised subsystems can be multiplied to create larger invariant spaces. In random circuits, this leads to fragmentation whenever the circuit hosts an extensive number of bounded light cones.

A specific instance of fragmented circuit ensemble was investigated in detail in our previous work \cite{Kovacs2026} where the unitaries are chosen uniformly randomly of entangling Clifford gates with a finite probability of non-Clifford perturbations. In the large system limit, randomly gates will act invariantly on, for example, diagonal algebras that lead to fragmentation. We stipulate that any gateset which has a finite probability of satisfying the wall constraint (e.g. containing conditional unitaries as defined in \Cref{app:conditionals}) will lead to phenomenologically equivalent non-ergodic evolution, with an extensive set of walls and consequently exponentially sized fragment space. We note, however, that due to the dimensional reduction required for sub-algebra embedding to satisfy the wall constraint, we expect that fragmented circuit ensembles always form a zero-measure set of all unitaries, independent of circuit connectivity.  

\subsection{Entanglement area law \label{sec:entanglement}}

A wall consisting of entangling gates requires the embedding of a sub-maximal dimensional algebra in the circuit. This is manifest in the restricted entanglement growth across the central region due to sub-maximal operator (and state) Schmidt rank. 

\begin{theorem}[Bounded entanglement]
Consider the product state $\ket\alpha_L \ket\beta_{CR}$ evolving under a left-wall unitary $U$ with invariant algebra $\mathcal{M}_L \otimes \mathcal{A}_C \otimes \Id_R$. The Schmidt rank of the evolved state with respect to the partition $L-CR$ is bounded by
\begin{align}\label{sr Fab}
  \mathrm{Sr}(U^t \ket\alpha_L \ket\beta_{CR})  
  \leq \mathrm{dim}(\mathcal A_C) \text{ for all } t.
\end{align}
\label{thm:schmidt_rank_bound}
\end{theorem}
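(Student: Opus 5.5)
The plan is to use the block structure of the invariant algebra rather than tracking the state directly. Then we count Schmidt rank block by block and obtain $\sum_i D_i^2=\dim\mathcal A_C$.

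\emph{Step 1: $U^t$ normalises the invariant algebra.} By \Cref{cor:invariant_algebras} and \Cref{thm:left_right_closure}, $\Ad_U$, and hence $\Ad_U^t$ for every $t\ge 0$, maps $\overline{\mathcal M_L}=\mathcal M_L\otimes\mathcal A_C\otimes\Id_R$ onto itself. It maps into this algebra by invariance, and onto it because $\Ad_U$ is injective and the algebra is finite-dimensional. So $U^t$ lies in the normaliser $\mathrm U(\mathcal M_L\otimes\mathcal A_C\otimes\Id_R)$ inside $\mathcal M_{LCR}$.

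Next we write the algebra in the block form of \Cref{thm:block_structure}. Its factors are $\mathcal M_{L\mathcal D_i}\otimes\Id_{\mathcal E_i R}$ on
\[
\mathcal H_{LCR}\cong\bigoplus_i (\mathcal H_L\otimes\mathcal D_i)\otimes(\mathcal E_i\otimes\mathcal H_R).
\]
Applying \Cref{thm:wall_structure} to this algebra, we get $U^t=\big(\bigoplus_i X_i\otimes Y_i\big)\sigma$. Here $X_i$ acts on $L\mathcal D_i$, $Y_i$ acts on $\mathcal E_i R$, and $\sigma$ permutes equivalent blocks.

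The global isomorphism implicit in \Cref{thm:block_structure} must be chosen $C$-local, so that it does not change Schmidt ranks across $L$--$CR$. This is legitimate because the algebra is of the form $\mathcal M_L\otimes\mathcal A_C\otimes\Id_R$, whose decomposition comes from that of $\mathcal A_C$ alone. By the same token, $\sigma$ acts as the identity on $L$ and $R$ and only reidentifies $\mathcal D_i\otimes\mathcal E_i\leftrightarrow\mathcal D_j\otimes\mathcal E_j$ with $D_i=D_j$, $E_i=E_j$.

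\emph{Step 2: block-wise Schmidt counting.} Decompose $\ket\beta_{CR}=\sum_i\ket{\beta_i}$ with $\ket{\beta_i}\in\mathcal D_i\otimes\mathcal E_i\otimes\mathcal H_R$. Then $\sigma(\ket\alpha\ket\beta)=\sum_i\ket\alpha\ket{\beta'_i}$ is still product with respect to $L$ in each block. Schmidt-decompose $\ket{\beta'_i}$ across $\mathcal D_i$ versus $\mathcal E_iR$ as $\sum_{s=1}^{D_i}\ket{d_s}\ket{f_s}$, which has at most $D_i$ terms. Then
\[
U^t\ket\alpha\ket\beta=\sum_i\sum_{s=1}^{D_i}X_i(\ket\alpha\ket{d_s})\otimes Y_i\ket{f_s}.
\]
Each vector $X_i(\ket\alpha\ket{d_s})\in\mathcal H_L\otimes\mathcal D_i$ has Schmidt rank at most $D_i$ across $L$--$\mathcal D_i$. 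Tensoring with $Y_i\ket{f_s}$ on $\mathcal E_iR$ does not increase the $L$--$CR$ Schmidt rank. Subadditivity of Schmidt rank under sums therefore gives
\[
\mathrm{Sr}\le\sum_i D_i\cdot D_i=\sum_i D_i^2=\dim\mathcal A_C,
\]
which is \eqref{sr Fab}.

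\emph{Main obstacle.} The delicate point is Step 1: justifying that the block decomposition of $\mathcal M_L\otimes\mathcal A_C\otimes\Id_R$ is implemented by a unitary that is local to $C$ and tensored with identities on $L$ and $R$. Equally, one must check that the permutation part $\Sigma$ in \Cref{thm:wall_structure} does not act non-trivially on $L$ or $R$. I would settle this first by noting that the minimal central projections and matrix units of $\mathcal M_L\otimes\mathcal A_C$ can be taken as $\Id_L$ or $\mathcal M_L$ tensored with those of $\mathcal A_C$. Any leftover local unitaries on $L$ or $R$ can then be absorbed into $X_i$ and $Y_i$, and these do not affect the counting in Step 2.
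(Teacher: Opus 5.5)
Your proof is correct and follows the same route as the paper. You split $\ket\beta_{CR}$ into its components on $\mathcal D_i\otimes\mathcal E_i\otimes\mathcal H_R$ and Schmidt-decompose each across $\mathcal D_i$ versus $\mathcal E_i R$ (at most $D_i$ terms). You then use that $U^t$ acts blockwise as an $L\mathcal D_i$ unitary tensored with an $\mathcal E_i R$ unitary, so each term has $L$--$CR$ Schmidt rank at most $D_i$, and summing gives $\sum_i D_i^2=\dim\mathcal A_C$. Your version is more careful than the paper's, which silently ignores the block permutation $\sigma$ and the need for the block isomorphism to be $C$-local; your handling of both points is sound.
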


\begin{proof}
Denote the irreducible subspaces of the invariant algebra $\mathcal{A}_C$ as $\mathcal{D}_i, \mathcal{E}_i$ under \Cref{thm:wall_structure}.
Let $\ket{\beta_i}_{\mathcal D_i \mathcal E_i R}$ be the projection of $\ket\beta_{CR}$ onto the subspace $\mathcal D_i \otimes \mathcal E_i \otimes R$, and let 
\begin{align}
  \ket{\beta_i}_{\mathcal D_i \mathcal E_i R} = \sum_j \lambda_j \ket{\mu_{i,j}}_{\mathcal D_i} \ket{\nu_{i,j}}_{\mathcal E_i R}        
\end{align}
be its Schmidt decomposition, whose rank is at most $\mathrm{dim}\mathcal D_i$. 
After the action of $U^t$, the state $\ket\alpha_L \ket{\mu_{i,j}}_{\mathcal D_i} \ket{\nu_{i,j}}_{\mathcal E_i R}$ evolves into $\ket{\tilde \mu_{i,j}}_{L\mathcal D_i} \ket{\tilde\nu_{i,j}}_{\mathcal E_i R}$, whose Schmidt rank with respect to the partition $L-\mathcal D_i \mathcal E_i R$ is the same as the Schmidt rank of $\ket{\tilde \mu_{i,j}}_{L\mathcal D_i}$, which is at most $\mathrm{dim}\mathcal D_i$. This implies that the Schmidt rank of 
\begin{align}
  U^t \ket\alpha_L \ket{\beta_i}_{\mathcal D_i \mathcal E_i R}
  = \sum_j \lambda_j \ket{\tilde \mu_{i,j}}_{L \mathcal D_i} \ket{\tilde \nu_{i,j}}_{\mathcal E_i R}
\end{align}
is at most $\mathrm{dim}^2\mathcal D_i$. Which implies that the Schmidt rank of 
\begin{align}
  U^t \ket\alpha_L \ket{\beta}_{CR}
  = \sum_{i,j} \ket{\tilde \mu_{i,j}}_{L \mathcal D_i} \ket{\tilde \nu_{i,j}}_{\mathcal E_i R}
\end{align}
is bounded by $\sum_i \dim^2 \mathcal{D}_i = \dim \mathcal{A}_C$.
\end{proof}

There are several remarks in order. We have showed the entanglement bound under a left wall evolution while an entirely similar calculation establishes a bound for the right wall evolution using \Cref{thm:left_right_closure}. For a maximal Abelian sub-algebra, the above bound saturates to $\dim C$ which restricts the attainable entanglement across the $L-CR$ and $LC-R$ by the central \enquote*{bottleneck} subsystem dimension. For wall unitaries which have random blocks (drawn under the Haar-measure for each invariant subspace) we expect the bound to be saturated up to the Page correction of random state entanglement \cite{Page1993AverageEntropy}. 

For non-abelian embedded algebras, we expect the bound to be looser, and the Schmidt rank bounded by the dimension of the maximal Abelian sub-algebra of $\mathcal{A}_C$. The operator localisation implies that $L, R$ subsystems share no mutual information while each can be entangled with a third-party highlighting the subtle differences between operator spreading and entanglement spreading in unitary dynamics. Operator dynamics under a wall unitary is therefore an instance of an area-law phase \cite{Eisert2010}, since scaling the dimensions of $L, R$ cannot increase the state entanglement.

\subsection{Measurement-induced dynamics}
\begin{figure}
    \centering  \includegraphics[width=0.5\linewidth]{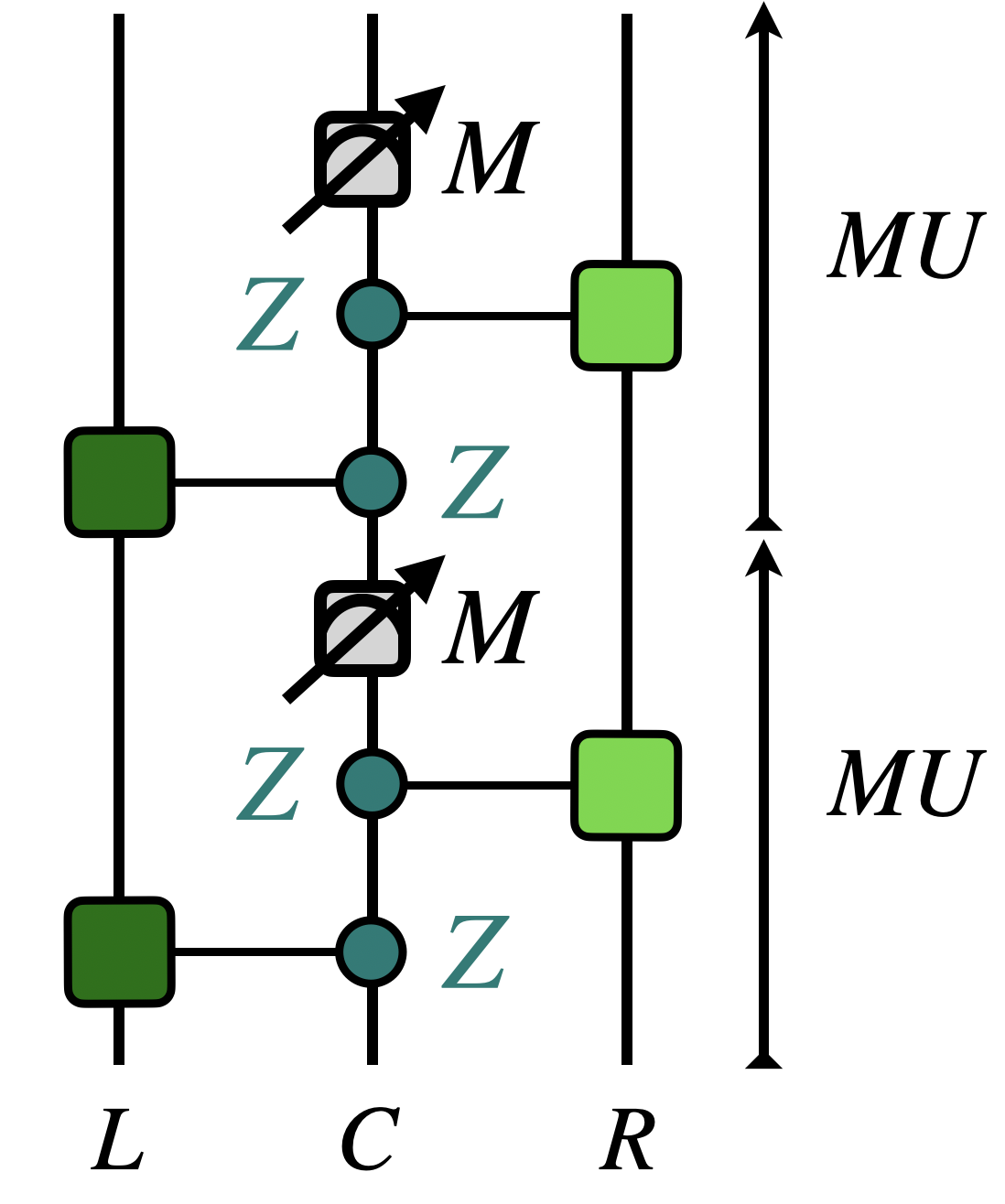}
    \caption{Mesurement protocol with unitary evolution followed by a $C$-local projective measurement. The wall dynamics leads to area-law entanglement. The invariant structure of the wall determines whether the local measurement preserves, decreases or increases entanglement.}
    \label{fig:wall_plus_measure}
\end{figure}
 
In this section, we consider the effect of central projective measurement on the entanglement generation under wall dynamics. As a setup, we consider evolving a $L-CR$ product state under the wall dynamics (as in \Cref{thm:schmidt_rank_bound}) and measuring afterwards. Due to the  splitting of operator space, the effect of measurement is determined from the compatibility of the measured operator with the invariant algebra of the wall. 

If $M_C \in \mathcal{A}_C$, the state is projected to one of the invariant blocks of the embedded algebra, therefore the Schmidt rank of the state satisfies: $\mathrm{Sr(\Ad_{M} \Ad_U(\ket\alpha_L\ket\beta_{CR}}) \leq \dim^2 \mathcal{D}_i$ where $\mathcal{D}_i$ is determined from the measurement outcome. If the wall unitary is Haar-random within irreducible blocks, this is in accordance with the conventional notion of measurement being detrimental to entanglement generation in the circuit \cite{Fisher2023, Li2018, SkinnerRuhmanNahum2019MeasurementTransition}. If $M_C \in \comm(\mathcal{A}_C)$, the state is undisturbed by the measurement and its entanglement is unchanged. 

Now consider $M_C \in \mathcal{M}_C \backslash(\mathcal{A}_C \cup \comm(\mathcal{A}_C))$. These measurements break the splitting of operator space and force the evolved state into the commutant space of $M$. Analogous to entanglement swapping protocols \cite{ZukowskiZeilingerHorneEkert1993EntanglementSwapping, PanBouwmeesterWeinfurterZeilinger1998ExperimentalEntanglementSwapping}, the measurement entangles the initially commuting subspaces of $C$, leading to the entangling of $L, R$ subsystems without the bottleneck system. In this case, the state can exhibit \textit{volume-law} entanglement as a result of the local measurement by iterating unitary evolution interleaved with a local measurement (see \Cref{fig:wall_plus_measure}). 

Concrete examples of random circuit dynamics interleaved with measurement in the presence of bounded constraints thus promises rich measurement-induced dynamics that we leave for future work. Recent work investigated this behaviour using controlled-$Z$ gates with random $X$ or $Z$ measurements to model the stability of localised circuit evolution against local measurements \cite{IppolitiGullansGopalakrishnanHuseKhemani2021MeasurementOnly}. The analytical theory developed in this paper provides the roadmap to generalisations of these results with the algebraic characterisation determining which measurements are stable and which are not. Finally, understanding the effect of generalised measurements in the presence of operator space fragmentation is an intriguing direction for future analysis.
\subsection{Spectral correlations}

This section considers the spectral form factor (SFF) in the presence of wall constraints for a tripartite random wall ensembles. This quantity is a well-established probe of quantum chaos and the exact operator space decomposition facilitates its analytical calculation for bounded-light cone dynamics \cite{Brezin1997, Haake1999, Haake2010, Chan2018MBQC, Chan2018, Bertini2021}. Consider drawing gates from a random unitary ensemble $U \sim{\mathbf{U}}$. The SFF is defined as the ensemble-average trace powers:
\begin{equation}
K(t) = \mathds{E}_{U \sim \mathbf{U
}} \left [|\tr(U^t)| ^2\right]. 
\end{equation}
\noindent From random-matrix theory, it is well-known that taking $\mathbf{U}$ as $d$-dimensional Haar-random unitaries leads to the SFF $K_{\mathrm{Haar}}(t) = \min(t, d)$ \cite{Mehta2004, Haake2010}. 

Consider a random ensemble of walls with randomised invariant blocks in \Cref{thm:wall_structure}, $T^i \sim \mathrm{Haar}(U(L\mathcal{D}_i)), R^i \sim \mathrm{Haar}(U(L\mathcal{E}_i))$. Since we are interested in trace powers, any permutation of the invariant blocks leave SFF invariant.  It is easily shown the form factor is multiplicative under ensembles of the form $U_1 \otimes U_2$ where $U_1, U_2$ are i.i.d. drawn random unitaries and additive over irreducible blocks.
This implies that for the wall ensemble, 
\begin{equation}
    K(t)  = \sum_{i = 1}^{n_i} \min(t, \dim L\mathcal{D}_i) \min(t, \dim \mathcal{E}_iR), 
\end{equation}

\noindent where we note the dimensional constraint $\sum_{i=1}^{n_i} \dim\mathcal{D}_i \dim\mathcal{E}_i = \dim C$. In the presence of a single bounded light cone, early time spectral correlations scale as $K(t)\sim t^2$ indicating the effect of symmetry. As an example, consider Abelian wall with a maximal diagonal algebra (\Cref{sec:abelian_walls}) with $\dim L = \dim R = d$ and $\dim C = d_C$. The spectral form factor is $K(t) = d_C \min(t, d)^2$ indicating that the saturation (Heisenberg) time scales as $\tau \sim \sqrt{d}$ separating the fragmented ensemble for a uniform random ensemble with $K_{\mathrm{Haar}}(t) = \min(t, d_C d^2)$. For a non-abelian embedded algebra with full matrix invariant blocks, we expect that $K(t) \sim t^2$ for early times and crosses over to $K(t) \sim t$ at a timescale set out by the size of the largest invariant block dimension. 

Consider now an ensemble with extensively many walls, where the operator space is fragmented. In this case, we expect the spectral form factor to be super-polynomial $K(t) \sim \lim_{n\rightarrow\infty} t^n$ in the thermodynamic limit which highlights the difference between an integrable form factor $K(t) = d$ and a putative many-body localised one $K(t) \sim \exp(t)$ \cite{Chan2018, Chan2021SpectralLyapunov, Chan2018}. We leave for future numerical investigations whether operator space fragmented circuits in this sense are sufficient to create many-body integrable spectral correlations. Finally, we note that perturbing a bounded light cone by a non-wall unitary can lead to slow spectral relaxation e.g. in Clifford circuits with Haar-random perturbations \cite{Kovacs2026}. By breaking up invariant subspaces, we expect that at late times, $K(t) \approx t$ and that the signature localisation (fast growth of SFF) is only visible for early times. It would be intriguing to compare our work with cross-over behaviour in systems with approximately broken symmetries exhibiting \enquote*{Hilbert-space diffusion} \cite{BaumgartnerDelacretazNayakSonner2024HilbertSpaceDiffusion} and to calculate analytical bounds on the crossover timescale.   

\section{Conclusion}

This paper investigated the general algebraic structure of tripartite unitary evolution that bounds the causal light cone in time-periodic operator dynamics. We understood this phenomenon through the splitting of operator space into commuting sectors from the invariance of an embedded sub-algebra on a subsystem. The wall unitaries which exhibit causal independence were then constructed as the unitary representations of the invariant algebra's automorphism group.  The algebraic framework naturally lended itself to characterise the local conservation laws of these unitaries based on commutant structure of the invariant algebra. Our construction is an example of analytically tractable ergodicity breaking dynamics which is robust to sparse local perturbations.

This work generalises the author's previous results on this topic in time-periodic random Clifford circuits by constructing the most general unitaries exhibiting the associated phenomenology of operator space fragmentation \cite{Kovacs2026}.  We have constructed wall unitaries without non-trivial central conservation laws using non-commutative central algebras. These highlight that locally constrained non-ergodic quantum dynamics is possible without solitons or integrability since the central subsystem may evolve ergodically within the commuting operator subspaces. Our model is thus a simple etalon for Hilbert-subspace ergodicity with the invariant subspaces being spatially localised \cite{Logaric2025}.

Some observable probes in many-body ergodicity were then characterised. We have proven an entanglement area-law due to the bounded light-cone due to the reduction of maximal dimension of the invariant algebra without relying on the stabiliser formalism used in the Clifford-case \cite{Kovacs2026}. The formalism also lends itself to understand the stability of localisation against perturbations and how bounded light cones affect chaotic spectral correlations.  Our results are a rigorous study of locally constrained dynamics and we hope that the formalism here will enable the analytic study of circuit toy models to observe novel measurement-induced phase transitions in the presence of non-ergodic evolution. 

There are several theoretical directions for generalising the theory presented. All our results focused on finite-dimensional algebras which immediately raises the question how much of the phenomenology persists in infinite dimensions. For example, we can conceive that the equivalence of restricted operator spreading from left-to-right and right-to-left in \Cref{thm:left_right_equiv} may not hold in infinite systems which could lead to operator dynamics with \textit{one-sided} causality: ie. left subsystem being causally independent of right subsystem but not vice versa. In this work, we focused on tripartite systems which is the simplest setting in which non-trivial causal decoupling can occur while it remains as an open problem how such causal decompositions can be constructed in multi-partite systems.

We have pre-dominantly focused on discrete and time-periodic operator evolution with simple generalisations to time-dependent dynamics through gauge transformations of an underlying wall unitary evolution. Understanding the conditions under which continuous time (e.g. Hamiltonian) evolution exhibits causal independence remains an interesting open problem. Deriving the structure of a Hamiltonian's Lie algebra required to fix invariant matrix algebras could enable the systematic construction of many-body Hamiltonians exhibiting causal independence.

Finally, we conjecture that given a unitary circuit, deciding whether it exhibits a bounded light cone can be efficiently done on a quantum computer for finite times. Encoding a left-local operator as a Choi-state, the arrested operator evolution would be manifest as the separability of the time-evolved state. Efficient schemes exist for testing separability of multi-partite systems \cite{Bouland2024} which could be utilised for this purpose. Given that a bounded light-cone leads to an area-law, we hypothesise that efficient tensor network representations also exist for the unitaries studied. Our model may be utilised as a simple benchmarking scheme for complex dynamics in quantum simulators which could lift the impact of our analysis to practical applications.

\section*{Acknowledgements}

M.D.K.~gratefully acknowledges Augustin Vanrietvelde, Maxwell West, Ruben Ibarrondo-Lopez, Ricard Puig, Aniruddha Sen and Surajit Bera for useful discussions on related topics. M.D.K.~wishes to thank Marco Schiró and his research group for their hospitality at College de France where part of this work was undertaken. M.D.K.~was supported by the UK Engineering and Physical Sciences Research Council (EPSRC) [grant no.  EP/S021582/1], by the Los Alamos Quantum Computing Summer School Fellowship Program and by the Scientific High-Level Visiting Fellowship of the UK-French Embassy. C.J.T.~was supported by an EPSRC fellowship (Grant Ref. EP/W005743/1). \textit{Statement of AI use.} In the preparation of this manuscript, ChatGPT 5 and Claude AI (Fable 5.0 and Opus 5.0) has been used for drafting and literature review and aid for prooving \Cref{lemma:subalgebra_product}.The authors take full responsibility for the results presented. \textit{Data avalability statement.} This work is theoretical in nature and does not require supporting data.
\appendix
\section*{Appendix}

\section{$\mathrm{C}^*$-algebras \label{app:algebras}}
We review the elementary properties finite-dimensional $\mathrm{C}^*$-algebras and provide additional mathematical background for the results in this work. For a more complete, yet accessible, review, the Reader is referred to \cite{Harlow2017, Landsman1998_notes, Jones2020, Bény2020}.

\begin{definition}[Algebra over $\mathds{C}$]
 An algebra $A$ over the field of complex numbers $\mathds{C}$ is defined as a vector space $V[\mathds{C}]$ which is closed under a bilinear product operation, that is, for $a, b \in A$ and $\alpha, \beta \in \mathds{C}$:
 \begin{align}
     &\alpha a + \beta b \in A, \\
     & a b \in A.
 \end{align}
 \end{definition}

\begin{definition}[Generators]
    The notation $\mathcal{A} =\mathrm{Alg}\{g_1, ..., g_n \}= \langle g_1, ..., g_n \rangle$ is the algebra generated by arbitary products of $g_i$ and arbitrary complex linear combinations of their products.
\end{definition}

\begin{definition}[$\mathrm{C}^*$-algebra]
    We define a $\mathrm{C}^*$-algebra as an algebra defined over a Banach space with an additional involution operation $*$, which satisfies:    \begin{equation}
        ||x^* x|| = ||x||^2 \text{ for all } x \in \mathcal{A}.
    \end{equation}
\end{definition}

\begin{definition}[Commutative algebras]
    An algebra $\mathcal{A}$ is commutative or Abelian, if $[a, b] = 0$ for every $a, b \in \mathcal{A}$.
\end{definition}

\begin{definition}[Full matrix algebra]
    Let $\mathcal{M}_S$ denote the $d \times d$ complex matrices acting on a Hilbert space $\mathcal{H}_S$ with $\dim \mathcal{H}_S = d$.
\end{definition}
\begin{proposition}
    $\mathcal{M}_S$ forms a \cstar under matrix multiplication as the product operation, the adjoint as the involution and, for example, the Hilbert-Schmidt norm $||A|| = \tr(A^{\dagger}A)$ as the norm. Note any other norm would suffice as in finite-dimensional vector spaces, all norms are equivalent.
\end{proposition}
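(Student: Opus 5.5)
We need to show that $\mathcal{M}_S$, with matrix multiplication, the adjoint as involution, and a suitable norm, is a $\mathrm{C}^*$-algebra. The plan is to check the algebraic axioms first and then the analytic ones. One point has to be settled up front. The $\mathrm{C}^*$-identity $\|x^*x\| = \|x\|^2$ is \emph{not} invariant under passing to an equivalent norm. Equivalence of norms in finite dimensions only gives completeness (the Banach property) for free. So I will verify the $\mathrm{C}^*$-identity for the operator norm $\|A\| = \sup_{\|\psi\|=1}\|A\psi\|$. The Hilbert--Schmidt norm $\|A\|_{HS}=\tr(A^\dagger A)^{1/2}$ fails the identity in general (e.g.\ $A=\Id_2$ gives $\sqrt2$ versus $2$). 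I therefore read the parenthetical remark as applying only to the Banach-space structure, and present the operator norm as the one realising the $\mathrm{C}^*$ condition.

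\textbf{Algebraic axioms.} $\mathcal{M}_S$ is a complex vector space of dimension $d^2$. Matrix multiplication is associative and bilinear, so $\mathcal{M}_S$ is an algebra in the sense of the definition above. The adjoint $A\mapsto A^\dagger$ is an involution: it is conjugate-linear, satisfies $(AB)^\dagger = B^\dagger A^\dagger$, and satisfies $(A^\dagger)^\dagger = A$. All of this follows directly from $(A^\dagger)_{ij}=\overline{A_{ji}}$.

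\textbf{Norm and completeness.} The operator norm is a norm on $\mathcal{M}_S$. It is submultiplicative, $\|AB\|\le\|A\|\|B\|$, because $\|AB\psi\|\le\|A\|\|B\psi\|$. Completeness holds because $\mathcal{M}_S\cong\mathds{C}^{d^2}$ is finite-dimensional. All norms there are equivalent to the Euclidean one, so Cauchy sequences converge, and this is the sense in which any norm suffices.

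\textbf{$\mathrm{C}^*$-identity.} This is the only step with real content. For a unit vector $\psi$,
\begin{equation*}
\|A\psi\|^2=\langle\psi,A^\dagger A\psi\rangle\le\|A^\dagger A\|,
\end{equation*}
by Cauchy--Schwarz. Taking the supremum gives $\|A\|^2\le\|A^\dagger A\|$. Submultiplicativity gives $\|A^\dagger A\|\le\|A^\dagger\|\|A\|$. To close the argument I need $\|A^\dagger\|=\|A\|$, which follows from $\|A\|=\sup_{\|\phi\|=\|\psi\|=1}|\langle\phi,A\psi\rangle|$ and the symmetry of this expression under taking adjoints. Combining the inequalities yields $\|A^\dagger A\|=\|A\|^2$, completing the proof that $\mathcal{M}_S$ is a $\mathrm{C}^*$-algebra.
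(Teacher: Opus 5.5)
Your proof is correct, and it proves the statement in the only form in which it holds. The paper gives no proof of this proposition, which it states as background, and the norm it names does not work. First, $\tr(A^\dagger A)$ is not even a norm: it scales as $|\lambda|^2$ and fails the triangle inequality at $A=B=\Id$. Second, even $\|A\|_{HS}=\tr(A^\dagger A)^{1/2}$ satisfies the $\mathrm{C}^*$-identity only on matrices of rank at most one, since $\|A^\dagger A\|_{HS}^2=\sum_i\sigma_i^4$ while $\|A\|_{HS}^4=(\sum_i\sigma_i^2)^2$.

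Your reading of the final sentence is right: norm equivalence gives completeness, not the $\mathrm{C}^*$-identity. You can sharpen it, because the $\mathrm{C}^*$-norm on $\mathcal{M}_S$ is unique. For Hermitian $h$ the identity iterates to $\|h^{2^n}\|=\|h\|^{2^n}$. By Gelfand's formula, which holds for every norm on the finite-dimensional space $\mathcal{M}_S$ (again by norm equivalence), $\|h\|$ is then the spectral radius of $h$. Applied to $h=A^\dagger A$, this forces $\|A\|^2=\lambda_{\max}(A^\dagger A)$, which is the squared operator norm. So the operator norm is not merely a convenient choice but the only possible one.

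Your verification of the identity is complete: Cauchy--Schwarz gives $\|A\|^2\le\|A^\dagger A\|$, and submultiplicativity together with $\|A^\dagger\|=\|A\|$ gives the reverse. The paper's slip does not affect the rest of the paper. It only uses the $*$-algebra structure of finite-dimensional subalgebras of $\mathcal{M}_S$ and the Hilbert--Schmidt \emph{inner product}, never the Hilbert--Schmidt norm as a $\mathrm{C}^*$-norm.
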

\begin{definition}[Bounded linear operators]
    Let $\mathcal{B}(S)$ denote the set of bounded linear operators on the Hilbert space $\mathcal{H}_S$ of a quantum system $S$. These are defined as all linear maps $f:\mathcal{H}_S \rightarrow \mathcal{H}_S$ so that there exists finite $K\in \mathds{R}$:
    \begin{equation}
        ||f(\xi)||\le K||\xi|| \text{ for every } \xi \in \mathcal{H}_S.
    \end{equation}
\end{definition}
\begin{definition}[von-Neumann algebras]
    A unital sub-algebra of $\mathcal{B}(S)$ (containing an identity element) is called a von-Neumann algebra.
\end{definition}

In this work, we are only concerned with finite-dimensional unital $\mathrm{C}^*$-algebras and therefore we use the term algebra, von-Neumann algebra and \cstar interchangeably. Differences arise between these in infinite-dimensional systems which go beyond the scope of this work.

\section{Commutants and centres \label{app:commutants}}
\begin{definition}[Commutant]
    The commutant of an algebra $\mathcal{A}$ is defined as the set of operators commuting with elements of $\mathcal{A}\subseteq \mathcal{M}_S $ in a system $S$.
    \begin{equation}
        \mathrm{Comm}(\mathcal{A})_S = \{ c \in \mathcal{M}_S |[c, a] = 0 \text{ }, \forall a \in \mathcal{A}\}.
    \end{equation}
    We also denote $\mathcal{A}'$ as the commutant. To ease notation, we drop the subscript $S$ unless the commutant is defined over a proper sub-algebra of $\mathcal{M}_S$.
\end{definition}
\begin{proposition}[Commutants are algebras]
    The commutant of sub-algebra $\mathcal{A} \subseteq \mathcal{M}_S$  is also a $\mathrm{C}^*$-algebra.
\end{proposition}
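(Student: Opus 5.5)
We verify directly that $\comm(\mathcal{A})$ is a complex vector subspace of $\mathcal{M}_S$ that is closed under multiplication, contains the identity, and is closed under the adjoint. Since $\mathcal{M}_S$ is finite-dimensional, $\comm(\mathcal{A})$ is then automatically norm-closed. It therefore inherits the norm and the C$^*$-identity from $\mathcal{M}_S$, and so is a (unital) $\mathrm{C}^*$-algebra.

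\emph{Vector space and products.} For $c_1, c_2 \in \comm(\mathcal{A})$, $\alpha,\beta\in\mathds{C}$ and any $a \in \mathcal{A}$, bilinearity of the commutator gives $[\alpha c_1 + \beta c_2, a] = \alpha[c_1,a] + \beta[c_2,a] = 0$. For products we use the Leibniz rule
\begin{equation*}
  [c_1 c_2, a] = c_1[c_2,a] + [c_1,a]c_2 = 0 .
\end{equation*}
The identity commutes with everything, so $\Id \in \comm(\mathcal{A})$, and the commutant is a unital subalgebra in the sense of the von-Neumann algebra definition above.

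\emph{Closure under the involution.} This is the only step that needs an assumption about $\mathcal{A}$: we need $\mathcal{A}$ to be closed under the adjoint, which holds because $\mathcal{A}$ is a $\mathrm{C}^*$-subalgebra. Take $c \in \comm(\mathcal{A})$ and $a\in\mathcal{A}$. Then $a^\dagger \in \mathcal{A}$, so $[c,a^\dagger]=0$, and taking adjoints gives
\begin{equation*}
  [c^\dagger, a] = -[c, a^\dagger]^\dagger = 0 .
\end{equation*}
Hence $c^\dagger\in\comm(\mathcal{A})$.

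\emph{Closedness and the C$^*$-identity.} Every linear subspace of a finite-dimensional normed space is closed. Alternatively, the map $c\mapsto [c,a]$ is continuous, so $\comm(\mathcal{A})$ is an intersection of closed kernels. The identity $\|x^*x\|=\|x\|^2$ holds for all $x\in\mathcal{M}_S$ with respect to the operator norm, and so in particular for $x$ in the subalgebra.

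The stated Hilbert--Schmidt norm does not satisfy the C$^*$-identity exactly. We therefore work with the operator norm, noting that in finite dimensions all norms are equivalent.
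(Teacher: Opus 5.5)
Your proof is correct and follows the same direct-verification route as the paper: bilinearity of the commutator handles linear combinations, and the Leibniz rule handles products. Where the paper simply asserts that the remaining $\mathrm{C}^*$-axioms are inherited from $\mathcal{M}_S$, you show that closure under $\dagger$ is not automatic: it needs $\mathcal{A}$ itself to be $*$-closed, and e.g.\ the commutant of $\mathrm{Alg}\{\Id, \ket{0}\bra{1}\}$ is not $*$-closed. You also check the $\mathrm{C}^*$-identity in the operator norm rather than the Hilbert--Schmidt norm, so your version is the more complete of the two.
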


\begin{proof}
    We show closure of the commutant elements under the algebraic operations. 
    
    If $c_1, c_2 \in \comm(\mathcal{A}), a \in \mathcal{A} , \alpha, \beta \in \mathds{C}$, 
    \begin{align}
        &[\alpha c_1 + \beta c_2, a] = \alpha [c_1, a] + \beta [c_2, a] = 0 \text{ } \forall a \in \mathcal{A}, \\
        & [c_1c_2, a] = c_1[c_2, a] + [c_1, a]c_2 = 0\text{ } \forall a \in \mathcal{A}.
    \end{align}
    Since commutators are elements of the full matrix algebra $\mathcal{M}_S$, the other axioms of $C^*$-algebras automatically apply.
\end{proof}
\begin{definition}[Centre]
    A centre of an algebra $\mathcal{A}\subseteq \mathcal{M}_S$ is defined as:
    \begin{equation}
        Z(\mathcal{A}) = \mathcal{A} \cap \comm(\mathcal{A}).
    \end{equation}
\end{definition}
\begin{corollary}
    $Z(\mathcal{A})$ is also a \cstar.
\end{corollary}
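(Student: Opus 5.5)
The claim is that $Z(\mathcal{A}) = \mathcal{A}\cap\comm(\mathcal{A})$ is itself a $\mathrm{C}^*$-algebra. The plan is to show that it is closed under the algebraic operations and the involution, and then inherit the remaining structure from $\mathcal{M}_S$, exactly as in the proof that commutants are algebras.

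First, both $\mathcal{A}$ and $\comm(\mathcal{A})$ are unital subalgebras of $\mathcal{M}_S$. The first is unital by hypothesis, since we work with von-Neumann algebras throughout. The second is an algebra by the preceding proposition and trivially contains $\Id$. So the key step is to observe that an intersection of subalgebras is a subalgebra. If $z_1,z_2$ lie in both $\mathcal{A}$ and $\comm(\mathcal{A})$, then $\alpha z_1+\beta z_2$ and $z_1z_2$ lie in both, because each set is separately closed under these operations. Moreover, $\Id$ lies in both, so $Z(\mathcal{A})$ is unital.

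Second, I would check closure under the adjoint. Here $\mathcal{A}$ is $*$-closed since it is a $\mathrm{C}^*$-algebra. For the commutant, take $c\in\comm(\mathcal{A})$ and $a\in\mathcal{A}$. Then
\[
[c^\dagger,a] = -[c,a^\dagger]^\dagger = 0,
\]
because $a^\dagger\in\mathcal{A}$. Hence $\comm(\mathcal{A})$ is $*$-closed, and so is the intersection. This is the only point needing a small argument beyond set-theoretic closure; the earlier proposition did not address adjoints, so I would make it explicit here.

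Finally, the $\mathrm{C}^*$ identity $\|x^*x\|=\|x\|^2$ and completeness are inherited from $\mathcal{M}_S$. The norm and involution are simply restricted to $Z(\mathcal{A})$, which is a finite-dimensional and hence closed subspace. I expect no real obstacle. The only care needed is to use the operator norm rather than the Hilbert-Schmidt norm for the $\mathrm{C}^*$ identity, but as remarked earlier, finite dimensionality makes this harmless for closedness.
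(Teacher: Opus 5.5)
Your argument is correct and follows the route the paper intends: the corollary is stated without proof, as an immediate consequence of $Z(\mathcal{A}) = \mathcal{A}\cap\comm(\mathcal{A})$ and the preceding proposition that commutants are algebras, so $Z(\mathcal{A})$ is an intersection of two $\mathrm{C}^*$-subalgebras of $\mathcal{M}_S$ and inherits their structure. Your explicit check that $\comm(\mathcal{A})$ is closed under the adjoint, and your remark that the $\mathrm{C}^*$ identity needs the operator norm rather than the Hilbert--Schmidt norm, fill in points the paper glosses over.
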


\begin{definition}
    An algebra $\mathcal{F}\subseteq\mathcal{M}_S$ is called a \textit{factor} if its centre is trivial: $Z(\mathcal{F}) = \langle \Id \rangle$.
\end{definition}
\begin{proposition}
    Let $\mathcal{F}\subseteq \mathcal{M}_S$ be a factor. Then, there exists a unitary $V\in \mathcal{M}_S$ so that:
    \begin{equation}
        \mathcal{F}= \Ad_{V^{-1}}(\Id_{\mathcal{D}} \otimes \mathcal{M}_{\mathcal{E}}), 
    \end{equation}
    where $\mathcal{D, E}$ are subspaces satisfying $\dim \mathcal{D} \dim \mathcal{E} = \dim \mathcal{H}_S$ and $\dim \mathcal{M}_{\mathcal{E}}=(\dim \mathcal{E})^2 = \dim \mathcal{F}$.
\end{proposition}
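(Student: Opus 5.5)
The plan is to prove the structure of a factor in three stages. First, show that a factor with trivial centre is a \emph{simple} algebra. Second, exhibit a complete set of matrix units inside $\mathcal{F}$. Third, use those matrix units to build the unitary $V$ explicitly. Throughout we use that $\mathcal{F}$ is a unital $*$-closed subalgebra of the finite-dimensional $\mathcal{M}_S$. Every element of $\mathcal{F}$ is then a linear combination of self-adjoint elements, and the spectral projections of a self-adjoint $a\in\mathcal{F}$ are polynomials in $a$, so they lie in $\mathcal{F}$.

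\textbf{Step 1 (simplicity).} Let $\mathcal{J}\subseteq\mathcal{F}$ be a nonzero two-sided $*$-ideal. Note that $\mathcal{J}^*$ is an ideal, so $\mathcal{J}+\mathcal{J}^*$ can be used, and any ideal of a $\mathrm{C}^*$-algebra is automatically self-adjoint in finite dimensions. I would take a nonzero projection $q\in\mathcal{J}$ of maximal rank; it exists by spectral calculus on $j^*j\in\mathcal{J}$ for any nonzero $j\in\mathcal{J}$. I then show $q$ is a unit for $\mathcal{J}$. For $j\in\mathcal{J}$, the element $(1-q)j^*j(1-q)$ lies in $\mathcal{J}$; if it were nonzero, its support projection would lie in $\mathcal{J}$ and be orthogonal to $q$, contradicting maximality. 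Hence $j(1-q)=0$, and by symmetry $(1-q)j=0$. Consequently $q$ commutes with all of $\mathcal{F}$, because $fq$ and $qf$ both lie in $\mathcal{J}$ and equal $qfq$. So $q\in Z(\mathcal{F})=\langle\Id\rangle$, giving $q=\Id$ and $\mathcal{J}=\mathcal{F}$.

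\textbf{Step 2 (matrix units).} I would choose a minimal nonzero projection $p\in\mathcal{F}$. Minimality plus spectral calculus gives $p\mathcal{F}p=\mathds{C}p$, since any self-adjoint element of $p\mathcal{F}p$ has spectral projections below $p$. By simplicity, the ideal generated by $p$ is all of $\mathcal{F}$, so $\Id=\sum_k a_k p b_k$. From this I would extract partial isometries $v_i\in\mathcal{F}$ with $v_i^*v_i=p$ and $\sum_i v_iv_i^*=\Id$. The ranges $v_iv_i^*$ are mutually orthogonal minimal projections, obtained by a Zorn-free finite exhaustion: repeatedly pick a minimal projection $p'$ under the remaining complement with $p'\mathcal{F}p\neq0$, and polar-decompose a nonzero $p'xp$ to get a partial isometry from $p$ to $p'$. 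Setting $e_{ij}=v_iv_j^*$ gives matrix units spanning $\mathcal{F}$, so $\mathcal{F}\cong\mathcal{M}_n$ with $n^2=\dim\mathcal{F}$. I expect this step to be the main technical obstacle, specifically ensuring that each minimal projection in the complement is equivalent to $p$. That equivalence is exactly where simplicity from Step 1 is used: if $p'\mathcal{F}p=0$, the ideal generated by $p$ would miss $p'$.

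\textbf{Step 3 (the unitary).} Let $m=\operatorname{rank}p$ on $\mathcal{H}_S$ and fix an orthonormal basis $\{\xi_k\}_{k=1}^m$ of $p\mathcal{H}_S$. The vectors $v_i\xi_k$ are orthonormal, because $\langle v_i\xi_k,v_j\xi_l\rangle=\langle\xi_k,v_i^*v_j\xi_l\rangle=\delta_{ij}\delta_{kl}$. They span $\mathcal{H}_S$, since $\sum_i v_iv_i^*=\Id$. Define $V$ by $V v_i\xi_k=\ket{k}_{\mathcal{D}}\otimes\ket{i}_{\mathcal{E}}$ with $\dim\mathcal{D}=m$ and $\dim\mathcal{E}=n$. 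A direct check gives $\Ad_V(e_{ij})=\Id_{\mathcal{D}}\otimes\ket{i}\bra{j}_{\mathcal{E}}$. Therefore $\mathcal{F}=\Ad_{V^{-1}}(\Id_{\mathcal{D}}\otimes\mathcal{M}_{\mathcal{E}})$, with $\dim\mathcal{D}\dim\mathcal{E}=mn=\dim\mathcal{H}_S$ and $\dim\mathcal{F}=n^2=(\dim\mathcal{E})^2$, as claimed.
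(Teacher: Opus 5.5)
Your proof is correct, and it differs from the paper in that the paper gives no argument: it simply cites Lemma~2.2 of B\'eny's notes. You instead give a self-contained matrix-unit proof in four moves. First, trivial centre implies simplicity, because a maximal-rank projection in an ideal acts as a unit for that ideal and is therefore central. Second, a minimal projection $p$ satisfies $p\mathcal{F}p=\mathds{C}p$. Third, simplicity forces $p'\mathcal{F}p\neq 0$ for every minimal $p'$. This gives partial isometries $v_i$ from $p$ onto an orthogonal family of minimal projections summing to $\Id$. Fourth, you write down the unitary explicitly via $V v_i\xi_k=\ket{k}_{\mathcal D}\ket{i}_{\mathcal E}$.

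What your route buys is that it is constructive and uses only finite-dimensional spectral calculus, with no appeal to Wedderburn theory or the double commutant theorem. It also gives the dimensions a concrete meaning. $\dim\mathcal{E}$ is the number of mutually equivalent minimal projections and $\dim\mathcal{D}$ is their common rank, so $\dim\mathcal{D}\dim\mathcal{E}=\dim\mathcal{H}_S$ and $\dim\mathcal{F}=(\dim\mathcal{E})^2$ follow immediately rather than being inherited from a reference. The citation buys only brevity and consistency with the paper's later use of the same source for the general block decomposition.

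Two details are worth writing out.
\begin{itemize}
\item The claim that the $e_{ij}$ span $\mathcal{F}$ follows from $f=\sum_{i,j}v_iv_i^*fv_jv_j^*$ together with $v_i^*fv_j\in p\mathcal{F}p=\mathds{C}p$.
\item The opening remark about $*$-ideals in Step~1 is unnecessary, and slightly muddled. Your argument only uses $j^*j,\,jj^*\in\mathcal{J}$, so it works for any two-sided ideal.
\end{itemize}
Finally, your use of $*$-closure of $\mathcal{F}$ is essential: upper-triangular matrices have trivial centre but are not of this form. It is legitimately available from the paper's $\mathrm{C}^*$-algebra setting.
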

\begin{proof}
    See Lemma 2.2 in \cite{Bény2020}.
\end{proof}
\begin{proposition}
    A commutative algebra is contained in its commutant.
\end{proposition}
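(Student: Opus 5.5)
The statement is that a commutative algebra $\mathcal{A}\subseteq\mathcal{M}_S$ satisfies $\mathcal{A}\subseteq\comm(\mathcal{A})$. I would prove it directly from the definition of the commutant given above, with no appeal to representation theory.

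The plan has three steps.
\begin{enumerate}
\item Fix an arbitrary $c\in\mathcal{A}$. The goal is to show $c\in\comm(\mathcal{A})$.
\item By the definition of the commutant, $c\in\comm(\mathcal{A})$ means two things: $c\in\mathcal{M}_S$, and $[c,a]=0$ for every $a\in\mathcal{A}$.
\item The first condition holds because $\mathcal{A}\subseteq\mathcal{M}_S$ by assumption. The second is exactly the defining property of a commutative algebra applied to the pair $(c,a)$, which gives $[c,a]=0$ for all $a\in\mathcal{A}$.
\end{enumerate}
Since $c$ was arbitrary, the inclusion $\mathcal{A}\subseteq\comm(\mathcal{A})$ follows.

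There is no real obstacle here. The only point worth noting is that the commutant is taken inside the ambient algebra $\mathcal{M}_S$; when $\comm_S$ is meant relative to $\mathcal{M}_C$, as in the corollary on commutative algebras being conserved, the same argument applies with $S=C$.

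As an optional remark, this also gives $Z(\mathcal{A})=\mathcal{A}\cap\comm(\mathcal{A})=\mathcal{A}$ for Abelian $\mathcal{A}$. That identity is the form in which the result is used in \Cref{thm:central_conserved_charge} and its corollaries.
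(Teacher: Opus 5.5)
Your proof is correct and matches the paper's, which simply states that the inclusion follows directly from the definition of the commutant. The paper also adds the equivalent observation that a commutative algebra is its own centre, i.e.\ your closing remark $Z(\mathcal{A})=\mathcal{A}$.
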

\begin{proof}
    Follows from the definition. Equivalently, commutative algebras are their own centres (self-centred).
\end{proof}

\begin{theorem}[Commutant of a tensor product]
    Let $\mathcal{A}_{P}, \mathcal{B}_Q \subseteq \mathcal{M}_{PQ}$ for a bi-partite system $PQ$. Then $\comm(\mathcal{A} \otimes \mathcal{B)}$ = $\comm(\mathcal{A}) \otimes \comm(\mathcal{B}).$
\end{theorem}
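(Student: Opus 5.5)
The inclusion $\comm(\mathcal{A})\otimes\comm(\mathcal{B}) \subseteq \comm(\mathcal{A}\otimes\mathcal{B})$ is immediate. For the reverse inclusion we expand an arbitrary commuting element in a product basis and use the unit element of each algebra to peel off one tensor factor at a time. Throughout, we read $\mathcal{A}\otimes\mathcal{B}$ as the algebra generated by $\mathcal{A}\otimes\mathcal{B}$ inside $\mathcal{M}_P\otimes\mathcal{M}_Q$. Commutants are taken in $\mathcal{M}_P$, $\mathcal{M}_Q$ and $\mathcal{M}_{PQ}$ respectively, and both $\mathcal{A}$ and $\mathcal{B}$ are unital (von Neumann) algebras, as assumed throughout \Cref{app:algebras}.

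\emph{Easy inclusion.} Take $c\in\comm(\mathcal{A})$ and $d\in\comm(\mathcal{B})$. Then $(c\otimes d)(a\otimes b)=ca\otimes db=ac\otimes bd=(a\otimes b)(c\otimes d)$. Bilinearity extends this to sums of elementary tensors, and the product rule for commutators extends it to products. So $c\otimes d$ commutes with the generated algebra. Since $\comm(\mathcal{A}\otimes\mathcal{B})$ is a linear space (by the proposition that commutants are algebras), linear combinations of such $c\otimes d$ also lie in it.

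\emph{Reverse inclusion.} Let $X\in\comm(\mathcal{A}\otimes\mathcal{B})$. Fix a linear basis $\{f_k\}$ of $\mathcal{M}_Q$ and expand uniquely as $X=\sum_k x_k\otimes f_k$ with $x_k\in\mathcal{M}_P$.
\begin{itemize}
\item Because $\Id_Q\in\mathcal{B}$, the element $X$ commutes with every $a\otimes\Id_Q$. This gives $\sum_k [x_k,a]\otimes f_k=0$, and linear independence of the $f_k$ forces $[x_k,a]=0$ for all $k$. Hence every $x_k\in\comm(\mathcal{A})$.
\item Now re-expand $X$ in a basis $\{e_j\}$ of the subspace $\comm(\mathcal{A})$, extended to a basis of $\mathcal{M}_P$. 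By the previous step only the $e_j$ belonging to $\comm(\mathcal{A})$ appear, so $X=\sum_j e_j\otimes y_j$ with the $e_j$ linearly independent in $\comm(\mathcal{A})$.
\item Because $\Id_P\in\mathcal{A}$, the element $X$ commutes with every $\Id_P\otimes b$. The same independence argument gives $[y_j,b]=0$, so every $y_j\in\comm(\mathcal{B})$.
\end{itemize}
Together these give $X\in\comm(\mathcal{A})\otimes\comm(\mathcal{B})$.

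The main point needing care is the second step: the coefficients must be extracted with respect to a basis adapted to the subspace $\comm(\mathcal{A})$, so that the first-factor condition survives when we pass to the second factor. The cleanest way to handle this is to note that a tensor in $V\otimes\mathcal{M}_Q$ with $V=\comm(\mathcal{A})$ has a unique expansion over a basis of $V$. Unitality is essential: without $\Id$ in $\mathcal{A}$ and $\mathcal{B}$ we could not test against $a\otimes\Id$ or $\Id\otimes b$, and the result can fail.
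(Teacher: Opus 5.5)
Your proof is correct. It takes a different route from the paper, which gives no argument of its own and defers to the literature (Rousseau 1976 and references therein). That literature treats the general commutation theorem for tensor products of von Neumann algebras, $(\mathcal{A}\bar{\otimes}\mathcal{B})'=\mathcal{A}'\bar{\otimes}\mathcal{B}'$, which also holds in infinite dimensions but is deep there. Your coefficient-extraction argument is instead a self-contained finite-dimensional proof. It uses only $\Id_P\in\mathcal{A}$, $\Id_Q\in\mathcal{B}$ and the uniqueness of expansions over a basis adapted to the subspace $\comm(\mathcal{A})$. It never invokes the $*$-structure, the double commutant theorem, or even closure of $\mathcal{A},\mathcal{B}$ under multiplication. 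In effect you prove $\comm(\mathcal{A}\otimes\Id_Q)\cap\comm(\Id_P\otimes\mathcal{B})=\comm(\mathcal{A})\otimes\comm(\mathcal{B})$ for arbitrary sets of matrices, and the theorem follows because unitality puts both generating families inside $\mathcal{A}\otimes\mathcal{B}$. What your approach does not give is the infinite-dimensional case. There, operators are not finite sums of elementary tensors and the tensor product must be weakly closed, so the unique-expansion step has no direct analogue; this does not matter in the paper's finite-dimensional setting. Your caveat on unitality is also right and worth keeping. The paper's appendix definition of an algebra does not include a unit, and the statement fails without one. For example, take $\mathcal{A}=\mathbb{C}p$ and $\mathcal{B}=\mathbb{C}q$ with $p,q$ rank-one projectors on qubits. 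Then $\comm(p\otimes q)$ is $10$-dimensional while $\comm(\mathcal{A})\otimes\comm(\mathcal{B})$ is $4$-dimensional. The statement therefore relies on the paper's standing assumption that all algebras are unital.
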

\begin{proof}
    See \cite{Rousseau1976} and references therein.
\end{proof}

\begin{proposition}
    $\comm(\mathcal{M_S}) = \langle \Id \rangle$.
\end{proposition}
%\begin{proof}
%    \marcell{Schur's lemma?}
%\end{proof}
\begin{proposition}
    $\comm(\langle \Id \rangle) = \mathcal{M}_S$.
\end{proposition}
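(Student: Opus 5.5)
We need to show that $\comm(\langle \Id \rangle) = \mathcal{M}_S$, where the commutant is taken inside $\mathcal{M}_S$. The plan is a direct two-sided inclusion argument from the definition of the commutant; no representation theory or earlier structural results are needed.

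First, we note that $\langle \Id \rangle$ is the algebra generated by the identity. Since $\Id^n = \Id$, this algebra consists exactly of the scalar multiples $\{\alpha \Id : \alpha \in \mathds{C}\}$. So it suffices to test commutation against these scalars.

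Second, we establish the inclusion $\mathcal{M}_S \subseteq \comm(\langle \Id \rangle)$. Take an arbitrary $c \in \mathcal{M}_S$ and an arbitrary element $\alpha \Id$ of the generated algebra. By bilinearity of the commutator and the defining property $c\Id = \Id c = c$, we get $[c, \alpha\Id] = \alpha(c - c) = 0$. Hence $c$ commutes with every element of $\langle \Id \rangle$.

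Third, the reverse inclusion $\comm(\langle \Id \rangle) \subseteq \mathcal{M}_S$ holds by definition. The commutant was defined as a subset of $\mathcal{M}_S$, with the subscript $S$ indicating the ambient algebra in which it is taken. Combining the two inclusions gives equality.

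There is no real obstacle here. The only point needing care is making explicit that the commutant is taken relative to the ambient algebra $\mathcal{M}_S$ rather than some larger algebra. One can also remark that this is consistent with the double-commutant theorem: it is the dual statement of the preceding proposition $\comm(\mathcal{M}_S) = \langle \Id \rangle$, and applying $\comm$ to both sides of that proposition yields $\comm(\langle\Id\rangle) = \comm(\comm(\mathcal{M}_S)) = \mathcal{M}_S$.
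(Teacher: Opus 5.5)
Your proof is correct. The paper states this proposition without proof, treating it as immediate from the definition of the commutant, and your two-inclusion argument ($\langle \Id \rangle = \mathds{C}\Id$ commutes with every $c \in \mathcal{M}_S$, while $\comm(\cdot)$ is by definition taken inside $\mathcal{M}_S$) is exactly the verification left implicit there. Your closing double-commutant remark is also valid. It relies on the preceding proposition, which is itself unproved in the paper, and on the double-commutant theorem stated later, so the direct argument is the better primary route.
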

\begin{theorem}[Double commutants]
    If $\mathcal{A}$ is a von-Neumann algebra, then $\comm(\comm(\mathcal{A})) = \mathcal{A}$.
\end{theorem}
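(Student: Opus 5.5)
The inclusion $\mathcal{A} \subseteq \comm(\comm(\mathcal{A}))$ is immediate from the definition: every $a\in\mathcal{A}$ commutes with every element of $\comm(\mathcal{A})$. The real content is the reverse inclusion. Throughout I work in the finite-dimensional setting of this paper, with $\mathcal{A}\subseteq\mathcal{M}_S$ a unital $*$-closed subalgebra.

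I would follow the classical density-theorem argument, which in finite dimensions becomes purely algebraic. The first step is to show that for every vector $\xi\in\mathcal{H}_S$ and every $x\in\comm(\comm(\mathcal{A}))$, there is $a\in\mathcal{A}$ with $a\xi = x\xi$.

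To prove this, consider the subspace $K=\mathcal{A}\xi$. It contains $\xi$ because $\Id\in\mathcal{A}$, and it is $\mathcal{A}$-invariant. Let $P$ be the orthogonal projector onto $K$. Since $\mathcal{A}$ is closed under the adjoint, $K^\perp$ is also invariant: for $\eta\in K^\perp$ and $a\in\mathcal{A}$,
\[
\langle a\eta, b\xi\rangle=\langle \eta, a^\dagger b\xi\rangle=0 .
\]
Hence $P$ commutes with $\mathcal{A}$, so $P\in\comm(\mathcal{A})$. Then $x$ commutes with $P$, giving
\[
x\xi = xP\xi = Px\xi \in K = \mathcal{A}\xi .
\]

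The second step upgrades this single-vector statement to an operator identity by amplification. Take an orthonormal basis $e_1,\dots,e_d$ of $\mathcal{H}_S$. Let the amplified algebra $\mathcal{A}^{(d)} = \mathcal{A}\otimes\Id_d$ act on $\mathcal{H}_S\otimes\mathds{C}^d$, and set $\xi=\sum_k e_k\otimes f_k$. The key computation is the commutant of the amplification: $\comm(\mathcal{A}\otimes\Id_d)=\comm(\mathcal{A})\otimes\mathcal{M}_d$, i.e.\ $d\times d$ block matrices with entries in $\comm(\mathcal{A})$. This can be checked by writing a block matrix $(c_{jk})$ and requiring entrywise commutation with $a$. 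Its commutant in turn contains $x\otimes\Id_d$ whenever $x\in\comm(\comm(\mathcal{A}))$, since $x\otimes\Id_d$ commutes with every block matrix whose entries commute with $x$. Applying the first step to $\mathcal{A}^{(d)}$ and $\xi$ yields $a\in\mathcal{A}$ with $ae_k=xe_k$ for all $k$, so $x=a\in\mathcal{A}$.

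The main obstacle is the bookkeeping in the commutant of the amplified algebra: one must verify both that $\mathcal{A}\otimes\Id_d$ is still a unital $*$-algebra and that $x\otimes\Id_d$ lies in its double commutant. Both are routine block-matrix checks. Alternatively, one could bypass the amplification by invoking the block decomposition of finite algebras (cf. \Cref{app:rep_theory}), which gives $\mathcal{A}\cong\bigoplus_i\mathcal{M}_{\mathcal{D}_i}\otimes\Id_{\mathcal{E}_i}$ and $\comm(\mathcal{A})\cong\bigoplus_i\Id_{\mathcal{D}_i}\otimes\mathcal{M}_{\mathcal{E}_i}$. Applying the same formula again returns $\mathcal{A}$ directly, but that route uses the structure theorem, which itself usually relies on the present result. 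For that reason I would prefer the density argument.
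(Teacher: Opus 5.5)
Your proof is correct and complete, and it is essentially the standard von Neumann bicommutant argument (the projector onto $\mathcal{A}\xi$ lies in $\comm(\mathcal{A})$, then amplification to $\mathcal{A}\otimes\Id_d$ acting on $\mathcal{H}_S\otimes\mathds{C}^d$); the paper gives no argument of its own and defers to the finite-dimensional appendix of Harlow and to Landsman's notes, which use this same argument. Your explicit use of $*$-closure to make $K^\perp$ invariant is the necessary hypothesis: the paper's appendix definition of a von Neumann algebra only asks for a unital subalgebra, and for such algebras the statement fails (e.g.\ the upper-triangular $2\times 2$ matrices have double commutant $\mathcal{M}_2$), so you were right to assume it.
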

\begin{proof}
    See Theorem A.3. in appendix of \cite{Harlow2017} for the finite-dimensional case and \cite{Landsman1998_notes} in general.
\end{proof}
\section{Structure of sub-algebras \label{app:sub_algebra_structure}}
\begin{lemma}[Structure of sub-algebras]
    Let $L, C$ be subsystems of  $\mathcal{H}_{LC}$. Let $\mathcal R$ be an algebra satisfying
    \begin{equation}
        \mathcal{M}_L \subseteq \mathcal{R} \subseteq \mathcal{M}_{LC},
    \end{equation}
    then there exists $\mathcal{A}_C \subseteq \mathcal{M}_C$ such that 
    \begin{equation}
        \mathcal{R} = \mathcal{M}_L \otimes \mathcal{A}_C\text{.}
    \end{equation}
    \label{lemma:subalgebra_product}
\end{lemma}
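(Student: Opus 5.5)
The plan is to use a matrix-unit (partial trace) argument on the $L$ factor and never invoke the double commutant theorem. Fix an orthonormal basis $\{\ket{k}\}$ of $\mathcal{H}_L$ and the matrix units $e_{kl}=\ket{k}\bra{l}_L\otimes\Id_C$. Since $\mathcal{M}_L\subseteq\mathcal{R}$, every $e_{kl}\in\mathcal{R}$. Any $x\in\mathcal{M}_{LC}$ can be written uniquely as $x=\sum_{k,l}\ket{k}\bra{l}\otimes x_{kl}$ with $x_{kl}\in\mathcal{M}_C$. Define
\begin{equation*}
\mathcal{A}_C=\{a\in\mathcal{M}_C : \Id_L\otimes a\in\mathcal{R}\}.
\end{equation*}
This set is a subalgebra of $\mathcal{M}_C$ because $\mathcal{R}$ is an algebra. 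It is unital since $\Id\in\mathcal{M}_L\subseteq\mathcal{R}$, and it is $*$-closed whenever $\mathcal{R}$ is, which holds for all the algebras used in \Cref{thm:left_right_closure}, since those are $*$-closed.

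The inclusion $\mathcal{M}_L\otimes\mathcal{A}_C\subseteq\mathcal{R}$ is immediate. For $m\in\mathcal{M}_L$ and $a\in\mathcal{A}_C$ we have $m\otimes a=(m\otimes\Id_C)(\Id_L\otimes a)$, a product of two elements of $\mathcal{R}$. Linear combinations of such products then lie in $\mathcal{R}$.

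The reverse inclusion is the key step: every block $x_{kl}$ of an element $x\in\mathcal{R}$ must lie in $\mathcal{A}_C$. I would compute the compression
\begin{equation*}
e_{ik}\,x\,e_{lj}=\ket{i}\bra{j}\otimes x_{kl}\in\mathcal{R},
\end{equation*}
and then sum over the diagonal index:
\begin{equation*}
\sum_i e_{ik}\,x\,e_{li}=\Id_L\otimes x_{kl}\in\mathcal{R}.
\end{equation*}
This gives $x_{kl}\in\mathcal{A}_C$. It follows that $x=\sum_{k,l}(\ket{k}\bra{l}\otimes\Id)(\Id\otimes x_{kl})\in\mathcal{M}_L\otimes\mathcal{A}_C$, which establishes $\mathcal{R}=\mathcal{M}_L\otimes\mathcal{A}_C$.

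The only delicate point is making sure the block-extraction identity is stated correctly and that $\mathcal{A}_C$ inherits the $*$/unital structure needed to call it a $\mathrm{C}^*$-subalgebra. The computation itself is routine once the matrix units of $\mathcal{M}_L$ are used as a partial-trace device.
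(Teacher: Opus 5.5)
Your proof is correct, but it follows a different route from the paper's. The paper works through commutants. From $\mathcal{M}_L\subseteq\mathcal{R}$ it deduces $\comm(\mathcal{R})\subseteq\Id_L\otimes\mathcal{M}_C$ and writes $\comm(\mathcal{R})=\Id_L\otimes\mathcal{C}_C$. It then applies the double commutant theorem together with the commutant of a tensor product to get $\mathcal{R}=\comm(\Id_L\otimes\mathcal{C}_C)=\mathcal{M}_L\otimes\comm_C(\mathcal{C}_C)$.

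You instead use the matrix units $e_{kl}$ of $\mathcal{M}_L$ to extract blocks. Your identity $\sum_i e_{ik}\,x\,e_{li}=\Id_L\otimes x_{kl}$ is correct, so $\mathcal{A}_C=\{a\in\mathcal{M}_C:\Id_L\otimes a\in\mathcal{R}\}$ does the job. Your route is elementary and self-contained, gives an explicit description of $\mathcal{A}_C$, and never uses that $\mathcal{R}$ is closed under adjoints.

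That last point is a real gain. The double commutant step needs $\mathcal{R}$ to be a $*$-algebra, not merely a unital algebra as the statement (and the appendix definition of von Neumann algebra) suggests. For instance, take $\mathcal{R}=\mathcal{M}_L\otimes\mathcal{T}_C$ with $\mathcal{T}_C$ the upper-triangular $2\times 2$ matrices. Then $\comm(\mathcal{R})=\langle\Id\rangle$, so $\comm(\comm(\mathcal{R}))=\mathcal{M}_{LC}\neq\mathcal{R}$, while your argument still correctly returns $\mathcal{A}_C=\mathcal{T}_C$. In the paper's application this is harmless, because $\overline{\mathcal{M}_L}$ is generated by the self-adjoint sets $\Ad_U^t(\mathcal{M}_L)$ and is therefore $*$-closed.

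The paper's route, in exchange, is a two-line argument given the appendix machinery. It also presents $\mathcal{A}_C$ directly as a commutant, which makes its von Neumann structure manifest and ties it to $\comm(\mathcal{R})$. That is the form in which the commutant structure is used later for the local conserved charges.
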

%\marcell{ChatGPT proof -- take caution!}
\begin{proof}
    Since $\mathcal{M}_L \subseteq \mathcal{R}$, we have $\mathrm{Comm}(\mathcal{R}) \subseteq \mathrm{Comm}(\mathcal{M}_L \otimes \Id_C) = \Id_L \otimes \mathcal{M}_C$.
    Hence, there exists some algebra $\mathcal{C}_C$ such that $\mathrm{Comm}(\mathcal{R}) = \Id_L \otimes \mathcal{C}_C$.
    From the double-commutant theorem in finite-dimensions, 
    it follows that $\mathcal{R} = \mathrm{Comm}( \Id_L \otimes \mathcal{C}_C) = \mathcal{M}_L \otimes \mathcal{A}_C$, where $\mathcal{A}_C$ is the commutant of $\mathcal{C}_C$ relative to the $C$ subsystem.
\end{proof}
\begin{theorem}[Structure of commutative algebras]
    An Abelian \cstar $\mathcal{A}$ is spanned by $\dim \mathcal{A}$ mutually orthogonal projectors:
    \begin{equation}
        \mathcal{A} = \langle \Pi_1, ..., \Pi_{\dim\mathcal{A}}\rangle,
    \end{equation}
    \label{thm:abelian_structure}
    \noindent where $\Pi_i\Pi_j = \delta_{ij}\Pi_j$.
\end{theorem}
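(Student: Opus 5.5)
The plan is to use the spectral theorem for commuting normal operators, staying entirely inside the finite-dimensional abelian algebra $\mathcal{A}\subseteq\mathcal{M}_S$. Since $\mathcal{A}$ is a $\mathrm{C}^*$-algebra, it is closed under the adjoint. So every $a\in\mathcal{A}$ splits as $a=h_1+ih_2$ with $h_1=(a+a^\dagger)/2$ and $h_2=(a-a^\dagger)/(2i)$, both Hermitian and both in $\mathcal{A}$. Commutativity then makes all elements of $\mathcal{A}$ mutually commuting normal matrices.

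\textbf{Step 1: simultaneous diagonalisation.} Choose a finite basis $a_1,\dots,a_m$ of $\mathcal{A}$ as a vector space. Its Hermitian parts form a commuting family, so they admit a simultaneous orthonormal eigenbasis of $\mathcal{H}_S$. Grouping basis vectors by their joint eigenvalue tuples decomposes $\mathcal{H}_S=\bigoplus_k \mathcal{H}_k$, and on each $\mathcal{H}_k$ every element of $\mathcal{A}$ acts as a scalar. Let $P_k$ be the orthogonal projector onto $\mathcal{H}_k$. By construction $P_kP_l=\delta_{kl}P_k$ and $\sum_k P_k=\Id$, using that $\mathcal{A}$ is unital as assumed in the appendix.

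\textbf{Step 2: the projectors lie in $\mathcal{A}$.} This is the step I expect to be the main obstacle. By Lagrange interpolation, each spectral projector of a Hermitian $h\in\mathcal{A}$ is a polynomial in $h$, so it lies in $\mathcal{A}$. Each $P_k$ is a product of such spectral projectors, one for each basis element, taking the eigenvalue that defines block $k$. Since $\mathcal{A}$ is closed under products, $P_k\in\mathcal{A}$.

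\textbf{Step 3: spanning and dimension count.} Every $a\in\mathcal{A}$ equals $\sum_k \lambda_k(a)P_k$, so $\mathcal{A}\subseteq\mathrm{span}\{P_k\}$. Step 2 gives the reverse inclusion. The $P_k$ are nonzero and mutually orthogonal, hence linearly independent: multiply a vanishing linear combination by $P_j$ to isolate the $j$-th coefficient. Their number therefore equals $\dim\mathcal{A}$. Relabelling them as $\Pi_1,\dots,\Pi_{\dim\mathcal{A}}$ gives $\mathcal{A}=\langle\Pi_1,\dots,\Pi_{\dim\mathcal{A}}\rangle$ with $\Pi_i\Pi_j=\delta_{ij}\Pi_j$. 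This is the claimed structure, and it matches the irreducible-block picture used in the main text, with all $D_i=1$.
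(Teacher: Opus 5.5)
Your proof is correct, but it is not a reconstruction of the paper's argument: the paper's proof is only a pointer to Theorem~2.6 of an external reference (B\'eny's notes). Your simultaneous-diagonalisation argument is therefore a self-contained replacement for that citation.

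Compared with the citation, your route shows exactly which structure is used: closure under the adjoint (to reduce to commuting Hermitian matrices), closure under products (Step~2), and unitality. You invoke unitality in the wrong place, though. The identity $\sum_k P_k=\Id$ holds by construction whether or not $\mathcal{A}$ is unital, because the joint eigenspaces exhaust $\mathcal{H}_S$. Unitality is genuinely needed in Step~2: the Lagrange interpolating polynomial generally has a nonzero constant term, and without $\Id\in\mathcal{A}$ the projector onto the common kernel of $\mathcal{A}$ would not lie in $\mathcal{A}$. Since the paper's appendix assumes unital algebras, this costs you nothing. Also, strictly, each $P_k$ is a product of one spectral projector for each of the $2m$ Hermitian parts, since those are what define the blocks, not one per basis element.

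The citation buys brevity and places the result inside the general block-structure theory, as the case $D_i=1$. Your argument buys transparency and directly gives the decomposition $\mathcal{H}_S=\bigoplus_k\mathcal{H}_k$ with multiplicities $E_k=\dim\mathcal{H}_k$. Finally, because your $P_k$ are mutually orthogonal and sum to $\Id$, their linear span is already a unital algebra. So your ``spanned by'' conclusion also matches the paper's $\langle\cdot\rangle$ notation, which means the algebra generated.
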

\begin{proof}
    See Theorem 2.6 of \cite{Bény2020}.
\end{proof}

\begin{theorem}[Structure of finite algebras]
    Let $\mathcal{A}_S \subseteq \mathcal{M}_S$ be a \cstar. Then, $\mathcal{A}_S$ decomposes as a direct sum of factors under a unitary equivalence: 
    \begin{equation}
        \mathcal{A}_S \cong \bigoplus_i\mathcal{F}_i, 
    \end{equation}
    \noindent where each $\mathcal{F}_i$ is a factor. The dimensions satisfy $\sum_i\dim \mathcal{F}_i = \dim \mathcal{A}_S$.
\end{theorem}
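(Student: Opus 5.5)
We prove the structure theorem for finite-dimensional $\mathrm{C}^*$-subalgebras $\mathcal{A}_S\subseteq\mathcal{M}_S$ (equivalently unital $*$-closed subalgebras) through the centre. The plan has three steps: split $\mathcal{H}_S$ along the minimal projections of the centre $Z(\mathcal{A}_S)$, show that each compressed piece of $\mathcal{A}_S$ is a factor, and then count dimensions.

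\textbf{Step 1: diagonalise the centre.} The centre $Z(\mathcal{A}_S)=\mathcal{A}_S\cap\comm(\mathcal{A}_S)$ is a commutative $\mathrm{C}^*$-algebra: it is an intersection of two $*$-closed algebras, and it is unital because $\Id\in\mathcal{A}_S$. By \Cref{thm:abelian_structure} it is spanned by mutually orthogonal projectors $P_1,\dots,P_n$. Since $\Id\in Z(\mathcal{A}_S)$, these can be taken to be the minimal projectors, so that $\sum_iP_i=\Id$.

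Each $P_i$ lies in $\mathcal{A}_S$ and commutes with every element of $\mathcal{A}_S$. Hence every $a\in\mathcal{A}_S$ decomposes as
\begin{equation}
a=\sum_iP_iaP_i=\sum_iaP_i .
\end{equation}
We set $\mathcal{F}_i=\mathcal{A}_SP_i$, viewed as an algebra acting on $\mathcal{H}_i=P_i\mathcal{H}_S$ with unit $P_i$. The map $a\mapsto(aP_i)_i$ is an injective $*$-homomorphism onto $\bigoplus_i\mathcal{F}_i$. After choosing a unitary $V$ that sends $\mathcal{H}_S$ to $\bigoplus_i\mathcal{H}_i$ adapted to this orthogonal decomposition, we obtain $\mathcal{A}_S\cong\bigoplus_i\mathcal{F}_i$ under unitary equivalence. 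The direct sum also gives $\dim\mathcal{A}_S=\sum_i\dim\mathcal{F}_i$ immediately.

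\textbf{Step 2: each $\mathcal{F}_i$ is a factor.} Let $z\in Z(\mathcal{F}_i)$, with $z=aP_i$ for some $a\in\mathcal{A}_S$. For any $b\in\mathcal{A}_S$ we have $bP_i\in\mathcal{F}_i$, so $[z,bP_i]=0$. Since $z$ is supported on $\mathcal{H}_i$, we also have $[z,b(\Id-P_i)]=0$. Together these give $[z,b]=0$, so $z\in\mathcal{A}_S\cap\comm(\mathcal{A}_S)=Z(\mathcal{A}_S)$. Therefore $z$ is a linear combination of the $P_j$. Because $z=zP_i$, this combination must be a scalar multiple of $P_i$, which is the unit of $\mathcal{F}_i$. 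Hence $Z(\mathcal{F}_i)$ is trivial and $\mathcal{F}_i$ is a factor. At this point the factor proposition of \Cref{app:commutants} applies within each $\mathcal{H}_i$ and yields the tensor form $\Id_{\mathcal{D}_i}\otimes\mathcal{M}_{\mathcal{E}_i}$ (up to relabelling), although the theorem as stated does not need this.

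\textbf{Main obstacle.} The delicate point is Step 2, specifically where the argument uses minimality of the $P_i$. We must rule out a central element of $\mathcal{F}_i$ that is not central in $\mathcal{A}_S$, and we must ensure that $P_i$ cannot be split further. The support argument above handles the first concern, and minimality handles the second: an element $\sum_jc_jP_j$ satisfying $zP_i=z$ forces $c_j=0$ for all $j\neq i$. A second point to check is that $\mathcal{F}_i$ is closed under the algebra operations. Multiplication is closed because $aP_ibP_i=abP_i$, using that $P_i$ is central. The adjoint is closed because $(aP_i)^\dagger=a^\dagger P_i$, using that $P_i$ is a central self-adjoint projection.
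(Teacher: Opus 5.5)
Your argument is correct. It differs from the paper mainly in that the paper gives no argument at all for this statement and simply defers to the literature (Bény's notes and Pedersen), whereas you supply a self-contained proof.

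Your route is the standard one:
\begin{itemize}
\item split $\mathcal{H}_S$ along the minimal projections $P_i$ of $Z(\mathcal{A}_S)$;
\item set $\mathcal{F}_i=\mathcal{A}_S P_i$ acting on $P_i\mathcal{H}_S$;
\item show $Z(\mathcal{F}_i)=\mathds{C}P_i$ by lifting a central element of $\mathcal{F}_i$ to $Z(\mathcal{A}_S)$ via the support argument.
\end{itemize}
This is exactly the decomposition $\mathcal{A}=\sum_{z\in\min\mathcal{A}} z\mathcal{A}$ that the paper later takes for granted in \Cref{app:normaliser}.

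The only external input you need is \Cref{thm:abelian_structure} applied to the centre. You should also state the small observation that $\Id=\sum_i P_i$: expand $\Id$ in the $P_i$ and multiply by each $P_j$ to get every coefficient equal to $1$. Your phrase ``these can be taken to be the minimal projectors'' glosses over this step.

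What the cited references give in addition is the spatial form of each factor, $\mathcal{F}_i\cong\mathcal{M}_{\mathcal{D}_i}\otimes\Id_{\mathcal{E}_i}$ up to relabelling the tensor factors. This is the block-structure lemma of \Cref{app:rep_theory}, and it is the more substantial step, since it requires building matrix units from minimal projections inside a factor. As you note, the statement as written does not need it. You can recover it by applying the factor proposition of \Cref{app:commutants} inside each $P_i\mathcal{H}_S$, using that $\mathcal{F}_i$ is unital there with unit $P_i$.

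The trade-off is as follows. Your proof makes this theorem rest only on the commutative structure theorem. The paper's citation imports the stronger block-structure result at once, which is what it actually uses later for the normaliser and the wall structure.
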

\begin{proof}
    See \cite{Bény2020, Pedersen2018AutomorphismGroups}.
\end{proof}

\section{Representation theory of finite-dimensional algebras \label{app:rep_theory}}
\begin{lemma}[Block-structure of a $\mathrm{C}*$-algebra]
    Let $\mathcal{A}_S\subseteq \mathcal{M}_S$ be a \cstar over Hilbert space $\mathcal{H}_S$. Then, under the action of $\mathcal{A}_S$, the Hilbert space decomposes as a block sum: 
    \begin{equation}
        \mathcal{H}_S\cong \bigoplus_i \mathcal{H}_{\mathcal{D}_i} \otimes \mathcal{H}_{\mathcal{E}_i},
    \end{equation}
    \noindent where $\mathcal{D}_i, \mathcal{E}_i$ are subspaces hosting irreducible matrix blocks. The representation of the algebra decomposes as:
    \begin{equation}
        \mathcal{\mathcal{A}_S} \cong \bigoplus_i \mathcal{M}_{\mathcal{D}_i} \otimes \mathds{1}_{\mathcal{E}_i},
    \end{equation}
    \noindent where $i$ labels the irreducible representations, $D_i = \dim \mathcal{D}_i$ is the dimension of the subspace and $ \dim \mathcal{E}_i = E_i$ the dimension of degeneracy space. We have the constraints:
    \begin{align}
         \dim \mathcal{H}_S&= \sum_i D_i E_i, \\
         \dim \mathcal{A_S}&= \sum_i\dim  \mathcal{M}_{\mathcal{D}_i} = \sum_i ( D_i)^2 .
    \end{align}
    
\end{lemma}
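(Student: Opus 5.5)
We will prove the block-structure lemma, i.e.\ the Wedderburn decomposition of a finite-dimensional unital $\mathrm{C}^*$-subalgebra $\mathcal{A}_S\subseteq\mathcal{M}_S$, by reducing it to the factor case already stated in \Cref{app:commutants}. The overall strategy is to split the Hilbert space using the centre, and then apply the factor proposition on each summand.

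\textbf{Step 1: split along the centre.} The centre $Z(\mathcal{A}_S)=\mathcal{A}_S\cap\comm(\mathcal{A}_S)$ is a commutative $\mathrm{C}^*$-algebra. By \Cref{thm:abelian_structure} it is spanned by mutually orthogonal projectors $P_1,\dots,P_n$, and these are minimal central projectors with $\sum_i P_i=\Id$. Since each $P_i$ is central, $\mathcal{A}_S=\bigoplus_i P_i\mathcal{A}_S P_i$. Correspondingly, $\mathcal{H}_S=\bigoplus_i P_i\mathcal{H}_S$, and each compressed algebra $\mathcal{A}_i:=P_i\mathcal{A}_SP_i$ acts on $\mathcal{H}_i:=P_i\mathcal{H}_S$ with unit $P_i$.

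\textbf{Step 2: each summand is a factor.} We check that $Z(\mathcal{A}_i)=\langle P_i\rangle$ inside $\mathcal{M}_{\mathcal{H}_i}$. Take $z\in Z(\mathcal{A}_i)$. It commutes with $P_i\mathcal{A}_SP_i$, and it also commutes with every $P_j\mathcal{A}_SP_j$ for $j\neq i$, since both products vanish. Hence $z\in Z(\mathcal{A}_S)$ and $z=P_iz$. Minimality of $P_i$ then forces $z\propto P_i$.

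\textbf{Step 3: apply the factor proposition.} The factor proposition of \Cref{app:commutants} gives, for each $i$, a unitary on $\mathcal{H}_i$ realising $\mathcal{H}_i\cong\mathcal{H}_{\mathcal{D}_i}\otimes\mathcal{H}_{\mathcal{E}_i}$ with $\mathcal{A}_i\cong\mathcal{M}_{\mathcal{D}_i}\otimes\Id_{\mathcal{E}_i}$, after relabelling which tensor factor carries the algebra. Taking the direct sum of these unitaries gives a global unitary $V$ that implements the claimed decomposition. The dimension identities then follow directly. First, $\dim\mathcal{H}_S=\sum_i\dim\mathcal{H}_i=\sum_iD_iE_i$. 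Second, the summands are linearly independent because they are supported on orthogonal blocks, so $\dim\mathcal{A}_S=\sum_i\dim\mathcal{M}_{\mathcal{D}_i}=\sum_iD_i^2$.

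\textbf{Main obstacle.} Steps 1 and 2 are routine. The real content is the factor proposition, which we are taking from \cite{Bény2020}. If it had to be proved directly, I would proceed as follows. Pick a minimal projector $p\in\mathcal{A}_i$. The factor property makes the corner $p\mathcal{A}_ip$ one-dimensional. Then construct matrix units $e_{kl}\in\mathcal{A}_i$ from partial isometries connecting equivalent minimal projectors; their existence again follows from the factor property, since otherwise the central support of $p$ would be a nontrivial central projection. Finally, set $\mathcal{E}_i=p\mathcal{H}_i$ and identify $\mathcal{H}_i\cong\mathbb{C}^{D_i}\otimes\mathcal{E}_i$ through the matrix units.
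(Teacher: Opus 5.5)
Your proposal is correct, and it takes a different route from the paper in the sense that the paper gives no argument: its proof of this lemma is only a pointer to Theorem 2.7 of the cited reference.

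You instead reduce the general statement to the factor proposition of Appendix B. You split along the minimal central projections $P_i$ of $Z(\mathcal{A}_S)$, obtained from \Cref{thm:abelian_structure}. You then check that each corner $P_i\mathcal{A}_SP_i$ has trivial centre on $P_i\mathcal{H}_S$, which is essentially a proof of the Appendix C ``structure of finite algebras'' statement. Finally you apply the factor case block by block.

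What your route buys is transparency. It identifies the block index $i$ with the minimal central projections, which is the description the paper later relies on in \eqref{eq:decomp_min} and in the proof of \Cref{lemma:U_unique_decomposition}. It also isolates the only non-trivial input in the factor case, although that case is itself still taken from the literature. The paper's citation buys only brevity.

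Two small points to tighten.

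\textbf{Unitality.} Writing $\sum_iP_i=\Id$ in Step 1 uses $\Id\in\mathcal{A}_S$. This is the paper's standing convention of unital algebras, and it is genuinely needed. Without it, a block on which $\mathcal{A}_S$ acts as zero appears, and $\dim\mathcal{H}_S=\sum_iD_iE_i$ fails.

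\textbf{Attribution in your factor-case sketch.} It is the minimality of $p$, not the factor property, that gives $p\mathcal{A}_ip=\mathbb{C}p$. The factor property is what makes all minimal projections of $\mathcal{A}_i$ mutually equivalent, so that the matrix units $e_{kl}$ exist. You state this second role correctly in the following clause.
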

\begin{proof}
    See Theorem 2.7 of \cite{Bény2020}.
\end{proof}

\section{Structure of the unitary automorphism group \label{app:normaliser}}

\begin{definition}[Unitary automorphisms]
    Let $\mathcal{A}_S \subseteq \mathcal{M}_S$ be a \cstar. 
    We define $\mathrm{U}(\mathcal{A}_S)$ as the set of unitaries in $\mathcal{M}_S$ that leave the algebra invariant:
    \begin{align*}
        \mathrm{U}(\mathcal{A}_S) = &\{U\in \mathcal{M}_S | UU^{\dagger} = U^{\dagger} U = \Id \\ &\text{ and } \Ad_U(a) \in \mathcal{A}_S \text{, } \forall a \in \mathcal{A}_S \}.
    \end{align*}

    \noindent We also refer to $\mathrm{U}(\mathcal{A}_S)$ as the normaliser of $\mathcal{A}_S$. An element of $\mathrm{U}(\mathcal{A}_S)$ is called internal if it lies in $\mathcal{A}_S$, otherwise it is external.
\end{definition}
\begin{lemma}
    The unitary automorphisms form a group.
\end{lemma}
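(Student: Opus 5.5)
We need to check the group axioms for $\mathrm{U}(\mathcal{A}_S)$, viewed as a subset of the unitary group of $\mathcal{M}_S$. Associativity is inherited from matrix multiplication, and $\Id \in \mathrm{U}(\mathcal{A}_S)$ trivially, since $\Ad_{\Id}$ is the identity map. Closure under products is also immediate: if $U, V \in \mathrm{U}(\mathcal{A}_S)$, then $UV$ is unitary and $\Ad_{UV} = \Ad_U \circ \Ad_V$. Hence $\Ad_{UV}(\mathcal{A}_S) = \Ad_U(\Ad_V(\mathcal{A}_S)) \subseteq \Ad_U(\mathcal{A}_S) \subseteq \mathcal{A}_S$.

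The only nontrivial point is closure under inverses. The definition only requires $\Ad_U(\mathcal{A}_S) \subseteq \mathcal{A}_S$, not equality, so we must show that $\Ad_{U^{-1}} = \Ad_{U^\dagger}$ also maps $\mathcal{A}_S$ into itself. We use finite-dimensionality. The map $\Ad_U$ is injective on $\mathcal{M}_S$, since $U$ is invertible. Its restriction to the finite-dimensional subspace $\mathcal{A}_S$ is therefore an injective linear map $\mathcal{A}_S \to \mathcal{A}_S$, hence surjective, so $\Ad_U(\mathcal{A}_S) = \mathcal{A}_S$. Given $a \in \mathcal{A}_S$, pick $b \in \mathcal{A}_S$ with $UbU^\dagger = a$. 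Then $\Ad_{U^{-1}}(a) = U^\dagger a U = b \in \mathcal{A}_S$, so $U^{-1} \in \mathrm{U}(\mathcal{A}_S)$.

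As a cross-check, there is an alternative argument in the spirit of the first proof of \Cref{thm:left_right_equiv}. The superoperator $\Ad_U$ has a minimal polynomial without a zero root, so $\Ad_U^{-1}$ is a polynomial in $\Ad_U$. Every polynomial in $\Ad_U$ preserves the subspace $\mathcal{A}_S$, so $\Ad_U^{-1}$ does as well. The only place to be careful is this invariance-under-inverse step, and finite-dimensionality settles it. In infinite dimensions, as remarked in the conclusion, this could fail.
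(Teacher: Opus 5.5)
Your proof is correct, but it is organised differently from the paper's. The paper argues structurally in three lines: $\mathrm{U}(\mathcal{A}_S)$ is the intersection of the unitary group with the automorphisms of $\mathcal{A}_S$, and an intersection of groups is a group. You instead verify the subgroup axioms directly inside the unitary group.

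You also correctly isolate the one nontrivial axiom, closure under inverses. It needs an argument because the definition only demands $\Ad_U(\mathcal{A}_S)\subseteq\mathcal{A}_S$. Your injectivity-plus-dimension argument shows $\Ad_U(\mathcal{A}_S)=\mathcal{A}_S$, and so does your alternative that writes $\Ad_U^{-1}$ as a polynomial in $\Ad_U$.

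This is exactly the step the paper's argument uses silently. To regard $\Ad_U$ restricted to $\mathcal{A}_S$ as an automorphism at all, it must be onto. Likewise, the paper's ``intersection'' mixes maps on $\mathcal{A}_S$ with elements of $\mathcal{M}_S$. Made precise, it is the intersection of the unitary group with the stabiliser $\{U : \Ad_U(\mathcal{A}_S)=\mathcal{A}_S\}$. Identifying that set with the one in the definition again needs finite-dimensionality.

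So the paper's route is shorter and more conceptual, while yours is self-contained and shows exactly where finite dimension enters. That is also why a definition phrased with ``$\subseteq$'' could misbehave for infinite-dimensional algebras. The paper's conclusion does not say this about the normaliser; it raises the analogous worry for \Cref{thm:left_right_equiv}.
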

\begin{proof}
    The normaliser $\mathrm{U}(\mathcal{A})$ is the intersection of all automorphisms of $\mathcal{A}$ and the unitary group.
    The set of all automorphisms naturally forms a group.
    The intersection of two groups is itself also a group.
\end{proof}
\begin{lemma}[Normaliser of commutant]
    \begin{equation}
      \mathrm{U}(\mathcal{A}_S) = \mathrm{U}(\comm(\mathcal{A}_S)).
    \end{equation}
\end{lemma}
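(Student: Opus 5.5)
We show the two inclusions $\mathrm{U}(\mathcal{A}_S)\subseteq \mathrm{U}(\comm(\mathcal{A}_S))$ and $\mathrm{U}(\comm(\mathcal{A}_S))\subseteq \mathrm{U}(\mathcal{A}_S)$. The key fact is that conjugation by a unitary is a $*$-automorphism of the full matrix algebra $\mathcal{M}_S$, and so it commutes with the operation of taking commutants. The second inclusion will then follow from the first together with the double commutant theorem of \Cref{app:commutants}.

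First, we record an identity valid for any unitary $U\in\mathcal{M}_S$ and any subalgebra $\mathcal{X}\subseteq\mathcal{M}_S$:
\begin{equation*}
\Ad_U(\comm(\mathcal{X}))=\comm(\Ad_U(\mathcal{X})).
\end{equation*}
Take $c\in\comm(\mathcal{X})$ and $x\in\mathcal{X}$. Then $[UcU^\dagger,UxU^\dagger]=U[c,x]U^\dagger=0$, which gives the inclusion $\subseteq$. Applying the same argument with $U^\dagger$ to the algebra $\Ad_U(\mathcal{X})$ gives $\Ad_{U^\dagger}(\comm(\Ad_U(\mathcal{X})))\subseteq\comm(\mathcal{X})$. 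Applying $\Ad_U$ to both sides yields the reverse inclusion.

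Second, we upgrade the invariance condition in the definition of $\mathrm{U}(\mathcal{A}_S)$ to an equality. Suppose $\Ad_U(\mathcal{A}_S)\subseteq\mathcal{A}_S$. Since $\Ad_U$ is an injective linear map and $\mathcal{A}_S$ is finite-dimensional, the image has full dimension, so $\Ad_U(\mathcal{A}_S)=\mathcal{A}_S$. Combining this with the identity above gives
\begin{equation*}
\Ad_U(\comm(\mathcal{A}_S))=\comm(\Ad_U(\mathcal{A}_S))=\comm(\mathcal{A}_S),
\end{equation*}
so $U\in\mathrm{U}(\comm(\mathcal{A}_S))$. This proves the first inclusion.

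For the converse, apply the first inclusion to the algebra $\comm(\mathcal{A}_S)$, which is a $\mathrm{C}^*$-algebra by the proposition that commutants are algebras. This gives $\mathrm{U}(\comm(\mathcal{A}_S))\subseteq\mathrm{U}(\comm(\comm(\mathcal{A}_S)))$.

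The main obstacle is identifying $\comm(\comm(\mathcal{A}_S))$ with $\mathcal{A}_S$. The double commutant theorem requires $\mathcal{A}_S$ to be a von Neumann algebra, that is, unital and closed under adjoints. The standing convention of \Cref{app:algebras} is that all algebras considered are finite-dimensional unital $\mathrm{C}^*$-algebras, so this hypothesis holds. Under that convention $\comm(\comm(\mathcal{A}_S))=\mathcal{A}_S$, and the converse inclusion follows. Note that the lemma genuinely depends on this convention: without the identity in $\mathcal{A}_S$, a unitary could preserve the bicommutant without preserving $\mathcal{A}_S$ itself.
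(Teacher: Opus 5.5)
Your proof is correct and follows essentially the same route as the paper. The forward inclusion comes from conjugation commuting with taking commutants, which the paper writes as $[\Ad_{U^{-1}}(a),b]=0\Rightarrow[a,\Ad_U(b)]=0$; the reverse inclusion comes from running the same argument on $\comm(\mathcal{A}_S)$ and invoking the double commutant theorem. Your dimension count, which upgrades $\Ad_U(\mathcal{A}_S)\subseteq\mathcal{A}_S$ to an equality, makes explicit what the paper uses only implicitly when it asserts $\Ad_{U^{-1}}(a)\in\mathcal{A}_S$, and your remark that unitality is genuinely needed is accurate.
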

\begin{proof}
    % We will show that $U \in \mathrm{U}(\mathcal{A}_S)$ if and only if $U \in \mathrm{U}(\comm(\mathcal{A}_S))$.
    Let $a \in \mathcal{A}_S$, $b \in \comm (\mathcal{A}_S)$ and $U \in \mathrm{U}(\mathcal{A}_S)$.
    From the definition of the normaliser, we have $[\Ad_{U^{-1}}(a),b] = 0$ and hence, $[a, \Ad_{U}(b)] = 0$.
    Since $a$ was arbitrary $\Ad_{U}(b) \in \comm(\mathcal{A}_S)$ and therefore $U \in \mathrm{U}(\comm(\mathcal{A}_S))$.
    % For the sufficient condition, let $U\in \mathrm{U}(\mathcal{A}_S)$. Then, we have $[\Ad_U(a), b] = 0$ from the property of normaliser. This implies $[a, \Ad_{U^{-1}}(b)] = 0$. For every element $U$ of the normaliser, $U^{-1}$ is also an element, therefore the images of the commutant elements (of the form $\Ad_U(b)$) lie in the commutant of $\mathcal{A}_S$. Therefore $U$ is in the normaliser of $\comm(\mathcal{A}_S)$.
    For the reverse inclusion, use $\comm(\mathcal{A}_S)$ in place of $\mathcal{A}_S$ in the previous argument and use the double commutant theorem $\comm(\comm(\mathcal{A}_S)) = \mathcal{A}_S$ as appropriate.
    % For the necessary direction, we start with $U \in \mathrm{U}(\comm(\mathcal{A}_S)$. Then, $[a, \Ad_U(b)] = 0$ which implies that  $[\Ad_{U^{-1}}(b), a] = 0$. For every element $U^{-1}$, there is en element of the normaliser $U$. Therefore, the images of $\mathcal{A}_S$ elements under the automorphism (of the form $\Ad_U(a)$) lie inside $\comm(\comm(\mathcal{A}_S)) = \mathcal{A}_S$ under the double commutant theorem. Therefore $U$ is in the normaliser of $\mathcal{A}_S$.
\end{proof}
\begin{corollary}
    \begin{equation}\mathrm{U}(\mathcal{A}_S) = \mathrm{U}(\mathcal{A}_S) \cap  \mathrm{U}(\comm(\mathcal{A}_S)).
    \end{equation}
\end{corollary}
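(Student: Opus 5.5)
The corollary asserts that $\mathrm{U}(\mathcal{A}_S)$ coincides with its intersection with $\mathrm{U}(\comm(\mathcal{A}_S))$. I will derive it directly from the preceding lemma (normaliser of commutant), using only elementary set theory.

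The plan has two inclusions. The inclusion $\mathrm{U}(\mathcal{A}_S) \cap \mathrm{U}(\comm(\mathcal{A}_S)) \subseteq \mathrm{U}(\mathcal{A}_S)$ is trivial, since an intersection is contained in each of its members. For the reverse inclusion, I take an arbitrary $U \in \mathrm{U}(\mathcal{A}_S)$. The preceding lemma gives $\mathrm{U}(\mathcal{A}_S) = \mathrm{U}(\comm(\mathcal{A}_S))$, so $U \in \mathrm{U}(\comm(\mathcal{A}_S))$ as well. Hence $U$ lies in the intersection. Equivalently, for any sets $X = Y$ we have $X = X \cap Y$.

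If I want the argument to be self-contained rather than citing the lemma, I will only need its forward direction. Take $U$ with $\Ad_U(\mathcal{A}_S) \subseteq \mathcal{A}_S$. By finite-dimensionality and injectivity of $\Ad_U$, in fact $\Ad_U(\mathcal{A}_S) = \mathcal{A}_S$, so $\Ad_{U^{-1}}$ also preserves $\mathcal{A}_S$. Then for $b \in \comm(\mathcal{A}_S)$ and $a \in \mathcal{A}_S$,
\begin{equation}
[a, \Ad_U(b)] = \Ad_U\left([\Ad_{U^{-1}}(a), b]\right) = 0,
\end{equation}
so $\Ad_U(b) \in \comm(\mathcal{A}_S)$.

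The only point needing care is that equality of images step, $\Ad_U(\mathcal{A}_S) = \mathcal{A}_S$, which is where closure under $U^{-1}$ comes from. The earlier lemma's proof used $\Ad_{U^{-1}}(a) \in \mathcal{A}_S$ implicitly. I would justify it by dimension counting: $\Ad_U$ is injective, so the inclusion $\Ad_U(\mathcal{A}_S) \subseteq \mathcal{A}_S$ is between subspaces of equal dimension. Beyond this, no obstacle is expected.
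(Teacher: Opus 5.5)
Your proposal is correct and matches the paper, where the corollary is an immediate consequence of the preceding lemma $\mathrm{U}(\mathcal{A}_S) = \mathrm{U}(\comm(\mathcal{A}_S))$. As you note, only the inclusion $\mathrm{U}(\mathcal{A}_S) \subseteq \mathrm{U}(\comm(\mathcal{A}_S))$ is needed. Your dimension-counting argument that $\Ad_{U^{-1}}$ also preserves $\mathcal{A}_S$ makes explicit a step the lemma's proof uses without comment.
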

\begin{lemma}[Normaliser of the centre]
    \begin{equation}\mathrm{U}(\mathcal{A}_S) \subseteq \mathrm{U}(Z(\mathcal{A}_S)).\end{equation}
    \label{lemma:comm_normaliser}
\end{lemma}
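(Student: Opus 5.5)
The plan is to show directly that conjugation by any $U \in \mathrm{U}(\mathcal{A}_S)$ maps the centre $Z(\mathcal{A}_S) = \mathcal{A}_S \cap \comm(\mathcal{A}_S)$ into itself. It suffices to treat each factor of the intersection separately and then intersect.

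First, by the definition of the normaliser, $\Ad_U(\mathcal{A}_S) \subseteq \mathcal{A}_S$ for every $U \in \mathrm{U}(\mathcal{A}_S)$. By the preceding lemma (Normaliser of commutant), $\mathrm{U}(\mathcal{A}_S) = \mathrm{U}(\comm(\mathcal{A}_S))$, so also $\Ad_U(\comm(\mathcal{A}_S)) \subseteq \comm(\mathcal{A}_S)$.

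Next, take $z \in Z(\mathcal{A}_S)$. Since $z \in \mathcal{A}_S$, we get $\Ad_U(z) \in \mathcal{A}_S$. Since $z \in \comm(\mathcal{A}_S)$, we get $\Ad_U(z) \in \comm(\mathcal{A}_S)$. Hence $\Ad_U(z) \in \mathcal{A}_S \cap \comm(\mathcal{A}_S) = Z(\mathcal{A}_S)$, so $U \in \mathrm{U}(Z(\mathcal{A}_S))$. Unitarity of $U$ is automatic, which gives the inclusion. The corollary $\mathrm{U}(\mathcal{A}_S) = \mathrm{U}(\mathcal{A}_S)\cap\mathrm{U}(\comm(\mathcal{A}_S))$ packages the two invariances used here.

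There is essentially no obstacle: the only nontrivial input is the commutant-normaliser equality, which is already established. If one prefers to avoid invoking it, an equivalent route is to show $\Ad_U(z)$ commutes with every $a \in \mathcal{A}_S$. Writing $a = \Ad_U(\Ad_{U^{-1}}(a))$, one needs $\Ad_{U^{-1}}(a) \in \mathcal{A}_S$. In finite dimensions this follows because $\Ad_U$ restricted to $\mathcal{A}_S$ is injective, hence a bijection onto $\mathcal{A}_S$, so $U^{-1}$ also normalises $\mathcal{A}_S$. Then $[\Ad_U(z), a] = \Ad_U([z, \Ad_{U^{-1}}(a)]) = 0$.
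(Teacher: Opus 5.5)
Your proof is correct. Your fallback argument is the paper's proof. There one notes $[\Ad_U(a), z]=0$ because $\Ad_U(a)\in\mathcal{A}_S$. Conjugating by $U^{-1}$ gives $[a,\Ad_{U^{-1}}(z)]=0$. The group property of $\mathrm{U}(\mathcal{A}_S)$ then swaps $U$ and $U^{-1}$, giving $\Ad_U(z)\in\comm(\mathcal{A}_S)$.

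Your main route instead invokes the already-established equality $\mathrm{U}(\mathcal{A}_S)=\mathrm{U}(\comm(\mathcal{A}_S))$, whose proof is that same conjugation computation, and then intersects the two invariances. This is shorter. It also states explicitly that $\Ad_U(z)\in\mathcal{A}_S$, which you need to land in the centre. The paper only shows that $\Ad_U(z)$ lies in the commutant and leaves membership in $\mathcal{A}_S$ implicit. Your finite-dimensional injectivity remark also gives a cleaner justification of $U^{-1}\in\mathrm{U}(\mathcal{A}_S)$ than the paper's group lemma.
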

\begin{proof}
    For every $z \in Z(\mathcal{A}_S)$ and $a \in \mathcal{A}_S$, it holds that $[a,z] =0$. Then, $[\Ad_U(a), z] = 0$ using the property of the normaliser. It follows that $[a, \Ad_U^{-1}(z)] = 0$. Since the unitaries of $\mathrm{U}(\mathcal{A})$ form a group we have $[a, \Ad_U(z)] = 0$ for every $U \in \mathrm{U}(\mathcal{A}_S)$. Therefore the image of $z$ under the automorphism commutes with every element of $a$, therefore it must be an element of the centre.
\end{proof}

Now we would like to better understand the index set for the direct sum in the decomposition.
% Minimal projectors -> joint representations
A minimal central projection $z \in \min(\mathcal{A})$ is a non-null central idempotent element which contains no proper subprojection.
These are connected to the decomposition of $\mathcal{A}$ as follows,
\begin{align}
    \mathcal{A} = \sum_{z \in \min \mathcal{A}} z \mathcal{A} \cong \bigoplus_{z \in \min \mathcal{A}} (\mathcal{M}_{\mathcal{D}_z} \otimes \mathds{1}_{\mathcal{E}_z})
    \label{eq:decomp_min}
\end{align}
where the equality is a canonical isomorphism and the $\cong$ denotes spatial isomorphism, rather than abstract isomorphism.
The minimal central projections of $\mathcal{A}$ are the same as those for $\mathcal{A}'$; a fact which underlies how they may be simultaneously decomposed.
To each $z$ we can identify a subrepresentation $\pi_z$, using the first isomorphism theorem for modules for the module homomorphism which multiplies by $z$.
This is both a representation of $\mathcal{A}$ and $\mathcal{A'}$, a property to which we refer as being a joint representation.
It is also an \emph{irreducible} joint representation, meaning that it contains no proper non-trivial subspace which is also a joint representation.
Equivalently, it is an irreducible representation of the product algebra $\mathcal{A}\mathcal{A'}$.
For $x,y \in \min(\mathcal{A})$, the joint representations $\pi_x$ and $\pi_y$ are isomorphic (as joint representations) if they are both isomorphic as $\mathcal{A}$-representations and as $\mathcal{A}'$-representations.
This is however not equivalent to being isomorphic as $\mathcal{A}\mathcal{A}'$-representations.

% Equivalence classes and permutation groups of joint representations
We can form a permutation group $\Sigma$ over $\min \mathcal{A}$ for which the joint-isomorphism classes are the conjugacy classes.
After fixing a spatial isomorphism in \cref{eq:decomp_min}, isomorphic joint-representations are essentially equal (canonically isomorphic).
In the finite-dimensional case, each joint-representation can then be characterised by two numbers $D_z$ and $E_z$, the dimensions of the subspaces $\mathcal{D}_z$ and $\mathcal{E}_z$.
Hence, a pair of joint-representations are isomorphic if and only if they agree on these two dimensions.
This provides $\Sigma$ with a representation on the underlying state space $\mathcal{H}$ as an operator that rearranges the terms of the direct-sum decomposition.
In this way, $\Sigma$ can be identified with a subgroup of $\mathrm{U}(\mathcal{A})$ and this allows us to define a coset space $\mathrm{U}(\mathcal{A}) / \Sigma$, consisting of the equivalence classes of $\mathrm{U}(\mathcal{A})$ under right-multiplication by $\Sigma$.

\begin{lemma}[Representative elements of the coset space]  
    Within every coset in $\mathrm{U}(\mathcal{A})/\Sigma$ there is a unique representative element which leaves $Z(\mathcal{A})$ invariant, together these form a set of representatives $V$. This means that any automorphism $U \in \mathrm{U}(\mathcal{A})$ may be uniquely decomposed into a representative element $v \in V$ and a permutation $\sigma \in \Sigma$ such that $U = v \sigma$.
    \label{lemma:U_unique_decomposition}
\end{lemma}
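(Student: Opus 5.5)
The approach is to read ``leaves $Z(\mathcal{A})$ invariant'' as ``fixes $Z(\mathcal{A})$ pointwise''. Pointwise is the only non-trivial reading, since by \Cref{lemma:comm_normaliser} every element of $\mathrm{U}(\mathcal{A})$ already maps $Z(\mathcal{A})$ into itself. The plan is then to show that $\Ad_U$ induces a permutation of $\min\mathcal{A}$ lying in $\Sigma$, and to strip it off.

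\emph{Step 1: $\Ad_U$ permutes the minimal central projections.} By \Cref{lemma:comm_normaliser}, $\Ad_U$ restricts to a $*$-automorphism of the Abelian algebra $Z(\mathcal{A})$. By \Cref{thm:abelian_structure}, $Z(\mathcal{A})$ is spanned by the orthogonal projections $\min\mathcal{A}$. An automorphism preserves projections, orthogonality and minimality, so $\Ad_U(z)=\tau(z)$ for a bijection $\tau$ of $\min\mathcal{A}$.

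\emph{Step 2: $\tau\in\Sigma$.} This is the step I expect to be the main obstacle, because it requires care with the definition of $\Sigma$. I would show that $\pi_z$ and $\pi_{\tau(z)}$ are jointly isomorphic, which by the discussion above reduces to $D_z=D_{\tau(z)}$ and $E_z=E_{\tau(z)}$.
\begin{itemize}
\item The map $\Ad_U$ restricts to an algebra isomorphism $z\mathcal{A}\to\tau(z)\mathcal{A}$. Since $z\mathcal{A}\cong\mathcal{M}_{\mathcal{D}_z}$ and $\tau(z)\mathcal{A}\cong\mathcal{M}_{\mathcal{D}_{\tau(z)}}$, comparing dimensions gives $D_z=D_{\tau(z)}$.
\item The unitary $U$ maps $z\mathcal{H}$ onto $\tau(z)\mathcal{H}$, so $\tr z=\tr\tau(z)$, i.e.\ $D_zE_z=D_{\tau(z)}E_{\tau(z)}$. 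Hence $E_z=E_{\tau(z)}$.
\end{itemize}
Because $\Sigma$ was defined so that its conjugacy classes are exactly these joint-isomorphism classes, I take $\Sigma$ to be the full group of class-preserving permutations, so $\tau\in\Sigma$. Let $\sigma$ be its fixed spatial representative, which rearranges the direct-sum blocks of \cref{eq:decomp_min} and satisfies $\Ad_\sigma(z)=\tau(z)$.

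\emph{Step 3: existence.} Set $v=U\sigma^{-1}$. It lies in $\mathrm{U}(\mathcal{A})$ because $\Sigma\subseteq\mathrm{U}(\mathcal{A})$, and it lies in the coset $U\Sigma$. Moreover
\[
\Ad_v(z)=\Ad_U\bigl(\tau^{-1}(z)\bigr)=z \quad\text{for all } z\in\min\mathcal{A},
\]
so $\Ad_v$ fixes $Z(\mathcal{A})$ pointwise, by linearity.

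\emph{Step 4: uniqueness.} Suppose $v\sigma=v'\sigma'$ with both $v,v'$ fixing $Z(\mathcal{A})$ pointwise. Then $\sigma'\sigma^{-1}=v'^{-1}v$ fixes every $z\in\min\mathcal{A}$. An element of $\Sigma$ fixing all minimal central projections corresponds to the identity permutation. Since the representation of $\Sigma$ is faithful and maps the identity permutation to $\Id$, we get $\sigma=\sigma'$ and hence $v=v'$.

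The delicate point in this last step is verifying that the chosen representation of $\Sigma$ is a genuine group homomorphism with $\Ad_\sigma(z)=\sigma(z)$, rather than merely a choice of block-rearranging operators. I would secure this by fixing canonical identifications between the blocks of equal $(D,E)$ in \cref{eq:decomp_min}, so that each $\sigma$ acts as the corresponding permutation matrix on the block labels.
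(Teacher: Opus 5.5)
Your proposal is correct and follows essentially the same route as the paper: extract the permutation that $\Ad_U$ induces on the minimal central projections, check by dimension counting that it preserves $(D_z,E_z)$ and so lies in $\Sigma$, and get uniqueness because the only element of $\Sigma$ fixing $Z(\mathcal{A})$ pointwise is the identity. Your small variations are sound and slightly tighten the paper's argument: you use that $*$-automorphisms preserve minimal projections instead of the paper's minimal-trace induction, you obtain $E_z=E_{\tau(z)}$ from $\tr z=D_zE_z$ (the paper asserts this alongside $D_i^2=D_j^2$ without writing it out), and you construct $v=U\sigma^{-1}$ explicitly.
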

\begin{proof}
  Let $U \in \mathrm{U}(\mathcal{A})$.
  Since $\mathrm{U}(\mathcal{A})$ is the unitary automorphism group of $\mathcal{A}$, the image of $Z(\mathcal{A})$ under $U$ is $Z(\mathcal{A})$ itself.
  Recall that the centre $Z(\mathcal{A})$ has a basis of projection operators, $\Pi_i$ for $i \in I$, corresponding to each term of the direct sum decomposition of $\mathcal{H}$.
  Examine the action of $U$ on this basis,
  \begin{align}
      \Ad_U : \Pi_i \mapsto \sum_{j \in J} \Pi_j & \text{ for some $J \subseteq I$.}
  \end{align}
  since it must preserve the eigenvalues of $\Pi_i$.
  If we take some element of the basis $\Pi_i$ such that the trace of $\Pi_i$ is minimal, then $J$ consist of a single element $j$.
  Since every other element of the basis is orthogonal to $\Pi_i$, the image of every other element is also orthogonal to the image of $\Pi_i$ so we can remove $i$ from the domain basis set and remove $j$ from the codomain basis set and proceed inductively.
  This constructs a permutation $\sigma$ acting on $I$ by mapping each $i$ to a corresponding element $j$.

  % \sigma \in \Sigma
  Now we must show that $\sigma$ is in $\Sigma$.
  The dimensions of $\mathcal{A}$ and $\comm(\mathcal{A})$ as algebras within an element $i$ of the decomposition are $D_i^2 = \dim (\Pi_i \mathcal{A} \Pi_i)$ and $E_i^2 = \dim(\Pi_i \comm(\mathcal{A}) \Pi_i)$.
  We can relate the dimensions within $i$ and $j$ as follows,
  \begin{align}
      \Pi_j \mathcal{A} \Pi_j
      &= \Pi_j U \mathcal{A} U^\dagger \Pi_j \\
      &= (U \Pi_i U^\dagger) U \mathcal{A} U^\dagger (U \Pi_i U^\dagger) \\
      &= U (\Pi_i \mathcal{A} \Pi_i) U^\dagger
  \end{align}
  which has the same dimension as $\Pi_i \mathcal{A} \Pi_i$ since $U$ is unitary.
  % Since $\Ad_{U}$ preserves the dimension of $\mathcal{A}$ and $\mathcal{A'}$ as algebras, we have that $D_i^2 = D_j^2$ and $E_i^2 = E_j^2$.
  Hence, $D_i^2 = D_j^2$ and $E_i^2 = E_j^2$.
  Therefore, the permutation is in the appropriate subgroup $\Sigma$ defined earlier.

  % Proof of Uniqueness
  Uniqueness of this representative element follows because the only element of $\Sigma$ which leaves $Z(\mathcal{A})$ fixed is the identity element.
\end{proof}

\begin{lemma}
  The set of representatives $V$ decomposes as $V \cong \oplus_{i\in I} V_i$.
  \label{lemma:V_direct_sum}
\end{lemma}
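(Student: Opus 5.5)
Since each representative $v\in V$ leaves $Z(\mathcal{A})$ invariant, and by \Cref{lemma:U_unique_decomposition} the permutation attached to $v$ is the identity, $\Ad_v$ fixes each minimal central projection $\Pi_i$ individually (not merely the centre as a set). The plan is to turn this into block-diagonality of $v$ and then identify each block. \textbf{Step 1:} since $\Ad_v(\Pi_i)=\Pi_i$ for every $i\in I$, the unitary $v$ commutes with every $\Pi_i$. Because $\sum_i \Pi_i=\Id$ with $\Pi_i\Pi_j=\delta_{ij}\Pi_j$, this gives
\begin{equation}
  v=\sum_{i\in I}\Pi_i v\Pi_i=\bigoplus_{i\in I} v_i,\qquad v_i:=\Pi_i v\Pi_i\big|_{\Pi_i\mathcal{H}},
\end{equation}
and each $v_i$ is unitary on $\Pi_i\mathcal{H}\cong\mathcal{H}_{\mathcal{D}_i}\otimes\mathcal{H}_{\mathcal{E}_i}$.

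\textbf{Step 2:} show that $v_i$ normalises the block $\Pi_i\mathcal{A}\Pi_i\cong\mathcal{M}_{\mathcal{D}_i}\otimes\Id_{\mathcal{E}_i}$. For $a\in\mathcal{A}$, $\Pi_i a\Pi_i\in\mathcal{A}$ (as $\Pi_i$ is central), and $v(\Pi_i a\Pi_i)v^\dagger=\Pi_i\, vav^\dagger\,\Pi_i\in\Pi_i\mathcal{A}\Pi_i$. Hence $v_i\in\mathrm{U}(\mathcal{M}_{\mathcal{D}_i}\otimes\Id_{\mathcal{E}_i})$. Conversely, any direct sum $\bigoplus_i w_i$ with each $w_i$ normalising the $i$-th block normalises $\mathcal{A}$ and fixes every $\Pi_i$, so it belongs to $V$. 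This yields the bijection $V\cong\bigoplus_i V_i$, with $V_i:=\mathrm{U}(\mathcal{M}_{\mathcal{D}_i}\otimes\Id_{\mathcal{E}_i})$ inside the unitaries on $\Pi_i\mathcal{H}$. The map respects products, so $V$ is a subgroup and the isomorphism is one of groups.

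\textbf{Step 3 (main obstacle):} identify $V_i\cong \mathrm{U}_{\mathcal{D}_i}\otimes\mathrm{U}_{\mathcal{E}_i}$, as needed for \Cref{thm:wall_structure}. I would first note that $\Ad_{v_i}$ restricts to an automorphism of the full matrix algebra $\mathcal{M}_{\mathcal{D}_i}$. By Skolem--Noether it is inner, implemented by some $t_i\in\mathrm{U}_{\mathcal{D}_i}$. Then $(t_i\otimes\Id)^\dagger v_i$ commutes with $\mathcal{M}_{\mathcal{D}_i}\otimes\Id$, so by the tensor-product commutant theorem it lies in $\Id\otimes\mathcal{M}_{\mathcal{E}_i}$; being unitary, it equals $\Id\otimes r_i$ with $r_i$ unitary. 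Thus $v_i=t_i\otimes r_i$. The subtlety to handle is the phase ambiguity $t\otimes r=(e^{i\phi}t)\otimes(e^{-i\phi}r)$, so the statement should be read as the image of $\mathrm{U}_{\mathcal{D}_i}\times\mathrm{U}_{\mathcal{E}_i}$. I expect that making this phase redundancy, and the fixing of the spatial isomorphism in \eqref{eq:decomp_min}, precise is where care is required, rather than the block-diagonality itself.
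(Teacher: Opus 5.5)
Your Step 1 is essentially the paper's argument: the paper defines $v_i=\pi_i v\pi_i^{-1}$ and concludes $v=\bigoplus_i v_i$ from $\pi_j v\pi_i^{-1}=0$ for $i\neq j$, which is your observation that $v$ commutes with each $\Pi_i$ (you derive it more explicitly from $\Ad_v(\Pi_i)=\Pi_i$). Beyond that, your Step 2 supplies the converse inclusion (every block-diagonal normaliser lies in $V$) that the paper leaves implicit, and your Step 3 is really the content of the paper's next lemma, which the paper proves by applying Skolem--Noether to both $\mathcal{A}_i$ and $\mathcal{A}_i'$ rather than by your single Skolem--Noether step followed by the commutant theorem.
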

\begin{proof}
  For any $\xi$ be some vector state of $\mathcal{A}$ and $v$ an element of the $V$.
  Let $\pi_i : \mathcal{H} \rightarrow \mathcal{H}_{D_i} \otimes \mathcal{H}_{E_i} $ for $i \in I$ be the corresponding natural projection from the direct sum decomposition of $\mathcal{H}$.
  Let $v_i = \pi_i v \pi_i^{-1}$ denote a block of $v$.
  Similarly, we may decompose $\xi$ into a sum of states from these subspaces,
  \begin{align}
    \xi = \bigoplus_{i \in I} \xi_i = \sum_{i \in I} \pi_i^{-1} \xi_i
    \text{.}
  \end{align}

  First, consider the action of the claimed decomposition,
  \begin{align}
      \left(\bigoplus_{i \in I} v_i \right) \xi = \bigoplus_{i \in I} v_i \xi_i
      \text{.}
  \end{align}
  We now compare this to the action of $v$ itself,
  \begin{align}
    v \xi
    &= \bigoplus_{j \in I} \pi_j v \xi
    = \sum_{i \in I} \bigoplus_{j \in I} \pi_j v \pi_i^{-1} \xi_i \\
    &= \bigoplus_{j \in I} \pi_j v \pi_j^{-1} \xi_j \\
    &= \bigoplus_{i \in I} v_i \xi_i
  \end{align}
  since $\pi_j v \pi_i^{-1} {=} 0$ for $i {\neq} j$ from the previous lemma.
  Hence, the action of $v$ and $\oplus_i v_i$ is equal on any vector state and therefore they are equal.
\end{proof}

\begin{lemma}
  \begin{align}
    V_i = \mathrm{U}_{D_i} \otimes \mathrm{U}_{E_i}
  \end{align}
\end{lemma}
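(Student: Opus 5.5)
We need to show that each block $v_i$ of a representative $v\in V$ lies in $\mathrm{U}_{D_i}\otimes\mathrm{U}_{E_i}$, and conversely that every such direct sum is a representative. The plan is to work one block at a time, since \Cref{lemma:V_direct_sum} already gives $v=\bigoplus_i v_i$. Because $v$ fixes the minimal central projections, $\Ad_v(\Pi_i)=\Pi_i$. Hence $\Ad_{v_i}$ maps the compressed algebra $\Pi_i\mathcal{A}\Pi_i\cong\mathcal{M}_{\mathcal{D}_i}\otimes\Id_{\mathcal{E}_i}$ into itself. By the normaliser-of-commutant lemma, it also maps $\Pi_i\comm(\mathcal{A})\Pi_i\cong\Id_{\mathcal{D}_i}\otimes\mathcal{M}_{\mathcal{E}_i}$ into itself.

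First, restrict to the block $\mathcal{H}_{\mathcal{D}_i}\otimes\mathcal{H}_{\mathcal{E}_i}$. The map $x\mapsto v_i(x\otimes\Id)v_i^\dagger$ is a unital $*$-automorphism of $\mathcal{M}_{\mathcal{D}_i}$, viewed as $\mathcal{M}_{\mathcal{D}_i}\otimes\Id$. By Skolem–Noether, every automorphism of a full matrix algebra is inner. So there is a unitary $t_i\in\mathrm{U}_{D_i}$ with $v_i(x\otimes\Id)v_i^\dagger=(t_ixt_i^\dagger)\otimes\Id$ for all $x$.

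Second, set $w_i=(t_i^\dagger\otimes\Id)v_i$. This unitary commutes with every $x\otimes\Id$, so it lies in $\comm(\mathcal{M}_{\mathcal{D}_i}\otimes\Id)$. By the commutant-of-tensor-product theorem together with $\comm(\mathcal{M})=\langle\Id\rangle$, this commutant equals $\Id_{\mathcal{D}_i}\otimes\mathcal{M}_{\mathcal{E}_i}$. Hence $w_i=\Id\otimes r_i$. Unitarity of $w_i$ forces $r_i\in\mathrm{U}_{E_i}$, so $v_i=t_i\otimes r_i$.

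For the converse, any $\bigoplus_i t_i\otimes r_i$ fixes each $\Pi_i$. It maps each block $\mathcal{M}_{\mathcal{D}_i}\otimes\Id$ to itself, so it lies in $\mathrm{U}(\mathcal{A})$ and leaves $Z(\mathcal{A})$ invariant. By uniqueness in \Cref{lemma:U_unique_decomposition}, it is the representative of its coset, giving equality.

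The main obstacle is the first step. One must check carefully that $\Ad_{v_i}$ really restricts to an automorphism of the compressed block algebra, rather than merely mapping $\mathcal{A}$ into $\mathcal{A}$. This uses the fact that $v$ fixes every $\Pi_i$, which is the defining property of the representatives $V$, together with the compression identity $\Pi_i\Ad_v(a)\Pi_i=\Ad_v(\Pi_ia\Pi_i)$. Once that is in place, Skolem–Noether does the rest. The factorisation is not unique: $t_i$ and $r_i$ are determined only up to the phases $(e^{i\phi}t_i)\otimes(e^{-i\phi}r_i)$. This is why the statement should be read as the group generated by these products, which coincides with their set.
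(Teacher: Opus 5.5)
Your proof is correct. It reaches the second tensor factor by a different route from the paper's.

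The paper applies Skolem--Noether twice. Once it is applied to the factor $\mathcal{A}_i\cong\mathcal{M}_{\mathcal{D}_i}\otimes\Id_{\mathcal{E}_i}$. Once it is applied to the commutant factor $\mathcal{A}'_i\cong\Id_{\mathcal{D}_i}\otimes\mathcal{M}_{\mathcal{E}_i}$, which is where it needs the normaliser-of-commutant lemma. This yields $u_{D_i}$ and $u_{E_i}$. The paper then observes that $\Ad_{v_i}$ and $\Ad_{u_{D_i}\otimes u_{E_i}}$ agree on the product algebra $\mathcal{A}_i\mathcal{A}'_i$, which is the full matrix algebra of the block. From this it concludes that $v_i$ is determined.

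You apply Skolem--Noether only once and strip off $t_i\otimes\Id$. You then identify the remainder $w_i$ directly as an element of $\comm(\mathcal{M}_{\mathcal{D}_i}\otimes\Id_{\mathcal{E}_i})=\Id_{\mathcal{D}_i}\otimes\mathcal{M}_{\mathcal{E}_i}$.

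Your route has three advantages:
\begin{itemize}
\item Invariance of $\mathcal{A}'$ is never used, so your sentence invoking the normaliser-of-commutant lemma can simply be dropped.
\item You obtain the operator identity $v_i=t_i\otimes r_i$ exactly. Agreement of adjoint actions on a full matrix algebra only fixes $v_i$ up to a global phase. The paper leaves this step implicit; it is harmless, since the phase can be absorbed into $u_{D_i}$.
\item You also prove the reverse inclusion, that every $\bigoplus_i t_i\otimes r_i$ lies in $V$, using the uniqueness part of \Cref{lemma:U_unique_decomposition}. The paper's argument only establishes $V_i\subseteq\mathrm{U}_{D_i}\otimes\mathrm{U}_{E_i}$.
\end{itemize}

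The paper's version, for its part, treats $\mathcal{A}$ and $\mathcal{A}'$ symmetrically and matches its joint-representation viewpoint.
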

\begin{proof}
    Let $\mathcal{A}_i$ and $\mathcal{A}'_i$ be the associated factors from $\mathcal{A}$ and its commutant respectively.
    As finite-dimensional factors, these are central simple algebras.
    Consequently, by the Skolem-Noether theorem their automorphisms are inner~\cite{FarbDennis2016NoncommutativeAlgebra}.
    Let $a \in \mathcal{A}$, $b \in \mathcal{A}'$ and $v_i \in V_i$.
    This means that there exist unique unitaries $v_{D_i}$ and $v_{E_i}$ such that,
    \begin{align}
        \Ad_{v_i} (a) &= \Ad_{u_{D_i} \otimes \mathds{1}_{E_i}} (a) \\
        \Ad_{v_i} (b) &= \Ad_{\mathds{1}_{D_i} \otimes u_{E_i}} (b)\text{.}
    \end{align}
    This may be extended to all products,
    \begin{align}
        \Ad_{v_i} (ab)
        &= \Ad_{v_i} (a) \Ad_{v_i} (b) \\
        &= (u_{D_i} \otimes u_{E_i}) a b (u_{D_i} \otimes u_{E_i})^\dagger
    \end{align}
    and to all linear combinations.
    Hence $V_i$ coincides with $U_{D_i} \otimes U_{E_i}$ on the product algebra $\mathcal{A}_i \mathcal{A}'_i$.
    However, since this is the full matrix algebra for the space in which it acts, $v_i$ has been fully determined.
\end{proof}

\begin{theorem}[Structure theorem for the unitary automorphisms]
  \begin{align}
    \mathrm{U}(\mathcal{A}) \cong \left( \bigoplus_{i \in I} \left(\mathrm{U}_{D_i} \otimes \mathrm{U}_{E_i} \right)\right) \rtimes \Sigma
  \end{align}
  where $\Sigma$ is the automorphism group for the factors of $\mathcal{A}$.
  and the semidirect product is defined by the right-action,
  \begin{align}
    \left(\bigoplus_{i \in I} v_i \otimes v_i\right) \triangleleft \sigma = \sum_{i \in I} v_{\sigma^{-1}(i)} \times v_{\sigma^{-1}(i)}
  \end{align}
  which simply rearranges equivalent representations of $\mathcal{A}$.
\end{theorem}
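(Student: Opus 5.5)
The plan is to assemble the three preceding lemmas into a group-theoretic statement. First, I would define a map
\[
\Phi:\Bigl(\bigoplus_{i\in I}\mathrm{U}_{D_i}\otimes\mathrm{U}_{E_i}\Bigr)\times\Sigma\to\mathrm{U}(\mathcal{A}),\qquad (v,\sigma)\mapsto v\sigma .
\]
This requires checking that both factors actually lie in $\mathrm{U}(\mathcal{A})$.
\begin{itemize}
\item A block-diagonal $v=\bigoplus_i u_{D_i}\otimes u_{E_i}$ conjugates each summand $\mathcal{M}_{\mathcal{D}_i}\otimes\Id_{\mathcal{E}_i}$ into itself. Hence it preserves $\mathcal{A}$ in the fixed spatial decomposition \eqref{eq:decomp_min}.
\item Each $\sigma\in\Sigma$, realised as the operator rearranging summands with equal $(D_i,E_i)$, maps $\mathcal{M}_{\mathcal{D}_i}\otimes\Id_{\mathcal{E}_i}$ onto $\mathcal{M}_{\mathcal{D}_{\sigma(i)}}\otimes\Id_{\mathcal{E}_{\sigma(i)}}$. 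This uses the canonical identification of equivalent joint representations.
\end{itemize}
By \Cref{lemma:U_unique_decomposition}, every $U\in\mathrm{U}(\mathcal{A})$ factors uniquely as $U=v\sigma$ with $v\in V$. By \Cref{lemma:V_direct_sum} and the subsequent lemma, $V=\bigoplus_i \mathrm{U}_{D_i}\otimes\mathrm{U}_{E_i}$. So $\Phi$ is a bijection of sets, and it remains to identify the group law.

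Second, I would show that $N:=\bigoplus_i \mathrm{U}_{D_i}\otimes\mathrm{U}_{E_i}$ is a normal subgroup of $\mathrm{U}(\mathcal{A})$. The cleanest characterisation is that $N$ is exactly the set of elements of $\mathrm{U}(\mathcal{A})$ that fix every minimal central projection $\Pi_i$. This set is the kernel of the homomorphism $\mathrm{U}(\mathcal{A})\to\mathrm{Perm}(\min\mathcal{A})$, $U\mapsto(\Ad_U|_{Z(\mathcal{A})})$. That this is a well-defined homomorphism into permutations follows from \Cref{lemma:comm_normaliser} together with the permutation construction in the proof of \Cref{lemma:U_unique_decomposition}. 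Being a kernel, $N$ is normal. Moreover $\Sigma\cap N=\{\Id\}$, since the only permutation fixing all $\Pi_i$ is trivial, and $N\Sigma=\mathrm{U}(\mathcal{A})$. These are the standard internal semidirect product conditions, giving $\mathrm{U}(\mathcal{A})\cong N\rtimes\Sigma$.

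Third, I would compute the action explicitly. For $\sigma\in\Sigma$ and $v=\bigoplus_i v_i$, one has
\[
\sigma^{-1} v\,\sigma=\bigoplus_i v_{\sigma(i)} \quad\text{or}\quad \sigma\, v\,\sigma^{-1}=\bigoplus_i v_{\sigma^{-1}(i)},
\]
depending on the convention, because $\sigma$ carries the $i$-th summand onto the $\sigma(i)$-th by the canonical isomorphism. Here $v_{\sigma^{-1}(i)}$ is read as an operator on $\mathcal{D}_i\otimes\mathcal{E}_i$, which makes sense since the dimensions agree. Multiplying $(v\sigma)(w\tau)=v(\sigma w\sigma^{-1})\sigma\tau$ then yields the stated rearrangement rule $\triangleleft$, with the tensor factors $u_{D}\otimes u_{E}$ permuted together.

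The main obstacle I expect is making the realisation of $\Sigma$ inside $\mathrm{U}(\mathcal{A})$ precise. A naive permutation matrix depends on the chosen bases of each $\mathcal{D}_i,\mathcal{E}_i$, and $\Sigma$ is a subgroup only if these choices are compatible. I would handle this by fixing, once and for all, a single reference pair of spaces $\mathcal{D}^{(k)},\mathcal{E}^{(k)}$ for each equivalence class $k$ of equal dimensions $(D,E)$. Each summand in that class is then written as $\mathcal{D}^{(k)}\otimes\mathcal{E}^{(k)}$, so $\sigma$ acts as a literal permutation of identical copies. This makes $\sigma\mapsto$ operator a group homomorphism and lets the conjugation formula hold exactly, not merely up to inner corrections.
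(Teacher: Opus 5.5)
Your proposal is correct and follows the paper's route: the unique factorisation $U=v\sigma$ from \Cref{lemma:U_unique_decomposition}, together with normality of $V$, gives the internal semidirect product, and the action is read off from conjugation by $\sigma$. The one variation is how you get normality. You exhibit $V$ as the kernel of $U\mapsto \Ad_U|_{Z(\mathcal{A})}$, whereas the paper computes $(a\sigma)v(a\sigma)^{-1}=a(\sigma v\sigma^{-1})a^{-1}$ directly; your version also shows that $V$ is a subgroup, and your fixed reference spaces make precise the identification of $\Sigma$ with a subgroup of $\mathrm{U}(\mathcal{A})$, which the paper leaves implicit.
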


% CJT: This proof is more elementary
% \begin{proof}
% The correspondence as sets should be clear from the preceding lemmas; what remains is to derive the group product on the decomposition.

% For any $U_a, U_b \in \mathrm{U}(\mathcal{A})$ there exist decompositions $U_a = a \sigma_a$ and $U_b = b \sigma_b$ with some $a,b \in V$ and $\sigma_a,\sigma_b \in \Sigma$.
% Now we examine the group product,
% \begin{align}
% U_a U_b
% &= (a \sigma_a) (b \sigma_b) \\
% &= (a \sigma_a b \sigma_a^{-1}) (\sigma_a \sigma_b)\text{.}
% \end{align}

% The first factor $a \sigma_a b \sigma_a^{-1}$ is in $V$ because if we start with a basis element $\Pi_i$ of $Z(\mathcal{A})$ this is first mapped to $\Pi_{\sigma_a^{-1}(i)}$ by $\sigma_a^{-1}$ then left fixed by $b$, only to be mapped back to $\Pi_i$ by $\sigma_a$ and finally left fixed again by the action of $a$.

% Clearly $\sigma_a \sigma_b$ is in $\Sigma$.
% Therefore, we have defined a group structure on the cartesian product $V \times \Sigma$ which takes the form of a semi-direct product. In particular, it is the one in the theorem statement.
% \end{proof}

% This proof assumes a little more algebra but is more illuminating.
\begin{proof}
Let $v \in V$ and $U \in U(\mathcal{A})$.
By \Cref{lemma:U_unique_decomposition}, there exists unique $a \in V$ and $\sigma \in \Sigma$ such that $U = a \sigma$.
We simply verify normality,
\begin{align}
  (a\sigma) v (a\sigma)^{-1} &= a (\sigma v \sigma^{1}) a^{-1} \\
  &= a v' a^{-1} \in V\text{,}
\end{align}
where $v' \in V$ after the second equality by considering the action of $\sigma$ on the structure of $v$ from \Cref{lemma:V_direct_sum}.
Hence, $V$ is normal in $U(\mathcal{A})$.
In combination with the unique decomposition, this establishes that $U(\mathcal{A}) \cong V \rtimes \Sigma$.

Combining this result with the preceding lemmas will derive the claimed group structure; and examination of the conjugation action shows that it is the one stated in the theorem.
\end{proof}

\section{Conditional unitaries \label{app:conditionals}}

\begin{definition}[Conditional unitaries]
    Let $V_{LC}$ be a bi-partite unitary map on $\mathcal{H}_{LC}$ and $Q_{L} \otimes \Id_C$ a local Hermitian operator with non-degenerate spectrum. $V$ is a $Q$-conditional unitary if $[V,Q] = 0$.
\end{definition}
\begin{lemma}
     $V_{LC}$ is a $Q_L$-conditional unitary if and only if it can be written as:
    \begin{equation}
        V_{LC} = \sum_{i} \Pi^{(i)}_L \otimes \xi^{(i)}_C,
    \end{equation}
    \noindent where $\Pi^{(i)}_L$ denotes the spectral projectors of $Q$, and $\xi^{(i)}$ are arbitrary local unitaries and $i$ indexes the unique eigenvalues of $Q$.
\end{lemma}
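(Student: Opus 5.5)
We need to prove: $V_{LC}$ commutes with $Q_L\otimes\Id_C$ (non-degenerate spectrum) iff $V=\sum_i \Pi^{(i)}_L\otimes \xi^{(i)}_C$ with $\xi^{(i)}$ unitary. The plan is to argue directly with the spectral decomposition of $Q$ and reduce both directions to elementary block computations.

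For the easy direction, assume $V=\sum_i \Pi^{(i)}_L\otimes\xi^{(i)}_C$. Write $Q=\sum_j q_j\Pi^{(j)}_L$. Since the spectral projectors satisfy $\Pi^{(i)}\Pi^{(j)}=\delta_{ij}\Pi^{(j)}$, a one-line computation gives
\begin{equation*}
VQ=\sum_i q_i\,\Pi^{(i)}_L\otimes\xi^{(i)}_C=QV,
\end{equation*}
so $V$ is $Q$-conditional. Unitarity of such a $V$ also follows from $\sum_i\Pi^{(i)}=\Id_L$ together with the unitarity of each $\xi^{(i)}$.

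For the converse, assume $[V,Q\otimes\Id_C]=0$. The spectral projectors $\Pi^{(i)}_L\otimes\Id_C$ are polynomials in $Q\otimes\Id_C$ (by Lagrange interpolation on the distinct eigenvalues), so $V$ commutes with each of them. Hence $V=\sum_{i}(\Pi^{(i)}\otimes\Id)V(\Pi^{(i)}\otimes\Id)$, i.e.\ $V$ is block diagonal with respect to the decomposition $\mathcal H_L\otimes\mathcal H_C=\bigoplus_i \mathrm{ran}(\Pi^{(i)})\otimes\mathcal H_C$. Non-degeneracy means each $\Pi^{(i)}=\ket{i}\bra{i}$ is rank one, so the $i$-th block acts on $\ket{i}\otimes\mathcal H_C\cong\mathcal H_C$. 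That block can therefore be written as $\ket{i}\bra{i}\otimes\xi^{(i)}_C$ with $\xi^{(i)}=(\bra{i}\otimes\Id)V(\ket{i}\otimes\Id)$. Finally, each $\xi^{(i)}$ is unitary: compressing $V^\dagger V=\Id$ and $VV^\dagger=\Id$ to the $i$-th block, and using the block-diagonal form to drop cross terms, gives $\xi^{(i)\dagger}\xi^{(i)}=\xi^{(i)}\xi^{(i)\dagger}=\Id_C$.

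The only delicate point is the rank-one step, which is exactly where non-degeneracy of $Q$ is needed. With a degenerate spectrum, the blocks would act on $\mathrm{ran}(\Pi^{(i)})\otimes\mathcal H_C$ and need not factorise, so I would state explicitly that this hypothesis is what forces each block into the product form $\Pi^{(i)}\otimes\xi^{(i)}$.
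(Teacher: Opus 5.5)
Your proof is correct and takes essentially the same route as the paper. Both arguments expand $V$ in the eigenbasis of $Q$, show that the off-diagonal blocks $(\bra{a}\otimes\Id)V(\ket{b}\otimes\Id)$ vanish for $a\neq b$, and then use unitarity of $V$ to make the diagonal blocks $\xi^{(a)}$ unitary. The differences are cosmetic: you obtain block-diagonality by writing the spectral projectors as Lagrange polynomials in $Q$, whereas the paper reads it off from $[V,Q]=\sum_{a,b}(q_b-q_a)\ket{a}\bra{b}\otimes\chi^{(a,b)}$; and your explicit unitarity checks in both directions fill in details the paper leaves implicit.
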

\begin{proof}
For the forward implication, substitute the form of operators into the commutator equation (with dropped subpscripts for convenience):
\begin{align}
        [V, Q] &= \left [  \sum_i \Pi^{(i)} \otimes \xi^{(i)}, \sum_j q_j\Pi^{(j) }  \otimes \Id\right] \\
            &= \sum_{i,j} q_j [\Pi^{(i)}, \Pi^{(j)}] \otimes \xi^{(i)} \\
            &= 0,
\end{align}
\noindent where we utilised the orthogonality relation $\Pi^{(i)}\Pi^{(j)} = \delta_{ij}\Pi^{(j)}$.

For the reverse implication, write the commutator $VQ - QV$ in the local $Q$-eigenbasis: 
\begin{align}
    [V,Q]  &=\sum_{a,b} (\ket{a}\bra{b}Q - Q\ket{a} \bra{b}) \otimes \chi^{(a,b)}, \\ 
    &= \sum_{a,b} (q_b \ket{a}\bra{b}-  q_{a}\ket{a} \bra{b}) \otimes \chi^{(a,b)}, \\
    &= \sum_{a,b} (q_b-q_a)\ket{a}\bra{b}\otimes \chi^{(a,b)}.
\end{align}
\noindent By assumption, each eigenvalue of $Q$ is unique hence each term in the sum is linearly independent from the others. Then $[V,Q]= 0$ demands $\chi^{(a,b)} = \delta_{ab} \xi^{(a,a)}$ and unitarity fixes each $\xi^{(a,a)}$ to be a local unitary. This gives the required form of $V$.
\end{proof}
Conditional unitaries are ones which have a local conserved observable $Q$. They act on the target space $C$ conditional on classical outcome of measuring in the eigenbasis of $Q$ on the control space $L$. 
\begin{corollary}
    Let $V_{LC}$ be a $Q_L$-conditional unitary. Then, for any local operator $\Id_L \otimes a_C$ and $t \geq 0$, we have $[\Ad_V^t(\Id \otimes a_C), Q_L] = 0$ for all $t \geq 0$.
\end{corollary}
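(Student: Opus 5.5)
The plan is to use the block form of conditional unitaries from the preceding lemma together with an induction on $t$. Write $Q_L = \sum_i q_i \Pi^{(i)}_L$ with distinct eigenvalues $q_i$, and write
\begin{equation*}
V_{LC} = \sum_i \Pi^{(i)}_L \otimes \xi^{(i)}_C .
\end{equation*}
The key observation is that the commutant of $Q_L\otimes\Id_C$ in $\mathcal{M}_{LC}$ is the algebra $\mathcal{K} = \bigoplus_i \Pi^{(i)}_L\mathcal{M}_L\Pi^{(i)}_L \otimes \mathcal{M}_C$ of operators that are block diagonal with respect to the spectral projectors of $Q_L$. This holds because $Q_L$ is non-degenerate, so $x$ commutes with $Q_L$ if and only if $\Pi^{(i)}x\Pi^{(j)}=0$ for $q_i\neq q_j$, by the same computation as in the reverse implication of the lemma. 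It therefore suffices to show that $\mathcal{K}$ is invariant under $\Ad_V$ and contains $\Id_L\otimes a_C$.

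The steps are as follows. First, $t=0$ is trivial, since $\Id_L\otimes a_C$ commutes with $Q_L\otimes \Id_C$ because they act on different tensor factors. Second, the hypothesis $[V,Q]=0$ gives $V\in\mathcal{K}$, and likewise $V^\dagger\in\mathcal{K}$, since the adjoint of a block-diagonal operator is block diagonal. Third, $\mathcal{K}$ is an algebra, being a commutant (as the appendix shows), so it is closed under products. Hence $x\in\mathcal{K}$ implies $\Ad_V(x)=VxV^\dagger\in\mathcal{K}$. Equivalently, and more directly, the Jacobi-type identity gives $[VxV^\dagger,Q] = V[x,V^\dagger QV]V^\dagger = V[x,Q]V^\dagger = 0$, using $V^\dagger Q V = Q$. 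Induction on $t$ then yields $[\Ad_V^t(\Id\otimes a_C),Q_L]=0$ for all $t\ge0$.

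Explicitly, the evolved operator is $\sum_i \Pi^{(i)}_L\otimes (\xi^{(i)})^t a_C (\xi^{(i)})^{-t}$, since the cross terms vanish by orthogonality $\Pi^{(i)}\Pi^{(j)}=\delta_{ij}\Pi^{(j)}$. This form exhibits the commutation with $Q_L$ directly, and I would state it as a remark.

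There is no real obstacle here. The only point needing care is using conjugation invariance, $V^\dagger Q V = Q$, which follows from unitarity and $[V,Q]=0$, rather than merely the commutator condition. Non-degeneracy of $Q$ is not even needed for this corollary; it is used only for the block form.
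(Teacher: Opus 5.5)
Your argument is correct, but your main route differs from the paper's.

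\textbf{The paper's route.} The paper inserts the block form from the preceding lemma, $V^t=\sum_i\Pi^{(i)}\otimes(\xi^{(i)})^t$. This gives $\Ad_V^t(\Id\otimes a)=\sum_i\Pi^{(i)}\otimes\Ad^t_{\xi^{(i)}}(a)$, which is diagonal in the eigenbasis of $Q$. This is exactly the formula you relegate to a remark.

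\textbf{Your route.} Your primary argument instead shows that the commutant of $Q_L\otimes\Id_C$ is invariant under $\Ad_V$. You use $[VxV^\dagger,Q]=V[x,V^\dagger QV]V^\dagger$ together with $V^\dagger QV=Q$, and then induct on $t$. This never uses the lemma, so it proves more. The conclusion holds for any unitary $V$ and any operator $Q$ commuting with it, degenerate or not. It also holds for any initial operator commuting with $Q$, not only $\Id_L\otimes a_C$.

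Relatedly, your block-diagonal description of the commutant does not need non-degeneracy either, provided $i$ indexes distinct eigenvalues. Non-degeneracy only makes the diagonal blocks $\Pi^{(i)}\mathcal{M}_L\Pi^{(i)}$ one-dimensional.

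\textbf{What each buys.} The paper's explicit computation gives the concrete shape of the evolved operator. Its control-side support collapses onto $\langle\Pi^{(i)}\rangle_i$, and the target evolves blockwise by $\xi^{(i)}$. The remark after the corollary relies on exactly this. Your identification $\mathcal{K}=\bigoplus_i\Pi^{(i)}\otimes\mathcal{M}_C$ in the non-degenerate case recovers the same structural information.
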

\begin{proof}
    From the previous lemma, we have:
    \begin{equation}
        V^t = \sum_i \Pi^{(i)} \otimes (\xi^{(i)})^t,
    \end{equation}
    \noindent from the idempotence of spectral projectors. Then for the time evolution of $a$, one obtains: 
    \begin{equation}
        \Ad_V^t(\Id \otimes a) = \sum_i \Pi^{(i)} \otimes \Ad^t_{\xi^{(i)}}(a),
    \end{equation}
    \noindent which clearly commutes with $Q$ since it is diagonal in the same local basis.
\end{proof}
The above corollary highlights the importance of conditional unitaries for causal decoupling, as any local operator from the target support develops support on the control space to commute with the conserved charge of the unitary. This reduces the full variety of operators to the Abelian algebra spanned $ \langle \Pi^{(i)}\rangle_i $ on the control space. This property makes this class of unitaries a convenient workhorse to construct simple examples of wall unitaries in \Cref{sec:rep_theory}.

\whencolumns{}{\newpage}

\bibliography{export}

\end{document}